\documentclass[a4paper,UKenglish,cleveref]{lipics-v2021}

\hideLIPIcs  
\nolinenumbers

\usepackage{xcolor}
\usepackage{tikz}
\usetikzlibrary{shapes.geometric}
\usetikzlibrary{decorations}
\usepackage{mathtools}   
\usepackage{bm}          
\usepackage{todonotes}

\newtheorem{problem}{Problem}
\crefname{problem}{Problem}{Problems}
\Crefname{problem}{Problem}{Problems}

\title{Rich Sequences and Decidability of Arithmetic Theories}

\author{Toghrul Karimov}{Max Planck Institute for Software Systems, Germany}{toghs@mpi-sws.org}{https://orcid.org/0000-0002-9405-2332}{}

\author{Joris Nieuwveld}{Oxford University, UK}{joris.nieuwveld@cs.ox.ac.uk}{https://orcid.org/0009-0002-0339-1230}{}

\author{Jo\"el Ouaknine}{Max Planck Institute for Software Systems, Germany}{joel@mpi-sws.org}{https://orcid.org/0000-0003-0031-9356}{}

\authorrunning{T. Karimov, J. Nieuwveld, and J. Ouaknine}

\Copyright{Toghrul Karimov, Joris Nieuwveld, and Jo\"el Ouaknine} 

\ccsdesc[500]{Theory of computation~Logic and verification}
\ccsdesc[300]{Theory of computation~Automated reasoning}

\keywords{Linear recurrence sequences, decidability, first-order theories, Presburger arithmetic, Diophantine approximation, totient function, Ramanujan tau function}

\category{} 

\relatedversion{} 

\acknowledgements{Toghrul Karimov and Jo{\"e}l Ouaknine were supported by the DFG grant 389792660 as part of TRR 248 (see \url{perspicuous-computing.science}). 
Jo{\"e}l Ouaknine is also affiliated with Keble College, Oxford as \url{emmy.network} Fellow. 
Joris Nieuwveld was supported by the Glasstone Benefaction, University of Oxford [Violette and Samuel Glasstone Research Fellowships in Science 2025]}

\EventEditors{John Q. Open and Joan R. Access}
\EventNoEds{2}
\EventLongTitle{42nd Conference on Very Important Topics (CVIT 2016)}
\EventShortTitle{CVIT 2016}
\EventAcronym{CVIT}
\EventYear{2016}
\EventDate{December 24--27, 2016}
\EventLocation{Little Whinging, United Kingdom}
\EventLogo{}
\SeriesVolume{42}
\ArticleNo{23}

\newcommand{\frm}[1]{\mathsf{#1}}

\newcommand{\Log}{\operatorname{Log}}

\newcommand{\torus}{\mathbb{T}}
\newcommand{\nat}{\mathbb{N}}
\newcommand{\intg}{\mathbb{Z}}
\newcommand{\rel}{\mathbb{R}}
\newcommand{\rat}{\mathbb{Q}}
\newcommand{\com}{\mathbb{C}}
\newcommand{\alg}{\overline{\rat}}
\newcommand{\ralg}{\rel \cap \alg}

\newcommand{\Dcal}{\mathcal{D}}

\newcommand{\Hcal}{\mathcal{H}}
\newcommand{\Ical}{\mathcal{I}}
\newcommand{\Jcal}{\mathcal{J}}

\newcommand{\Mcal}{\mathcal{M}}

\newcommand{\Tcal}{\mathcal{T}}

\newcommand{\Mb}{\mathbb{M}}

\newcommand{\Rea}{\operatorname{Re}}
\newcommand{\Ima}{\operatorname{Im}}
\newcommand{\seq}[1]{(#1)_{n \in \mathbb{N}}}

\newcommand{\im}{\bm{i}}

\DeclareMathOperator{\lpfsymb}{lpf}
\newcommand{\lpf}[1]{\lpfsymb(#1)}

\newcommand{\seqlinefig}[2][]{%
  \begin{tikzpicture}[x=1.1cm, y=1.1cm, line width=0.7pt,
      dsh/.style={dash pattern=on 2pt off 1.2pt, dash expand off},
      every node/.style={inner sep=1.5pt}]
    \def\lab##1{#2}%
    \draw (-0.5,0) -- (10.5,0);
    \foreach \k in {0,...,10} \draw (\k,-0.27) -- (\k,0.27);
    \draw[dsh,blue] (0,0)    -- (-0.2,-0.6) node[below] {$\lab{0}$};
    \draw[dsh]      (2.5,0)  -- (2.5,-0.6)  node[below] {$\lab{1}$};
    \draw[dsh]      (4.5,0)  -- (4.5,-0.6)  node[below] {$\lab{2}$};
    \draw[dsh]      (6.5,0)  -- (6.5,-0.6)  node[below] {$\lab{3}$};
    \draw[dsh,blue] (8.5,0)  -- (8.5,-0.6)  node[below] {$\lab{8}$};
    \draw[dsh]      (10,0)   -- (10.4,-0.6) node[below] {$\zeta\lab{0}$};
    \draw[dsh,red] (1.45,0) -- (1.45,0.4) node[above] {$\lab{6}$};
    \draw[dsh,red] (3.3,0)  -- (3.2,0.4)  node[anchor=south east, xshift=3pt] {$\lab{4}$};
    \draw[dsh,red] (3.65,0) -- (3.75,0.4) node[anchor=south west, xshift=-3pt] {$\lab{7}$};
    \draw[dsh,red] (7.5,0)  -- (7.5,0.4)  node[above] {$\lab{5}$};
    #1%
  \end{tikzpicture}}

\begin{document}

\maketitle

\begin{abstract}
We develop a new framework for proving the undecidability of first-order theories of structures of the form $\langle \nat; +, P \rangle$, $\langle \nat; <, f \rangle$, and $\langle \nat; +, f\rangle$, where $P \subseteq \nat$ and $f \colon \nat \to \nat$.
It is based on the recent proof of Hieronymi and Schulz that the first-order theory of $\langle \nat; +, \{2^n \colon n \in \nat\}, \{3^n \colon n \in \nat\}\rangle$ is undecidable, and capable of transforming various randomness results about integer sequences into undecidability proofs.
We apply our method to a large class of integer linear recurrence sequences, as well as various special functions, in particular showing that the first-order theories of $\langle\nat; +, \{u_n \colon n \in \nat\} \cap \nat\rangle$, $\langle\nat; <, n \mapsto \max\{0,u_n\}\rangle$, and $\langle \nat; <, \phi\rangle$ are  undecidable, where $\seq{u_n}$ is any integer LRS with exactly two non-repeated dominant roots satisfying a non-degeneracy assumption, and $\phi$ is Euler's totient function.
\end{abstract}

\section*{Preface}

This paper is dedicated to the memory of Florian Luca, who recently passed way at the age of 57.
We learned a lot from him, and will remember him dearly.
In the context of this work, Florian was the one who pointed out that our framework likely applies to the Ramanujan tau function, which inspired us to study decidability of first-order theories of various special functions and predicates.

\section{Introduction}

Decidability of various logical theories connected to arithmetic has been a central topic in mathematics and computer science since the formulation of \emph{Hilbert's program}~\cite{zach2007hilbert} in the 1920s, arguably leading to the birth of modern computer science through the works of Turing in the 1930s~\cite{turing1936computable} in the first place.
Hilbert believed that every true mathematical statement must be provable in some formal system using only ``finitary methods''~\cite{hilbert1922logischen}.
His program, however, was proven unattainable by G\"odel's proofs of the \emph{incompleteness theorems}~\cite{godel1931formal}, which established the following: there does not exist an algorithm (in particular, an algorithm that operates on an effectively enumerable set of axioms using finitely many deduction rules) that takes a first-order statement (in a suitable language) and decides whether it is true in the structure $\Mb \coloneqq \langle \nat; 0, 1, <, +, \cdot\rangle$.
Around forty years after G\"odel, an even stronger result was shown by Matiyasevich, Robinson, Davis, and Putnam~\cite{matiyasevich1993}: it is not possible to algorithmically determine whether a given multivariate polynomial $p \in \intg[x_1,\dots,x_d]$ has a zero in $\intg^d$, famously resolving \emph{Hilbert's tenth problem} in the negative.
In particular, the \emph{existential fragment} of the first-order theory of $\Mb$ is undecidable.
These results shaped a fundamental question that has been actively studied through decades into our time: which fragments of the first-order theory of $\Mb$ are decidable?

Presburger~\cite{presburger1929uber} showed already in 1929 that the first-order theory of the structure $\Mb \coloneqq \langle \nat; 0, 1, <, + \rangle$, now called \emph{Presburger arithmetic}, is decidable.\footnote{The inclusion of $<$ and $0,1$ is purely cosmetic: they can be respectively defined by the formulas $\exists z \colon z \ne 0 \land x + z = y$, $\forall y \colon x + y = y$, and $x\ne 0 \land \forall y \colon y < x \Rightarrow y = 0$. 
Similarly, in this work it does not matter much whether we choose $\nat$ or $\intg$ as our domain, as $<$ is always explicitly present.}
Since then, algorithms for deciding various theories connected to Presburger arithmetic have remained an active area of research in theoretical computer science, with deep connections to automata theory~\cite{bruyere1994logic,frougny2010number,charlier2018first,shallit2022logical}, integer linear programming~\cite{chistikov2024integer,hitarth2026optimization,defossez2024integer}, number theory~\cite{xiao2024hilbert, karimov2025decidability, bacik2026variable,bateman1993decidability}, and model theory~\cite{hieronymi2026axiomatizations,conant2025enriching,tong2025distal}.\footnote{The first-order theory of $\langle \nat; \cdot \rangle$, known as \emph{Skolem arithmetic}, is also decidable. However, decidability of first-order theories of various expansions of $\langle \nat; \cdot \rangle$ has not received much attention. We mention that $<$ is definable in $\langle \nat; + \rangle$ but not in $\langle \nat; \cdot \rangle$: in fact, $\langle \nat; <, \cdot\rangle$ defines addition and hence has an undecidable first-order theory~\cite{bes1998undecidable}.}
Our starting point in this work is: which expansions of $\Mb$ retain decidability of the first-order theory?
We will be especially interested in the following question, which is arguably the most fundamental one in the area.
\begin{problem}
	\label{problem-1}
	For which unary predicates $P \subseteq \nat$ and functions $f \colon \nat \to \nat$ are the first-order theories of $\langle \nat; 0,1,<,+,P \rangle$, $\langle \nat; 0,1,<,f\rangle$, and $\langle\nat;0,1,<,+,f\rangle$ decidable?
\end{problem}
Note that the third structure is generally more expressive than the second one or $\langle \nat; 0,1,<,+, \{f(n) \colon n \in \nat\}\rangle$.
Let us first give a quick overview of known decidability and undecidability results relevant to our problem.

\subsection{State of the art for \Cref{problem-1}}

B\"uchi~\cite{buchi1960weak,bes2002survey} showed in 1960 that for any polynomial $p \in \intg[x]$ of degree at least two for which $p(\nat) \cap \nat$ is infinite, the structure $\langle \nat; 0,1,<,+,p(\nat)\cap\nat\rangle$ defines multiplication and hence has an undecidable first-order theory.
Very recently, Xiao~\cite{xiao2024hilbert} showed that $\langle \nat; 0,1,<,+,\{n^2 \colon n \in \nat\}\rangle$ in fact \emph{existentially} defines multiplication, and hence the existential fragment of its first-order theory is undecidable.
In other words, there is no algorithm that can solve any given system of affine inequalities with integer coefficients and unknowns that are required to be perfect squares.
The result of Xiao conjecturally generalises to all power predicates~\cite{pastenvidaux2016}.

B\"uchi~\cite{buchi1960weak} was also the first to link expansions of Presburger arithmetic to automata theory.
For positive integers $k, x$ with $k \ge 2$, denote by $V_k(x)$ the largest (non-negative integer) power of $k$ that divides $x$.\footnote{In this work, when we define a function, we allow ourselves to be vague on finitely many values: in this example, the definition of $V_k(0)$ does not matter for our purposes.}
Then a set $X \subseteq \nat^l$ is definable (by a first-order formula with $l$ free variables) in $\langle \nat; 0,1,<,+,V_k\rangle$ if and only if, when written in base $k$, it forms a regular language $L \subseteq (\{0,\ldots,k-1\}^l)^*$~\cite{buchi1960weak,bruyere1994logic}. This property of $\langle \nat; 0,1,<,+,V_k\rangle$ is now known as being \emph{automatic} with respect to the \emph{numeration system} $\seq{k^n}$.\footnote{A numeration system is just a base in which we perform greedy expansions.}
It follows that the first-order theory of $\langle \nat; 0,1,<,+,V_k\rangle$, which in particular defines the predicate $\{k^n \colon n \in \nat\}$, is decidable.
More generally, a structure $\langle \nat; 0,1,<,+,P_1,\ldots,P_l,f_1,\ldots,f_s\rangle$ is automatic and hence has a decidable first-order theory when there exists a \emph{single} base with respect to which the predicates $P_1,\ldots,P_l$ and the graphs of functions $f_1,\ldots,f_s$, when written in the said base, form a regular language.

What happens when we have predicates or functions that are naturally represented in two different, unrelated bases?
In this case, first of all, by the Cobham--Sem\"enov theorem, the structure cannot be automatic; see~\cite{bes1997undecidable} for the precise statement.
Thus automatic structures are really special.
Villemaire~\cite{villemaire92} showed in 1992 that for any \emph{multiplicatively independent} integers $a,b \ge 2$, meaning that $a^s = b^t \Rightarrow s = t = 0$ for all $s,t \in \nat$, the structure $\langle\nat;0,1,<,+,V_a,V_b\rangle$ in fact defines multiplication and hence has an undecidable first-order theory.\footnote{The existential fragment of this theory for coprime $a,b$ was very recently shown to be decidable in~\cite{nieuwveld2026existential}.}
Recently, Hieronymi and Schulz~\cite{hieronymi2022strong} used a completely novel approach (which is one of the central topics of this paper) to show that the first-order theory of $\langle \nat;0,1,<,+,P_a,P_b\rangle$, where $P_k = \{k^n \colon n \in \nat\}$ for any positive integer $k \ge 2$, is undecidable for any multiplicatively independent $a,b$ as above.\footnote{More precisely, the $\exists^* \forall^* \exists^*$ fragment of the first-order theory of $\langle\nat; 0, 1, <, +, \{2^n \colon n\in\nat\}, \{3^n \colon n\in\nat\}\rangle$ is undecidable, and the existential fragment of the same theory is decidable~\cite{karimov2025decidability}. Decidability thus remains open for the $\exists^* \forall^*$ fragment.}
This was a substantial strengthening of all known Villemaire-type results (e.g.,~\cite{bes1997undecidable}), as well as the Cobham--Sem\"enov theorem.

Automaticity is an elegant way of proving decidability, but when it comes to expansions of $\langle \nat; 0,1,<,+\rangle$ with a single predicate or a function, it is less powerful than \emph{quantifier elimination}, which is also the technique used by Presburger in his original paper. 
Here the state of the art is due to Sem\"enov~\cite{semenov1980certain,semenov1984logical}, who gave a (growth-type) condition called \emph{compatibility with addition}, which is sufficient for the first-order theories of both $\langle \nat; 0,1,<,+, P\rangle$ and $\langle \nat; 0,1,<, +, f \rangle$ to admit quantifier elimination.\footnote{All quantifier elimination results cited in this work apply to extensions of the original structure, e.g.\ by predicates $\varphi_{a,b}(x)\coloneqq x \equiv a \pmod{b}$. 
These extensions do not affect decidability results.}
His criterion yields decidability for, among others, the functions $2^n$, $n!$, and $n \mapsto u_n$ where $\seq{u_n}$ is the Fibonacci sequence.
For structures of the form $\langle \nat;0,1,<,f\rangle$ as well, Sem\"enov~\cite{semenov1980certain} has a criterion for quantifier elimination, called \emph{compatibility with order}, that applies in particular when $f(n+1)-f(n)$ diverges to $+\infty$ effectively, or when $f(n)$ forms an arithmetic sequence. 

There is a third method for proving decidability of first-order theories: finding an effectively computable set of sentences that axiomatise the theory in question.
In the context of \Cref{problem-1}, this method seems to be exactly as powerful as quantifier elimination~\cite{point2000decidable}.

We mention that a lot of recent work has been dedicated to establishing precise complexity bounds for decidable theories.
Three remarkable results of this type are NP-completeness of deciding the
existential theory of $\langle \nat; 0,1,<,+,n\mapsto 2^n\rangle$
\cite{chistikov2024integer}, a 3EXPTIME upper bound for the full first-order
theory of $\langle \nat;0,1,<,+,\{2^n \colon n \in \nat\}\rangle$
\cite{benedikt2023complexity} (in contrast, the first-order theory of
$\langle \nat;0,1,<,+,n\mapsto 2^n\rangle$ is TOWER-complete
\cite{comptonhenson1990}), and NEXPTIME-completeness for the existential
fragment of $\langle\nat;0,1,<,+,\mid\rangle$, i.e.\ Presburger arithmetic with divisibility~\cite{lechner2015complexity,barros2026nexp}.

\subsection{Our approach to \Cref{problem-1}}

We have illustrated that for \Cref{problem-1}, known decidability and undecidability results form small islands, congregated around automaticity, quantifier elimination, axiomatisability, and definability of multiplication.
If we take a ``generic'' predicate $P$ or function $f$, then for any of the three structures of \Cref{problem-1}, the answer to the decidability question will be ``we don't know''.
To address this gap, in this work we instead focus on the novel approach of Hieronymi and Schulz~\cite{hieronymi2022strong}, which, as mentioned before, has been applied only once to show undecidability of the first-order theory of $\Mb_{a,b} \coloneqq\langle \nat; 0,1,<,+,P_a,P_b\rangle$ for any multiplicatively independent $a,b \ge 2$.\footnote{Schulz~\cite{schulz2023undefinability} has shown that the structure $\Mb_{a,b}$ \emph{does not} define multiplication for any $a,b$. 
Hence the novel approach of~\cite{hieronymi2022strong} was necessary.}
We develop a full theory of this method, and show that it can be used to prove undecidability in previously unknown cases at an industrial scale.

For simplicity, let us focus on $P_2, P_3$.
For $x \ge 1$, define $\alpha(x)$ by $2^{\alpha(x)} \le x < 2^{\alpha(x)+1}$, and for $x \ge 1$, $x$ not a power of 2, let $\beta(x) = \alpha(x - 2^{\alpha(x)})$.
That is, $\beta$ calculates the (index of the) second-largest bit in the binary expansion of $x$.
The approach of~\cite{hieronymi2022strong} rests on the following number-theoretic randomness result.
\begin{theorem}[Lemma~3.4 in~\cite{hieronymi2022strong}]
	\label{thm::hs-randomness}
	For every finite sequence $(t_i)_{i=1}^N$ over $\nat$ there exist $a, b \in \nat$ such that
	\begin{equation}
    \label{eq::hs-main}
	    (\beta(3^n)-\beta(3^a))_{n=a+1}^{b-1} \cap [0, \beta(3^b)-\beta(3^a)] = (t_i)_{i=1}^N.
	\end{equation}
\end{theorem}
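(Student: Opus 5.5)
The plan is to convert the statement into a Diophantine approximation problem about the orbit $\theta_n \coloneqq \{n\log_2 3\}$ on the circle, and then construct $a$ and $b$ by a Kronecker-type density argument together with an effective irrationality bound.

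\textbf{Step 1: a formula for $\beta(3^n)$.} Since $\alpha(3^n)=\lfloor n\log_2 3\rfloor$ and $3^n=2^{\alpha(3^n)}\cdot 2^{\theta_n}$, we get $3^n-2^{\alpha(3^n)} = 2^{\alpha(3^n)}(2^{\theta_n}-1)$. Hence
\[
\beta(3^n) = \lfloor n\log_2 3\rfloor - D(\theta_n), \qquad D(\theta)\coloneqq \bigl\lceil -\log_2(2^{\theta}-1)\bigr\rceil ,
\]
up to an off-by-one convention at dyadic endpoints that I would fix once. Thus $\beta(3^n)$ grows essentially linearly, at rate $\log_2 3$. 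It drops by a "dip" of depth $D(\theta_n)\approx \log_2(1/\theta_n)$ exactly when $\theta_n$ is very close to $0$ from above. The differences $\beta(3^n)-\beta(3^a)$ appearing in \eqref{eq::hs-main} are therefore $\lfloor n\log_2 3\rfloor-\lfloor a\log_2 3\rfloor - D(\theta_n)+D(\theta_a)$. Prescribing such a difference amounts to prescribing which dyadic interval $[2^{-k-1},2^{-k})$ contains $\theta_n$, a condition on $n\log_2 3 \bmod 1$.

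\textbf{Step 2: upper bounds on dips.} Baker's theorem on linear forms in logarithms (equivalently, a finite irrationality measure for $\log_2 3$) gives $\theta_n \ge n^{-C}$ for an effective constant $C$. Hence $D(\theta_n)\le C\log_2 n + O(1)$. This tells us that "accidental" deep dips are rare and of controlled depth. I will use it to show that indices $n\in(a,b)$ other than the intended ones fall outside the window $[\beta(3^a),\beta(3^b)]$, or land exactly where the construction allows.

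\textbf{Step 3: construction of $a$, $b$ and the witnesses.} I would choose the indices in order of increasing scale. The endpoints of the window are governed by dips: the value $\beta(3^b)$ relative to $\beta(3^a)$ is tuned through $D(\theta_b)$, so $b$ is chosen with $\theta_b$ in a prescribed tiny dyadic interval. Intermediate indices realise the prescribed offsets $t_i$ by placing $\theta_{n_i}$ in prescribed dyadic intervals. Each condition "$\theta_n\in J$" with $|J|\asymp 2^{-K}$ is solvable with $n$ in any interval of length roughly $2^{K(1+\varepsilon)}$. This follows from the three-distance theorem and the continued fraction expansion of $\log_2 3$, whose partial quotients are controlled via Step 2. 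So the conditions can be imposed successively, with the scales chosen so that the linear term $\lfloor n\log_2 3\rfloor$ and the dip term combine to the exact required value.

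\textbf{Main obstacle.} The hard part is the exclusion in \eqref{eq::hs-main}: every $n\in(a,b)$ other than the chosen ones must give $\beta(3^n)-\beta(3^a)\notin[0,\beta(3^b)-\beta(3^a)]$. Equally important, the order of the surviving terms must match $(t_i)$. Because the linear term forces generic $n$ to behave like $(n-a)\log_2 3$, this requires a careful quantitative count: between consecutive chosen indices, all dips must be strictly shallower than the dips being engineered, as guaranteed by Step 2. I would first try to handle this by making the dip depths of the chosen indices grow super-linearly in the scale, so that Baker's bound dominates all other indices. I would also re-examine the exact bookkeeping of \eqref{eq::hs-main} (which values survive the intersection, and in which order) against the formula of Step 1 before committing to the scales.
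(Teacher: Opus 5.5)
Your Steps~1 and~2 are correct, but Step~3 and your handling of the ``main obstacle'' have a genuine gap. The first problem is that the conditions on the different indices are not independent. Writing $n=a+d$, you have $\theta_{a+d}\equiv\theta_a+d\log_2 3 \pmod 1$ and
\[
\beta(3^{a+d})-\beta(3^a)=\lfloor d\log_2 3+\theta_a\rfloor-D(\theta_{a+d})+D(\theta_a),
\]
so the whole configuration is determined by the single point $x=\theta_a$ and the offsets $d$. For $a+d$ to contribute the value $t$, its dip must have depth exactly $D(\theta_a)+\lfloor d\log_2 3+\theta_a\rfloor-t$. This forces $x$ into an interval $I_{d,t}$ near $-d\log_2 3 \bmod 1$ of length $\asymp 2^{t-D(\theta_a)}3^{-d}$. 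So the required depth grows linearly with the offset. By contrast, guaranteeing a hit of a target of length $2^{-K}$ needs a search window of at least about $2^{K}$ indices. (Step~2 only gives $2^{O(K)}$; your $2^{K(1+\varepsilon)}$ would need irrationality exponent $2$ for $\log_2 3$, which is not known.) For a fixed $a$, the targets for $d=1,2,\dots$ have total length bounded by a constant times $2^{t-D(\theta_a)}$, however far you search. Hence you cannot fix $a$ and then find $n_1,n_2,\dots$ one after another by a hitting-time argument. Instead, the offsets $d_1<\dots<d_N<d_{N+1}=b-a$ must be fixed first and $a$ chosen last, with $\theta_a\in\bigcap_i I_{d_i,t_i}$. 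That requires these shrinking intervals to be nested, and nothing in your plan produces this.

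The second problem is that your proposed fix for the exclusion cannot work. The depths at the chosen indices are forced by the targets $t_i$ (linear in the offset), so they cannot be made to grow super-linearly. Baker's bound $D(\theta_n)\le C\log_2 n+O(1)$ is only an upper bound, valid for chosen and non-chosen indices alike. A non-chosen offset $d<b-a$ enters the window whenever its dip lies in a band of width $\beta(3^b)-\beta(3^a)$ just below $D(\theta_a)+\lfloor d\log_2 3+\theta_a\rfloor$. That band contains depths smaller than the depth $b$ itself must attain, so Baker's bound cannot exclude any of these offsets. What is needed is that $x=\theta_a$ avoids the intervals $I_d$ of all non-chosen offsets.

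The paper does not reprove this statement; it cites~\cite{hieronymi2022strong} and names the interval stacking lemma (\Cref{thm::hs}) as the crucial ingredient. The same mechanism appears in \Cref{thm::stacking-intervals} and the proof of \Cref{thm::main-DA-function-version}, with $\xi\mu^n$ in place of $\theta_a$. It runs as follows:
\begin{enumerate}
\item The intervals $I_d$ have summable lengths and are dense, so a suitable base interval is protected.
\item \Cref{thm::hs} then gives an offset whose interval lies inside the base interval and meets no earlier $I_d$.
\item Iterating inside that interval yields nested intervals that avoid all non-chosen offsets.
\item Density of $(\theta_a)$ then gives infinitely many admissible $a$.
\end{enumerate}
The Diophantine (Baker-type) input is used only to keep each chosen interval protected. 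Intervals with offsets $d<d'$ can meet only if $d'-d$ is exponentially large in $d$ (compare \Cref{thm::two-interval-int}). Hence the later intervals meeting $I_{d_j}$ have total length far below $|I_{d_j}|$.
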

Here the $\cap$ operation takes a sequence and an interval, and returns the sequence obtained by only keeping the elements that belong to the specified interval.
\Cref{thm::hs-randomness} can be interpreted as follows: there exists a function $\Phi$ with a fixed number of inputs (two in \Cref{thm::hs-randomness}, namely $3^a, 3^b \in \nat$), that can be implemented (in a very specific sense) in $\Mb_{2,3}$, that outputs all possible finite sequences over $\nat$.
\Cref{thm::hs-randomness} is then used to argue that, given any Turing machine $\Tcal$, we can construct a first-order sentence $\varphi$ in the language of $\Mb_{2,3}$ that is true if and only if $\Tcal$ halts.
The formula $\varphi$ is interpreted as ``there exist $3^a$ and $3^b$ such that the corresponding finite sequence defined via \eqref{eq::hs-main} encodes a halting run of $\Tcal$''.

We proceed by first abstracting the method of~\cite{hieronymi2022strong} described above.
We will use counter machines instead of Turing machines,  as the former are much simpler.
A critical question is: what kind of function $\Phi$ is allowed?
In the discussion above, we were deliberately vague about what $\Phi$ being first-order implementable in the structure means: it turns out to be something specific that does not match any well-known property, most notably being definable in $\Mb_{2,3}$ in the classical sense.\footnote{The classical definability of $\Phi$ in $\Mb_{2,3}$ requires existence of a formula $\varphi$ in the language of $\Mb_{2,3}$ such that $\varphi(3^a,3^b,n,x)$ holds if and only if $x$ is the $n$th element in the sequence defined by $3^a$ and $3^b$; this is far too strong a requirement.}
In the end, we arrive at a notion of \emph{a structure simulating counter machines}, which implies undecidability of the attendant first-order theory.

\paragraph*{Counter machines} A \emph{$k$-counter machine} $\Mcal$ consists of counters $c_1,\ldots,c_k$ taking positive integer values\footnote{Classically, the counters take values in $\nat$. Our modification is inconsequential, but technically convenient.} and instructions numbered $1,\ldots,H$ for some $H$.
Without loss of generality, we assume that $H > 1$.
The instructions are of the form $\mathsf{INC} \: c_i$, $\mathsf{GOTO} \: l$, $\mathsf{HALT}$, $\mathsf{IF} \: c_i > 1 \: \mathsf{THEN} \: \mathsf{DEC} \: c_i \: \mathsf{ELSE} \: \mathsf{GOTO} \: l$, where~$c_i$ is a counter and $l$ is an instruction number.
That is, the counters can be incremented and decremented, but they cannot go below~1.
The decrement operation also acts like a zero test.
We additionally assume, without loss of generality, that the machine starts with the instruction numbered 1, and has a single $\mathsf{HALT}$ instruction, numbered $H$.
The initial values of the counters are all $1$.
We write $\delta_{\Mcal} \colon \{1,\ldots, H\} \times \nat_{>0}^k \to\{1,\ldots, H\} \times \nat_{>0}^k$ for the (partial) transition function of $\Mcal$ that describes how a configuration consisting of an instruction number $l \ne H$ and the values of the $k$ counters is updated in one step.
By the \emph{trace} of $\Mcal$ we mean the sequence (which can be finite or infinite)
\[
(s_n)_n = (0, \iota_0, c_{1,0}, \ldots, c_{k,0} , 0, \iota_1, c_{1,1}, \ldots, c_{k,1}, \ldots)
\]
such that $s_{n} = 0 \Leftrightarrow n \equiv 0 \, (\bmod \, k+2)$, $\iota_0 = 1$, and $c_{j,0} = 1$ for all $j$,
\[
(\iota_{j+1}, c_{1,j+1},\ldots,c_{k,j+1}) = \delta_\Mcal(\iota_j, c_{1,j},\ldots,c_{k,j})
\]
for all $j$, and $\seq{s_n}$ is finite if and only if it contains a single halting instruction, which moreover occurs in the last block.
Note that we use $0$ as a delimiter between consecutive configurations of $\Mcal$.
The Halting Problem asks to decide whether the execution of a given machine $\Mcal$ ever reaches the $\mathsf{HALT}$ instruction, and is undecidable already for two-counter machines~\cite{minsky1967computation}.

\paragraph*{Structures simulating counter machines}

We say that a structure $\Mb$ with domain $D$ \emph{simulates counter machines} if there exist $l,m \ge 1$, maps $\mathsf{Seq} \colon D^l \to \nat^*$, $\mathsf{Rep} \colon D^l \to (D^m)^*$, and formulas $\frm{rep}$, $\frm{cnst}_k$ (where $k \in \nat$ and we additionally require that $\frm{cnst}_k$ be effectively computable given $k$), $\frm{succ}$, $\frm{inc}$, and $\frm{eq}$ in the language of~$\Mb$ with $l+m$, $l+m$, $l+2m$, $l+2m$, and $l+2m$ free variables, respectively, that satisfy the following.
\begin{enumerate}
	\item The map $\mathsf{Seq}$ is surjective.
	Intuitively, $\mathsf{Seq}$ is a black box with $l$ inputs that is ``implementable'' in $\Mb$ that outputs all possible finite sequences over $\nat$.
	\item For all $x \in D^l$, $|\mathsf{Seq}(x)| = |\mathsf{Rep}(x)|$, and the terms of $\mathsf{Rep}(x)$ are distinct.
	Intuitively, $\mathsf{Rep}(x)$ is a finite sequence of distinct terms that index the sequence $\mathsf{Seq}(x)$.
	\item For all $x \in D^l$ and $y \in D^m$, $\Mb \models \frm{rep}(x,y)$ if and only if  $y= \mathsf{Rep}(x)_i$ for some $i$.
	That is, $y$ appears in $\mathsf{Rep}(x)$; equivalently, $y$ is the index of some term in $\mathsf{Seq}(x)$.
	\item For all $k \in \nat$, $x \in D^l$ and $y \in D^m$, $\Mb \models \frm{cnst}_k(x,y)$ if and only if there exists $i$ such that $y=\mathsf{Rep}(x)_i$ and $\mathsf{Seq}(x)_i = k$:
	i.e., $y$ is the index of a term in $\mathsf{Seq}(x)$ that is equal to~$k$.
	\item For all $x \in D^l$ and
	$y,z \in D^m$, $\Mb \models \frm{succ}(x,y,z)$ if and only if $y,z$ are two consecutive terms appearing in $\mathsf{Rep}(x)$: that is, $y,z$ index two consecutive terms of $\mathsf{Seq}(x)$.
	Note that $\frm{succ}(x,y,z)$ implies $\frm{rep}(x,y)$ and $\frm{rep}(x,z)$.
	\item For all $x \in D^l$ and
	$y,z \in D^m$, $\Mb \models \frm{inc}(x,y,z)$ if and only if  there exist $i,j$ such that $y = \mathsf{Rep}(x)_i$, $z= \mathsf{Rep}(x)_j$, and $\mathsf{Seq}(x)_j = \mathsf{Seq}(x)_i + 1$: that is, $y,z$ index two terms $t_1, t_2$, respectively, of $\mathsf{Seq}(x)$ that satisfy $t_2 = t_1 + 1$.
	\item
	For all $x \in D^l$ and
	$y,z \in D^m$, $\Mb \models \frm{eq}(x,y,z)$ if and only if there exist $i,j$ such that $y = \mathsf{Rep}(x)_i$, $z = \mathsf{Rep}(x)_j$ and $\mathsf{Seq}(x)_j = \mathsf{Seq}(x)_i$: that is, $y,z$ index two terms of $\mathsf{Seq}(x)$ that are equal.
\end{enumerate}

\begin{theorem}[See \Cref{sec::how-to-prove-undec} for the proof]
	\label{thm::how-to-prove-undec}
	Any structure $\Mb$ that simulates counter machines has an undecidable first-order theory.
\end{theorem}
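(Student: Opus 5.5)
We reduce the Halting Problem for two-counter machines to the first-order theory of $\Mb$. Given a machine $\Mcal$ with instructions $1,\ldots,H$ and $k=2$ counters, we will effectively build a sentence $\varphi_\Mcal$ of the form $\exists x \, \psi_\Mcal(x)$, with $x$ a tuple of $l$ variables. Here $\psi_\Mcal(x)$ says that $\mathsf{Seq}(x)$ is exactly the finite trace of $\Mcal$, ending in a block with instruction $H$. By surjectivity of $\mathsf{Seq}$, $\varphi_\Mcal$ is true if and only if the finite trace exists, i.e.\ if and only if $\Mcal$ halts. Since every formula used is effectively computable from $\Mcal$ (in particular each $\frm{cnst}_k$ is computable from $k$), a decision procedure for the theory would decide halting.

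\textbf{Step 1: positions and navigation.} Quantify over $y \in D^m$ relativised to $\frm{rep}(x,y)$; these range over the positions of $\mathsf{Seq}(x)$. Using $\frm{succ}$ we can express:
\begin{itemize}
\item the first position, as one with no $\frm{succ}$-predecessor;
\item the last position, as one with no $\frm{succ}$-successor;
\item ``the position $j$ steps after $y$'', for each fixed $j \le 2k+3$, by chaining $j$ existentially quantified $\frm{succ}$ steps.
\end{itemize}
Because the entries of $\mathsf{Rep}(x)$ are distinct, $\frm{succ}$ is the graph of a genuine successor on positions. Hence ``first'' and ``last'' are unique and navigation is unambiguous. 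Values are tested via $\frm{cnst}_c(x,y)$, which reads as $\mathsf{Seq}(x)_y=c$, and comparisons via $\frm{eq}(x,y,z)$ and $\frm{inc}(x,y,z)$, which read as $\mathsf{Seq}(x)_z=\mathsf{Seq}(x)_y$ and $\mathsf{Seq}(x)_z=\mathsf{Seq}(x)_y+1$.

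\textbf{Step 2: the conjuncts of $\psi_\Mcal$.}
\begin{itemize}
\item \emph{Shape.} The first position holds $0$. The next $k+1$ positions hold $1,1,\ldots,1$ (instruction $1$ and all counters $1$). Every position holding $0$ is followed by $k+1$ positions that hold no $0$, and then either by a position holding $0$ or by the end of the sequence. Since counters and instruction numbers are positive, this forces the sequence to split into blocks of length $k+2$, each beginning with $0$.
\item \emph{Halting.} The instruction entry of the last block equals $H$, and no other block's instruction entry is $H$.
\item \emph{Transitions.} For every $0$-position $y$ that is not in the last block, let $y'$ be the $0$-position $k+2$ steps later. Take a finite disjunction over $\iota \in \{1,\ldots,H-1\}$ asserting that the instruction entry after $y$ equals $\iota$ (via $\frm{cnst}_\iota$), together with the corresponding update. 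For $\mathsf{INC}\, c_i$:
  \begin{itemize}
  \item $\frm{inc}$ holds between counter $i$ of $y$ and counter $i$ of $y'$;
  \item $\frm{eq}$ holds between the other counters of the two blocks;
  \item the instruction entry of $y'$ is $\iota+1$.
  \end{itemize}
  $\mathsf{GOTO}$ is handled in the same way. For the conditional decrement we case-split on $\frm{cnst}_1$ applied to counter $i$. If the counter is not $1$, we use $\frm{inc}$ with the arguments swapped, so that the new value is one less than the old. If it is $1$, the counters are copied with $\frm{eq}$ and the jump target is checked with $\frm{cnst}_l$.
\end{itemize}

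\textbf{Step 3: correctness.} By induction on blocks, any $x$ satisfying $\psi_\Mcal$ has $\mathsf{Seq}(x)$ equal to the run of $\Mcal$, and that run reaches $\mathsf{HALT}$ exactly at the end. Conversely, if $\Mcal$ halts, its finite trace lies in the image of $\mathsf{Seq}$, and the corresponding $x$ satisfies $\psi_\Mcal$. The main point needing care is the \emph{semantics of $\frm{inc}$ and $\frm{eq}$}. They are defined through the indexing $\mathsf{Rep}(x)$, so the exact, well-defined meaning of ``the $i$th term'' depends on $\mathsf{Rep}(x)$ being injective. One must also check that every element the formulas mention is guarded by $\frm{rep}$. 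The rest is routine bookkeeping.
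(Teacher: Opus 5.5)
Your proposal is correct and follows the paper's proof: you reduce two-counter halting to a sentence $\exists x$ that asserts the initial block, a final block with instruction $H$ at the end, and block-to-block transitions via $\frm{succ}$, $\frm{cnst}_k$, $\frm{inc}$ and $\frm{eq}$, then use surjectivity of $\mathsf{Seq}$ to get the equivalence with halting. The differences are cosmetic: you state the block shape as a separate conjunct and add a redundant ``no earlier block has instruction $H$'' clause (your disjunction over $\iota\le H-1$ already excludes this), whereas the paper folds alignment into $\Phi_\delta$ by requiring $s_5=0$ whenever $s_1=0$.
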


\subsection{First undecidability results}

We now apply our framework and prove our first undecidability result.
We denote by $\phi$ Euler's totient function ($\phi(n)$ is the number of positive integers at most $n$ that are coprime to $n$) and by $s$ the sum of divisors function ($s(n)$ is the sum of all positive divisors of $n$; we set $s(0) = 0$): these two are intimately connected.
Let us proceed by first giving a sufficient condition for a structure to simulate counter machines.

\begin{definition}
    We say that a function $f \colon \nat\to\nat$ realises arbitrary permutations if for every permutation $\sigma \colon \{1,\ldots,k\} \to \{1,\ldots,k\}$ there exists $c \in \nat$ such that
    \begin{equation}
        \label{eq::def-arb-perm-1}
        f(c+\sigma(1)) < \cdots < f(c+\sigma(k)),
    \end{equation}
    i.e., for all $c < n_1,n_2  \le c + k$,
    \begin{equation*}
        \label{eq::def-arb-perm-2}
        f(n_1) < f(n_2) \Leftrightarrow \sigma^{-1}(n_1 - c) < \sigma^{-1}(n_2 - c).
    \end{equation*}
\end{definition}

\begin{lemma}[See \Cref{sec::how-to-prove-undec} for the proof]
    \label{thm::permutation-to-undec}
    Suppose $f$ realises arbitrary permutations.
    Then $\langle \nat; 0, 1, <, f\rangle$ and hence $\langle \nat; 0,1,<,+,f\rangle$ simulate counter machines and have undecidable first-order theories.
\end{lemma}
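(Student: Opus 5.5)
The plan is to encode a finite sequence in the relative order of $f$ on a window of consecutive integers. One block of the window serves as a ``ruler'' of markers standing for the values $0,1,\ldots,M$. The other block holds the terms of the sequence, and the value of a term is read off from where its $f$-value falls relative to the markers' $f$-values. Everything can then be expressed using only $<$ and $f$, and \Cref{thm::how-to-prove-undec} gives undecidability. Since $\langle \nat;0,1,<,f\rangle$ is a reduct of $\langle \nat;0,1,<,+,f\rangle$, the second structure also simulates counter machines.

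\textbf{Encoding.} Take $l=3$, $m=1$ and parameters $x=(c,b,d)$. The markers are the positions in $(c,b]$ and the term positions are those in $(b,d]$. Given $(t_1,\ldots,t_N)$ with maximum $M$, set $b=c+M+1$ and $d=b+N$. We want the following ordering of $f$-values:
\[
f(c+1)<f(c+2)<\cdots<f(c+M+1),
\]
and, for each $i$, $f(b+i)$ lies strictly between $f(c+t_i+1)$ and $f(c+t_i+2)$ (or simply above $f(c+M+1)$ when $t_i=M$). Terms with equal value are ordered among themselves arbitrarily. This is a total order on $k=M+1+N$ positions, i.e.\ a permutation $\sigma$, so by hypothesis some $c$ realises it.

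\textbf{Formulas.} First define $\frm{valid}(x)$, which says: $c<b<d$; $f$ is strictly increasing on $(c,b]$; and $f(c+1)<f(p)$ for every $p\in(b,d]$. Next define the decoding relation $\frm{val}(x,p,q)$, which says that $q\in(c,b]$, $f(q)<f(p)$, and no $q'\in(c,b]$ has $f(q)<f(q')<f(p)$. Put $\mathsf{Rep}(x)=(b+1,\ldots,d)$ and let $\mathsf{Seq}(x)_i$ be the $v$ such that $\frm{val}(x,b+i,c+v+1)$ holds. For invalid $x$, let $\mathsf{Rep}(x)$ and $\mathsf{Seq}(x)$ be empty. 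Every formula below is conjoined with $\frm{valid}(x)$. Write $\frm{S}(y,z)$ for the successor relation, defined from $<$ by $y<z\land\neg\exists w\,(y<w<z)$. The required formulas are then:
\begin{itemize}
\item $\frm{rep}$: $b<y\le d$.
\item $\frm{succ}$: $\frm{rep}(x,y)\land\frm{rep}(x,z)\land \frm{S}(y,z)$.
\item $\frm{eq}$: both $y$ and $z$ are representatives, and some $q$ satisfies $\frm{val}(x,y,q)\land\frm{val}(x,z,q)$.
\item $\frm{inc}$: there are $q,q'$ with $\frm{val}(x,y,q)$, $\frm{val}(x,z,q')$ and $\frm{S}(q,q')$.
\item $\frm{cnst}_k$: there exist $w_0,\ldots,w_{k+1}$ with $w_0=c$, $\frm{S}(w_j,w_{j+1})$ for all $j$, and $\frm{val}(x,y,w_{k+1})$.
\end{itemize}
The formula $\frm{cnst}_k$ is clearly computable from $k$. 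Surjectivity of $\mathsf{Seq}$ is exactly what the realised permutations provide. Distinctness of the terms of $\mathsf{Rep}$ is immediate.

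\textbf{Main obstacle.} The delicate step is checking that this decoding is a well-defined total function for every valid $x$, not only for the intended ones, so that conditions (3)--(7) hold for all $x\in D^l$. Validity is what makes this work:
\begin{itemize}
\item Since $f(c+1)<f(p)$ for every term position $p$, each term has at least one marker below it.
\item Since $f$ is strictly increasing on the markers, the nearest marker below $p$ is unique, and its position index minus $c+1$ is the decoded value.
\item Marker order agreeing with position order is what makes ``value plus one'' coincide with positional successor of markers, which is exactly what $\frm{inc}$ and $\frm{cnst}_k$ rely on.
\end{itemize}
I would verify each of conditions (3)--(7) against this decoding, and then apply \Cref{thm::how-to-prove-undec}.
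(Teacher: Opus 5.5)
Your proof is correct, and it takes a different route from the paper in how a term is decoded.

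The paper sets $\mathsf{Seq}(c,d,e)_i=\#\{c<n\le d\colon f(n)<f(d+i)\}$, the number of markers whose $f$-value lies below that of the term. This count is well defined for every parameter triple, so no validity predicate is needed. The price is that the formulas are phrased through counting: $\frm{cnst}_k$ says exactly $k$ markers lie below, $\frm{inc}$ says exactly one marker has $f$-value in $[f(n_1),f(n_2))$, and $\frm{eq}$ says no marker lies between the two values.

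You instead guard everything with $\frm{valid}(x)$. This makes the markers below any term a non-empty initial segment of the marker block. The count then collapses to the position of the nearest marker below; your value is that count minus one. As a result $\frm{eq}$, $\frm{inc}$ and $\frm{cnst}_k$ become statements about positional successors of decoded markers, with no counting quantifiers. The configuration you realise is essentially the paper's (markers $f$-increasing, terms interleaved, with one extra marker so every term sits above the first), so surjectivity goes through identically.

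The paper's count-based decoding buys uniformity over all parameters. That is what lets it be reused almost verbatim in \Cref{sec::1st-structure} and \Cref{lem::permutation plus}, where the markers are not consecutive integers. Your version trades the counting quantifiers for an explicit validity check.

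One slip needs fixing. $\frm{val}(x,p,q)$ does not force $p\in(b,d]$, and your $\frm{inc}$ and $\frm{cnst}_k$ are not conjoined with $\frm{rep}$. For valid $x$ with $b\ge c+2$ we have $\frm{val}(x,c+2,c+1)$. So your $\frm{cnst}_0(x,c+2)$ holds although $c+2$ is a marker, which violates condition~(4). Similarly, $\frm{inc}(x,c+2,z)$ holds for any term $z$ of value $1$, which violates condition~(6). Adding $\frm{rep}(x,y)$ (and $\frm{rep}(x,z)$), as you already did for $\frm{eq}$, repairs this. The empty sequence is produced by any invalid $x$, so surjectivity is complete.
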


We give a proof sketch to hopefully illustrate the rationale behind our requirements for $\mathsf{Seq}$ and $\mathsf{Rep}$ in the definition above.
We choose $l=3$ and $m=1$, meaning that our function $\mathsf{Seq}$ that churns out all possible finite sequences will take 3 inputs, and we will index the terms of the output sequences using a single number.
For $c,d,e \in \nat$ we define
	\[
	\mathsf{Rep}(c,d,e) = ( d+1,d+2,\ldots,e)
	\]
	and $\mathsf{Seq}(c,d,e)$ to be the finite sequence $(t_i)_{i=1}^{e-d}$ over $\{0, \ldots, d-c\}$ where
	\[
	t_i =
	\#
	\{c < n \le d \colon f(n) < f(d+i)\}
	.
	\]
Thus $d+i$ indexes the $i$th term of $\mathsf{Seq}(c,d,e)$.
Because $f$ realises arbitrary permutations, it is immediate that $\mathsf{Seq}$ is onto $\nat^*$.
It remains to define $\frm{rep}$, $\frm{cnst}_k$, $\frm{succ}$, $\frm{inc}$, and $\frm{eq}$.
The idea behind all of these is the same: we illustrate it with $\frm{inc}$.
When is the term indexed by $n_2$ the increment (by one) of the term indexed by $n_1$?
When there exists unique $m$ (written $\exists!m$) such that $c < m \le d$ and $f(m) < f(n_2)$ but $f(m) \nless f(n_1)$.
In the end, we obtain
\[
\frm{inc}(c,d,e,n_1,n_2) \coloneqq d+1\le n_1, n_2 \le e \:\land\: \exists ! m \in (c,d]\colon f(m) \in [f(n_1), f(n_2)). \qed
\]

But how do we show that $\phi$ or $s$ realise arbitrary permutations?
It turns out that, more than 70 years ago, the then 18-year-old A. Schinzel already proved a much stronger result!

\begin{theorem}[Theorems 1 and 2 in~\cite{schinzel1955functions}]
    \label{thm::schinzel}
    Let $f$ be either the totient function or the sum of divisors function.
    Then for every $k$, 
    \[
    \bigg(
         \frac{f(n+2)}{f(n+1)}, \frac{f(n+3)}{f(n+2)},\ldots, \frac{f(n+k)}{f(n+k-1)}
    \bigg)_{n \in \nat}
    \]
    is dense in $[0,\infty)^{k-1}$.
\end{theorem}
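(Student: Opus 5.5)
The plan is to reduce the ratios $f(n+i+1)/f(n+i)$ to ratios of the multiplicative densities $f(m)/m$, and then control those densities by the Chinese remainder theorem together with an averaging argument. I treat $f=\phi$; the case $f=s$ is parallel.

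\textbf{Reduction to densities.} For $1\le i\le k$ write $g(m)=f(m)/m$. Since
\[
\frac{f(n+i+1)}{f(n+i)}=\frac{n+i+1}{n+i}\cdot\frac{g(n+i+1)}{g(n+i)}
\]
and the first factor tends to $1$ as $n\to\infty$, it suffices to do the following. Given targets $r_1,\dots,r_{k-1}\in(0,\infty)$ and $\varepsilon>0$, find arbitrarily large $n$ with $g(n+i+1)/g(n+i)$ within $\varepsilon$ of $r_i$ for all $i$. The value $0$ and the boundary of $[0,\infty)^{k-1}$ come for free, since density of the positive orthant gives density of its closure.

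Equivalently, we want $g(n+i)$ close to prescribed values $\tau_1,\dots,\tau_k$ with $\tau_{i+1}/\tau_i=r_i$. We may scale all $\tau_i$ by a common small constant, so we can take every $\tau_i$ as small as we like. Recall $g(m)=\prod_{p\mid m}(1-1/p)$.

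\textbf{Construction.} Fix $K>k$ and let $Q=\prod_{p\le K}p$.
\begin{enumerate}
\item Impose $n\equiv 0 \pmod Q$. Then for primes $p\le K$, we have $p\mid n+i$ if and only if $p\mid i$. This contributes the known constant $c_i=\prod_{p\mid i}(1-1/p)$ to $g(n+i)$.
\item Since $\sum_p 1/p$ diverges while $1/p\to0$, choose pairwise disjoint finite sets $S_1,\dots,S_k$ of primes in $(K,y]$ with $\prod_{p\in S_i}(1-1/p)$ within $\varepsilon$ (multiplicatively) of $\tau_i/c_i$. This is possible because the $\tau_i$ are small, so $\tau_i/c_i\le1$.
\item By CRT, impose $n+i\equiv0\pmod p$ for $p\in S_i$. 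For every other prime $p\in(K,y]$, impose $n\not\equiv -i\pmod p$ for all $i$; this is possible since $p>K>k$. This gives a single residue class $n\equiv n_0 \pmod M$, where $M$ involves only primes $\le y$.
\end{enumerate}

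\textbf{Main obstacle: uncontrolled large prime factors.} The remaining difficulty is that $n+i$ may also have prime factors $p>y$, which perturb $g(n+i)$. To bound this perturbation, I would average over $n=n_0+Mj$ with $1\le j\le J$. Using $-\log(1-1/p)\le 2/p$, the average over $j$ of $\sum_{p\mid n+i,\,p>y}2/p$ is at most
\[
\sum_{p>y}\frac{2}{p}\Bigl(\frac1p+\frac{1}{J}\Bigr),
\]
where the $1/J$ term accounts for the boundary effect. After truncating at $p\le n_0+MJ$, this is $O(1/y)+O(\log\log(MJ)/J)$. Taking $y$ large and then $J\to\infty$, Markov's inequality and a union bound over the $k$ indices give some $j$ (indeed a positive proportion of them, hence arbitrarily large $n$) for which every tail is below $\varepsilon$. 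Such $n$ realise the targets up to $O(\varepsilon)$. The one point needing care is that the two error terms must be made small in the right order: first fix $y$, which determines $M$, and only then let $J\to\infty$.

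\textbf{The case $f=s$.} Here $s(m)/m=\prod_{p^a\parallel m}(1+p^{-1}+\dots+p^{-a})$, so the exact prime powers matter. I would impose $p\parallel n+i$ for $p\in S_i$ (through congruences modulo $p^2$), and choose $S_i$ with $\prod_{p\in S_i}(1+1/p)$ close to $\tau_i/c_i'$. Here the $\tau_i$ are scaled to be large rather than small, and $c_i'=s(Q_i)/Q_i$, where $Q_i$ is the $K$-smooth part of $n+i$. To make $Q_i$ depend only on $i$ and not on $n$, I would also fix the residue of $n$ modulo a suitably high power of each $p\le K$. The tail bound for primes $p>y$ is the same averaging argument, now also covering higher powers $p^2\mid n+i$, which contribute $O(\sum_{p>y}p^{-2})$ on average.
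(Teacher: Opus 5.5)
The paper does not prove this statement. It quotes it from Schinzel and uses it only as a black box in \Cref{cor:Euler and sum of divs undecidable}, where only strictly positive targets $\sigma^{-1}(i+1)/\sigma^{-1}(i)$ are needed. Your argument is a correct, self-contained proof. The points that need checking all hold:
\begin{itemize}
\item The reduction to $g(m)=f(m)/m$ is valid because you produce arbitrarily large $n$.
\item The CRT step pins down exactly which primes $p\le y$ divide each $n+i$. A prime of $S_j$ cannot divide $n+i$ for $i\ne j$, since $0<|i-j|<K<p$. For $f=s$, fixing $n\equiv 0 \pmod{p^{e_p}}$ with $p^{e_p}>k$ forces $v_p(n+i)=v_p(i)$, so $c_i'=s(i)/i$.
\item The first-moment bound is right. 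Since $\gcd(p,M)=1$ for $p>y$, at most $J/p+1$ values $j\le J$ satisfy $p\mid n_0+Mj+i$.
\item The order of choices is legitimate: $y$ may be enlarged after the $S_i$ are fixed, and only then does $J\to\infty$.
\end{itemize}

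Two simplifications are available. First, for $f=s$ you need no separate treatment of $p^2\mid n+i$. Since $1+p^{-1}+\cdots+p^{-a}<(1-1/p)^{-1}$, the large-prime factor $T_i$ of $s(n+i)/(n+i)$ satisfies $1\le T_i\le\prod_{p\mid n+i,\,p>y}(1-1/p)^{-1}$, and your $\phi$ estimate applies verbatim.

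Second, the averaging and Markov step can be replaced by a deterministic bound. Take the unique $n$ in the residue class with $M\le n<2M$. By Chebyshev's bound $\sum_{p\le y}\log p=O(y)$, plus the bounded contribution of the primes $p\le K$, we get $\log M=O(y)$. So each $n+i<3M$ has at most $\log(3M)/\log y=O(y/\log y)$ prime factors exceeding $y$. These change $f(n+i)/(n+i)$ by a factor between $\exp(-O(1/\log y))$ and $\exp(O(1/\log y))$, which tends to $1$ as $y\to\infty$. This gives an explicit $n\in[M,2M)$. Your version instead avoids Chebyshev, using only $\sum_{p\le X}1/p=O(\log X)$, at the cost of a probabilistic selection of $j$.
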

\begin{corollary}\label{cor:Euler and sum of divs undecidable}
    The first-order theories of $\langle \nat; 0,1,<, \phi\rangle$ and $\langle\nat;0,1,<,s\rangle$ are undecidable.
\end{corollary}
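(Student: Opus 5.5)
By \Cref{thm::permutation-to-undec}, it suffices to show that $\phi$ and $s$ realise arbitrary permutations; undecidability of both theories then follows at once. So fix $k$ and a permutation $\sigma$ of $\{1,\ldots,k\}$. We must find $c$ such that $f(c+\sigma(1)) < \cdots < f(c+\sigma(k))$, where $f \in \{\phi, s\}$.

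First choose target positive reals $r_1,\ldots,r_k$ with $r_{\sigma(1)} < r_{\sigma(2)} < \cdots < r_{\sigma(k)}$, for instance $r_{\sigma(j)} = j$. These targets are distinct and have gaps of size at least $1$. Set the target ratio vector
\[
\rho = \Big(\frac{r_2}{r_1}, \frac{r_3}{r_2}, \ldots, \frac{r_k}{r_{k-1}}\Big) \in (0,\infty)^{k-1}.
\]
By \Cref{thm::schinzel}, there are infinitely many $n$ such that the vector $\big(f(n+i+1)/f(n+i)\big)_{i=1}^{k-1}$ lies within any prescribed $\varepsilon>0$ of $\rho$. (If $k=1$ there is nothing to prove.)

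Take such an $n$ and set $c=n$. Telescoping gives
\[
\frac{f(c+i)}{f(c+1)} = \prod_{j=1}^{i-1}\frac{f(c+j+1)}{f(c+j)},
\]
and the right-hand side is close to $r_i/r_1$, since a finite product is continuous at a point with positive coordinates. Choose $\varepsilon$ small enough that each normalised value $f(c+i)/f(c+1)$ lies within $1/(3r_1)$ of $r_i/r_1$. Because the targets $r_i/r_1$ are separated by at least $1/r_1$, the order of the values $f(c+i)/f(c+1)$ agrees with the order of the $r_i$. Note that $f(c+1)>0$ for $c \ge 0$, so dividing by it preserves order. Hence $f(c+\sigma(1)) < \cdots < f(c+\sigma(k))$.

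The only point needing care is the passage from density of ratio vectors to the strict ordering of values. This is handled by the telescoping product and the uniform separation of the targets; otherwise the argument is a direct application of \Cref{thm::schinzel} and \Cref{thm::permutation-to-undec}.
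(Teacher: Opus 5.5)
Your proof is correct and follows essentially the same route as the paper. Both arguments take the target values $\sigma^{-1}(i)$, use \Cref{thm::schinzel} to approximate the consecutive ratios $\sigma^{-1}(i+1)/\sigma^{-1}(i)$, telescope, and conclude via \Cref{thm::permutation-to-undec}; the only cosmetic difference is that the paper normalises by $f(c+\sigma(1))$ with tolerance $1/2$, whereas you normalise by $f(c+1)$ with tolerance $1/(3r_1)$ and make the positivity of $f(c+1)$ explicit.
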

\begin{proof}
    Let $f$ be either of the two functions.
    We will show that it realises arbitrary permutations.
    Fix $\sigma \colon \{1,\ldots,k\} \to \{1,\ldots,k\}$ and let $m = \sigma(1)$.
    By Schinzel's theorem, for any $\varepsilon > 0$ there exists $c$ such that $f(c+i+1)/f(c+i) \in (\sigma^{-1}(i+1)/\sigma^{-1}(i)-\varepsilon, \sigma^{-1}(i+1)/\sigma^{-1}(i)+\varepsilon)$  for all $1 \le i < k$.
    If $\varepsilon$ is sufficiently small, this implies \eqref{eq::def-arb-perm-1}, since $f(c+i)/f(c+m) \approx  \sigma^{-1}(i)/\sigma^{-1}(m) = \sigma^{-1}(i)$, i.e. at most $1/2$ away. Then, 
    \begin{equation*}
        f(c+m) < \frac{3}{2}f(c+m) < f(c+\sigma(2)) < \frac{5}{2}f(c+m) < \cdots < \frac{2k-1}{2}f(c+m) < f(c+\sigma(k)).\qedhere
    \end{equation*}
\end{proof}

This result illustrates our general approach to proving all of our undecidability results. 
We start with a purely mathematical result that says that, in a certain sequence, ``everything that can happen will happen'', e.g.\ we can completely control the ratios of consecutive elements.
We then transform this result into a combinatorial one that allows us to extract arbitrary finite sequences over $\nat$ from finitely many parameters (three in the case of functions that realise arbitrary permutations, discussed above) using the power of first-order logic.

To the best of our knowledge, the totient and sum of divisors functions are the first known natural examples of functions $f$ for which the first-order theory of $\langle\nat;0,1,<,f\rangle$ is undecidable.\footnote{
Of course, if $f$ is allowed to talk about Turing machines (which we exclude in our definition of a ``natural'' function), then undecidability is easy.
For example, take the infinite word $\alpha = \prod_{i=0}^{\infty} v_{\sigma(i)}$ where $\sigma \colon \nat \to (\nat_{>0})^2$ enumerates all possible pairs $(t,k)$ consisting of a Turing machine $t$ (itself represented by a positive integer; 
we assume the initial and halting states are always distinct) and a positive integer~$k$, and $v_{\sigma(i)}$ is $0^k 1^t$ if $t$ halts on $\varepsilon$ exactly after $k$ steps and empty otherwise.
Then the Turing machine $t \in \nat_{>0}$ halts on empty input if and only if $0 1^t 0$ occurs in $\alpha$. 
It follows that the first-order theory of $\langle \nat; <, P \rangle$ is undecidable, where $P$ is the set of all indices at which 1 occurs in $\alpha$.}
We collect all of our undecidability results for special functions and predicates (the least prime factor function, Ramanujan tau function, square-free numbers) in \Cref{sec::special-functions}.

\subsection{A dynamical perspective on \Cref{problem-1}}

In our undecidability results for special functions and predicates, we leverage existing results from number theory to show that a relevant structure simulates counter machines.
In this way we do obtain novelties, but these are still isolated.
To make large-scale progress, we propose to first take a \emph{dynamical perspective} on \Cref{problem-1}.
Writing $ U = \{u_n\colon n\in\nat\} \cap\nat$ and $u(n) = \max \{0,u_n\}$,
we ask: for which integer sequences $\seq{u_n}$ are the first-order theories~of%
\begin{equation}
    \label{eq::structures}
    \langle\nat;0,1,<,+,U\rangle,
    \quad
    \langle\nat;0,1,<,u\rangle,
    \quad\textrm{and}\quad 
\langle\nat; 0,1,<,+,u\rangle
\end{equation}
decidable?
Note that this is just \Cref{problem-1} with more detail.
The advantage of our formulation is that additionally asking what kind of a process is generating $P$ or $f$, rather than focusing solely on intrinsic properties of $P$ and $f$, allows one to apply powerful mathematical tools to settle the relevant decidability questions.
The same phenomenon has recently been demonstrated also in the context of MSO theories of structures of the form $\langle \nat; <, P_1,\ldots,P_m\rangle$, where $P_1,\ldots,P_m$ are unary predicates~\cite{berthe2024decidability}.

Having adopted the dynamical viewpoint, we say that an integer sequence $\seq{u_n}$ is \emph{rich} if any of the three structures in \eqref{eq::structures} simulates counter machines.
So, which classes of integer sequences $\seq{u_n}$ should we consider?
Can we find a large, novel class of rich sequences?

\paragraph*{Linear recurrence sequences}

We focus on the fundamental class of linear recurrence sequences (LRS).
An integer LRS is a sequence $\seq{u_n}$ over $\intg$ such that, for some $d \ge 0$ and $a_1,\ldots,a_d \in \intg$, where $a_d \ne 0$ when $d\ge 1$, we have that 
\begin{equation}
    \label{eq::lrs-def}
    u_{n+d} = a_1 u_{n+d-1} + \cdots + a_d u_n
\end{equation}
for all $n$.\footnote{The restriction $a_d \ne 0$ is not restrictive: every LRS has a suffix that is an LRS of order $d'$ satisfying \eqref{eq::lrs-def} with either $d' = 0$ or $d=d'$ and $a_d \ne 0$.}
The smallest possible value of $d$ is called the \emph{order} of $\seq{u_n}$.
The Fibonacci sequence is an LRS of order two, defined by the recurrence relation $u_{n+2} = u_{n+1} + u_n$.
An LRS of order $d$ satisfying the recurrence relation~\eqref{eq::lrs-def} can be uniquely written in its \emph{exponential-polynomial form}
\begin{equation}
	\label{eq::lrs-1}
	u_n = q_1(n)\lambda_1^n + \cdots+ q_m(n)\lambda_m^n
\end{equation}
where $\lambda_1,\ldots, \lambda_m$ are the distinct roots of the polynomial $p(x) = x^d - a_1x^{d-1} - \cdots - a_d$ and $q_1,\ldots,q_m$ are polynomials (with complex coefficients) with $\sum_{i=1}^m (\deg(q_i)+1) = d$.
The polynomial $p$ is called the \emph{characteristic polynomial}, and the numbers $\lambda_1,\ldots,\lambda_m$ are called the characteristic roots of $\seq{u_n}$.
A root $\lambda_i$ is \emph{dominant} if $|\lambda_i| \ge |\lambda_j|$ for all $j$.
We say that $\seq{u_n}$ is \emph{irreducible} when $p$ is irreducible over $\rat$, i.e.\ not a product of two non-constant polynomials $q, h$ with rational coefficients.

LRS form an expressive class of sequences that has been extensively studied both from the theoretical (e.g., in the context of Skolem and Positivity problems) and the practical perspective (e.g., in program verification).
In the context of arithmetic theories, already at low orders they give rise to wildly different first-order theories.
The decidability status of these is mostly open, but they can be decidable or undecidable for various reasons; see the discussion below about order two LRS.
Finally, another good reason for studying predicates and functions generated by integer LRS is that, in this context, the applicability and limitations of the four classical techniques are well-understood.
\begin{itemize}
    \item The first structure of \eqref{eq::structures} is automatic when $\seq{u_n}$ is \emph{irreducible} and has a single real \emph{dominant root} that is a \emph{Pisot number} \cite{frougny2010number}.\footnote{A real number $\alpha$ with $\alpha>1$ is a Pisot number if it is the root of a monic polynomial $p \in \intg[x]$ whose all other roots $\beta$ satisfy $|\beta| < 1$.}
    This is the only known general result: all other known automatic examples are exceptional, e.g.\ the case of $\seq{u_n}$ being an arithmetic sequence.
    When $\max \{0,u_n\} \notin O(n)$, which is the case for the vast majority of LRS, the second and the third structures of \eqref{eq::structures} provably cannot be automatic with respect to any ``standard'' numeration system (i.e., \emph{linear} system with greedy expansion and a finite digit alphabet)~\cite{allouche2022prove}. 
    \item For the first and the third structures of $\eqref{eq::structures}$ Sem\"enov's criterion (compatibility with addition) for quantifier elimination as well as Point's axiomatisation technique apply when $\seq{u_n}$ is irreducible and has a single, positive real dominant root (that may not be a Pisot number).
    For the second structure, Sem\"enov's relevant criterion for quantifier elimination (compatibility with order) applies when $\seq{u_n}$ has a single real dominant root (that, this time, may be repeated).
    We note that both claims above are not exhaustive: there are exceptional $\seq{u_n}$ that, for example, have a reducible characteristic polynomial, for which $n \mapsto \max\{0,u_n\}$ is compatible with addition.
    \item Undecidability, up until our work, was only known for the first
    and the third structures when $u_n = p(n)$ for some $p \in \intg[x]$ of degree at least two with $p(\nat) \cap \nat$ infinite; in this case too, $\seq{u_n}$ is an LRS with a single dominant root (which is 1), but it is not irreducible.
    Below we show undecidability for the first structure when $u_n = 2^n + 3^n$ (the argument applies to any integer LRS of order two whose roots are positive, multiplicatively independent integers, and $u_n \ge 0$ for infinitely many $n$), which is still within the scope of LRS with one real dominant root.
\end{itemize}

Summarising the discussion above, the decidability of first-order theories of the structures in \eqref{eq::structures}, where $\seq{u_n}$ is an integer LRS, prior to this work was largely (but not fully) understood only when $\seq{u_n}$ has a single, real dominant root $\rho$.
A key property in this case is that $u_n$ is either periodic or $|u_n|$ eventually monotonically diverges to $+\infty$.
LRS with two or more dominant roots, unless they are degenerate (discussed for order two LRS below; see green points in \Cref{fig:order2}), do not exhibit such a simple behaviour, and consequently, nothing general is known about any of the first-order theories of all three structures in \eqref{eq::structures} for such LRS.
Below we illustrate this situation more concretely.

\paragraph*{A close look at LRS of order two}

We now focus on LRS $\seq{u_n}$ of low order and, for brevity, $\Mb = \langle \nat; 0,1,<,+,U \rangle$ where $U = \{u_n \colon n \in\nat\}\cap\nat$.
A similar case analysis can be given for all three structures in \eqref{eq::structures}.

An integer LRS of order one is just a geometric sequence: $u_n = b^n u_0$ for an integer $b \ne 0$, and for such LRS, the first-order theory of $\Mb$ is decidable by Sem\"enov's criterion for quantifier elimination.
At the next level of complexity lie integer LRS of order two, defined by the recurrence relation $u_{n+2} = bu_{n+1} + cu_n$ with $c \ne 0$.
The characteristic polynomial is $p(x) = x^2 - bx - c$, which has two distinct real roots $\rho_1,\rho_2$ when $D \coloneqq b^2+4c > 0$, a repeated real root $\rho$ when $D = 0$, and two conjugate non-real roots $\lambda, \overline{\lambda}$ when $D < 0$.
It is reducible exactly when $D$ is a perfect square, which implies that $D \ge 0$.

Now suppose $D > 0$.
Then $u_n = \alpha\rho_1^n + \beta \rho_2^n$ for $\alpha,\beta,\rho_1,\rho_2 \in \rel$.
The case where $p$ is irreducible (red points in \Cref{fig:order2}) produces $\Mb$ with a decidable first-order theory: in this case, either $\seq{u_n}$ has a single (real) dominant root and Sem\"enov's criterion applies directly, or $\rho_2 = - \rho_1$ and $u_n = \alpha \rho_1^n + \beta((-\rho_1)^{n})$.
In the latter case, $\rho_1^2 =\rho_2^2 = c \in \intg$.
We consider odd and even indices separately: $U = U_0 \cup U_1$ where $U_i = \{u_i(\rho_1^2)^n\ \colon n \in \nat\} \cap \nat$.
Since $\rho_1^2 \in \intg$, the first-order theories of $\langle \nat; 0,1,<,+,U_0,U_1\rangle$ and $\Mb$ are decidable by automaticity.

The case of reducible $p$ (missing points with $c \ne 0$ in \Cref{fig:order2}) is interesting.
In this case, $\rho_1,\rho_2 \in \intg$: by a classical theorem of Gauss, reducibility over $\rat$ implies reducibility over $\intg$.
It is possible to classify whether the first-order theory of $\Mb$ is decidable by a case analysis on the values of $\alpha,\beta,\rho_1,\rho_2$, but we content ourselves with two illustrative examples.
\begin{itemize}
    \item If $b = 5$, $c = -6$, $u_0 = 2$, and $u_1 = 5$, then we have $u_n = 2^n + 3^n$. 
    We can define the predicates $P_2 = \{2^n \colon n \in \nat\}$ and $P_3 = \{3^n \colon n \in \nat\}$ in $\Mb$ by the formulas
    \begin{align*}
        x \in P_2 &\quad\Leftrightarrow\quad \exists y, z \in U \colon \big(
        y < z \:\land\: \forall s\in U \colon (s \ge z \lor s \le y) \:\land\: x+z = 3y
        \big),\\
        x \in P_3 &\quad\Leftrightarrow\quad \exists y, z \in U \colon \big(
        y < z \:\land\: \forall s\in U \colon (s \ge z \lor s \le y) \:\land\: x+2y = z
        \big).
    \end{align*}
    Because 2 and 3 are multiplicatively independent, the result of~\cite{hieronymi2022strong} applies, and the first-order theory of $\Mb$ is undecidable.
    \item The other possibility is that $\rho_1$ and $\rho_2$ are multiplicatively dependent, e.g.\ $\rho_1=4$ and $\rho_2=8$.
    In this case, there exist integers $\rho > 0$, $k_1,k_2 \ge 0$ such that $\rho_1 \in \{\rho^{k_1}, -\rho^{k_1}\}$ and $\rho_2 \in \{ \rho^{k_2}, -\rho^{k_2} \}$.
    Therefore, $U$ is definable in $\langle \nat; 0, 1, <, +, n\mapsto \rho^n\rangle$, which has a decidable first-order theory by Sem\"enov's criterion.
    Hence the first-order theory of $\Mb$ is decidable.
\end{itemize}

Now suppose $D=0$ (black points in \Cref{fig:order2}). 
In this case, $u_n = (\alpha+\beta n) \rho^n$, and to the best of our knowledge, decidability of the first-order theory of $\Mb$ (excluding the trivial cases where $U$ is finite or forms an arithmetic/geometric progression) is open.
A concrete open example is $u_n = n 2^n$, which satisfies the recurrence relation $u_{n+2} = 4u_{n+1} - 4 u_n$.

Finally, suppose $D < 0$, which yields $u_n = a\lambda^n + \overline{a}\,\overline{\lambda}^n$ for non-real $\lambda$.
We say that $\seq{u_n}$ is \emph{degenerate} if $\gamma \coloneqq \lambda / \overline{\lambda}$ is a root of unity, i.e.\ there exists an integer $k > 0$ such that $\gamma^k = 1$ (green points in \Cref{fig:order2}).
In particular, $\lambda^k = \overline{\lambda}^k$, and hence $\lambda^k \in \rat$.\footnote{In fact, because $\lambda^k$ is an \emph{algebraic integer}, we can immediately deduce that $\lambda^k \in \intg$.}
The reason for this name is that, in this case, $\seq{u_n}$ is actually an interleaving of $k$ order-one LRS, rather than a ``genuine'' order two LRS.
Specifically, we have that $U = U_0 \cup \cdots \cup U_{k-1}$ where $U_i = \{u_{nk+i} \colon n \in \nat\} \cap \nat$ and $u_{nk+i} = (a\lambda^i + \overline{a} \, \overline{\lambda}^i)(\lambda^k)^n$.
Since $u_{nk+i} \in \intg$ for all $n$, we deduce that $\ell \coloneqq \lambda^k \in \intg$ and $a\lambda^i + \overline{a} \, \overline{\lambda}^i \in \intg$, and hence every $U_i$ is $|\ell|$-automatic.
It follows that the first-order theories of $\langle \nat;0,1,<,+,U_0,\ldots,U_{k-1}\rangle$ and hence $\Mb$ are decidable.

Only the case of $D < 0$ and $\seq{u_n}$ being non-degenerate (blue points in \Cref{fig:order2}) remains. 
Up until our work, to the best of our knowledge, nothing was known about the corresponding first-order theories.
We show that such $\seq{u_n}$ possess strong randomness properties, and are rich.
Consequently, the first-order theory of $\Mb$ is undecidable.
In the end, for LRS of order two, only the case of a repeated real root ($D=0$) remains open.

\begin{figure}
    \centering
    \begin{tikzpicture}[x=0.44cm,y=0.34cm]
  \draw[->] (-12.5,0) -- (12.5,0) node[right] {$b$};
  \draw[->] (0,-11.5) -- (0,11.5) node[above] {$c$};
  \draw[gray,domain=-6.33:6.33,samples=100] plot (\x,{-\x*\x/4});
  \foreach \b in {-12,...,12}
    \foreach \c in {-10,...,10}{
      \pgfmathtruncatemacro{\SKIP}{(\c==0 ) ? 1 : 0}
      \ifnum\SKIP=1\relax\else
        \pgfmathtruncatemacro{\D}{\b*\b+4*\c}
        \ifnum\D<0
          \pgfmathtruncatemacro{\RU}{(\b==0 || \b*\b+\c==0 || \b*\b+2*\c==0
                                      || \b*\b+3*\c==0) ? 1 : 0}
          \ifnum\RU=1
            \fill[green!60!black] (\b,\c) circle (2.5pt);
          \else
            \fill[blue] (\b,\c) circle (2.5pt);
          \fi
        \else\ifnum\D=0
          \fill[black] (\b,\c) circle (2.5pt);
        \else
          \pgfmathtruncatemacro{\S}{round(sqrt(\D))}
          \ifnum\numexpr\S*\S\relax=\D\relax\else
            \fill[red] (\b,\c) circle (2.5pt);
          \fi
        \fi\fi
      \fi
    }
\end{tikzpicture}
    \caption{Spectrum of $u_{n+2} = bu_{n+1} + cu_n$. 
    Red: two distinct real roots with $p(x) = x^2-bx-c$ irreducible.
    Missing point: either $c = 0$, or $D > 0$ and $p$ is reducible.
    Black: a repeated real root.
    Blue or green: two conjugate non-real roots.
    Green: the ratio of the two roots is a root of unity.%
    }
    \label{fig:order2}
\end{figure}

\subsection{New undecidability results for LRS}

Now fix an integer LRS $\seq{u_n}$ that
\begin{equation*}
    \textrm{has exactly two non-repeated dominant roots whose ratio is not a root of unity. \quad\: ($\maltese$)}
\end{equation*}
Note that $\seq{u_n}$ can have any order $d \ge 2$, as well as a reducible characteristic polynomial.
Under these assumptions, $u_n = a\lambda^n + \overline{a}\, \overline{\lambda}^n + r_n$ where $\lambda$ is non-real, $|\lambda| > 1$, and $|r_n|$ grows much slower than $|\lambda|^n$; see \Cref{sec::lrs-two-roots}.
The ratio of the two dominant roots $\lambda, \overline{\lambda}$ not being a root of unity is equivalent to both roots not being a root of a real number, and its role is to ensure non-degenerate behaviour (i.e., that $\seq{u_n}$ is not actually an interleaving of LRS with one real dominant root) as discussed above for order two LRS with $D < 0$.
Two concrete examples of integer LRS satisfying ($\maltese$) are
\begin{itemize}
    \item $u_{n+2} = u_{n+1}-2u_n$, $u_0 = 0$, $u_1 = 1$, $u_n = a \lambda^n + \overline{a} \, \overline{\lambda}^n$ with $a = 1/(\im \sqrt{7})$, $\lambda = (1 + \im \sqrt{7})/2$;
    \item $u_n = (2 + \im)^n + (2-\im)^n + 2^n$, which satisfies the recurrence relation $u_{n+3} = 6u_{n+2}-13u_{n+1} + 10u_n$.
\end{itemize}
Write $U = \{u_n \colon n \in \nat\}\cap\nat$ and $u \colon \nat \to \nat$, $u(n) = \max \{0,u_n\}$.
We will show in \Cref{sec::lrs-two-roots} that, assuming ($\maltese$), the predicate $U$ is always infinite. 
We order its elements as $\seq{p_n}$.
Our main result is the following.

\begin{theorem}
	\label{thm::undec-lrs}
	Suppose $\seq{u_n}$ satisfies $(\maltese)$.
    Then $\langle \nat; 0,1,<, u\rangle$ and $\langle \nat; 0,1,<,+, U \rangle$ simulate counter machines and therefore have undecidable first-order theories.
\end{theorem}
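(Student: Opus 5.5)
We prove that both structures simulate counter machines, so the statement follows from \Cref{thm::how-to-prove-undec}. The key analytic input is that, under $(\maltese)$, $u_n = 2|a||\lambda|^n\cos(2\pi(n\theta+\varphi)) + r_n$, where $\lambda = |\lambda|e^{2\pi\im\theta}$ and $\theta$ is irrational. Irrationality of $\theta$ is exactly the statement that $\lambda/\overline{\lambda}$ is not a root of unity. We also have $|r_n| = O(|\lambda|^n n^{-\delta'})$, or even $O(\rho^n)$ with $\rho<|\lambda|$ in the case where no other roots lie on the circle. So the normalised values $v_n \coloneqq u_n/|\lambda|^n$ are governed, up to a vanishing error, by the rotation $n\mapsto n\theta$ on the circle. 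By Weyl/Kronecker equidistribution, for every $k$ the tuples $(\{n\theta\},\ldots,\{(n+k)\theta\})$ are dense in the one-dimensional orbit they generate. This lets us prescribe the signs and the relative sizes of the dominant parts of $u_{n+1},\ldots,u_{n+k}$ only through the single parameter $\{n\theta\}$.

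\emph{Step 1 (the order structure $\langle\nat;0,1,<,u\rangle$).} I would try to show that $u$ realises arbitrary permutations and then invoke \Cref{thm::permutation-to-undec}. A single rotation parameter cannot control $k$ independent ratios directly. However, consecutive values are weighted by the growth factor $|\lambda|^i$, so we should not use consecutive indices. Instead, since $\mathsf{Seq}$ in the proof of \Cref{thm::permutation-to-undec} only needs a window of indices where the ordering is prescribed, I would generalise that lemma to windows $\{c+1,\ldots,c+k\}$ replaced by a sparse pattern. Concretely, pick indices $n$ where $\cos(2\pi(n\theta+\varphi))$ is tiny positive, i.e.\ $\{n\theta+\varphi\}$ is close to $1/4$. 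On such indices the value $u_n/|\lambda|^n$ ranges over a small interval $(0,\eta)$, and its size is controlled by the distance $\|n\theta+\varphi - 1/4\|$.

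Among the indices $m\in(c,d]$ with $u_m>0$, we can then force any finite sequence of "ranks" $t_i=\#\{m\in(c,d]:u(m)<u(d+i)\}$ via simultaneous Diophantine approximation. The tool is Kronecker's theorem applied to the pair $(\theta, \log|\lambda|$-scaled quantities$)$. If $\log|\lambda|/\log$-type dependencies obstruct this, we instead use only the first-order definable ordering of the single sequence together with the cosine factor.

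The formulas $\frm{rep},\frm{cnst}_k,\frm{succ},\frm{inc},\frm{eq}$ are the same counting formulas as in the sketch of \Cref{thm::permutation-to-undec}, with $u(m)=0$ discarding nonpositive terms.

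\emph{Step 2 (the predicate structure $\langle\nat;0,1,<,+,U\rangle$).} Here the indices are lost, but $+$ is available. I would follow Hieronymi--Schulz more closely. Let the ordered elements $p_n$ of $U$ play the role of powers, and encode information in the gaps, or in ratios recovered through comparisons like $x+z=3y$ in the $2^n+3^n$ example. Using $+$, we can define from consecutive $p_j<p_{j+1}$ the comparison of $k p_j$ with $p_{j+1}$ for fixed $k$. This lets us compare integer multiples, hence recover approximate ratios $p_{j+1}/p_j$ to bounded precision.

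The randomness statement needed is an analogue of \Cref{thm::hs-randomness}. I would take the elements of $U$ lying in a block $(p_a, p_b)$, compare each against a fixed "ruler" built from sums of elements, and read off integer ranks as before. The randomness comes again from equidistribution of $n\theta$, which makes the sequence of signs and magnitudes of $\cos(2\pi(n\theta+\varphi))$ realise every finite pattern up to $\varepsilon$.

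\emph{Main obstacle.} The hard part is the Diophantine step. We must show that one irrational rotation, perturbed by the error $r_n$ and the exponential weight $|\lambda|^n$, realises \emph{every} finite rank pattern, and in particular that $r_n$ and ties cannot destroy the prescribed strict inequalities. I would first handle it by working with indices where the cosine is near a prescribed value. Then I would apply the effective lower bound for $|u_n|$, via Baker-type bounds in the form $|u_n|\ge|\lambda|^{n(1-\epsilon)}$ infinitely often, to absorb $r_n$. Finally, I would use the three-gap structure of $\{n\theta\}$ to place arbitrarily many indices in any prescribed order.
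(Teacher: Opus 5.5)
Your reduction to \Cref{thm::how-to-prove-undec} matches the paper, and so do two of your observations. Consecutive windows cannot realise all permutations, so one needs sparse index patterns cut out by a value window. For $U$, counter values are read off by ranking gaps against a ruler of elements of $U$.

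But there is a genuine gap. The statement that actually makes $\mathsf{Seq}$ surjective, namely the analogues of \Cref{thm::main-DA-function-version} and \Cref{thm::main-DA}, is never proved, and the tools you name do not give it. Everything depends on the single point $z=\xi\mu^n\in\torus$. For $z\in\torus_+$ one has $v_{n+d}/v_n=g_d(z)$, and the set of $z$ with $v_{n+d}/v_n\in(\gamma,\delta)$ is an arc $\Jcal_d(\gamma,\delta)$ of length of order $(\delta-\gamma)\rho^{-d}$, next to the anchor point $\pm\im\mu^{-d}$. Prescribing a pattern means choosing $d_1<\cdots<d_\ell$ with $\Jcal_{d_1}(\gamma_1,\delta_1)\supseteq\cdots\supseteq\Jcal_{d_\ell}(\gamma_\ell,\delta_\ell)$ \emph{and} $\Jcal_{d_\ell}(1,\zeta)\cap\Jcal_d(1,\zeta)=\varnothing$ for every other $d<d_\ell$. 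Density of the anchor points gives the nesting, but not the disjointness from all earlier arcs. The paper obtains the latter from the Hieronymi--Schulz stacking lemma (\Cref{thm::hs}). That lemma requires each target arc to be ``protected'' (later arcs cover less of it than its length). Protection in turn needs the lower bound of \Cref{thm::int-sizes-lower} together with the Baker separation of anchor points in \Cref{thm::two-interval-int}. Kronecker's theorem ``for the pair $(\theta,\log|\lambda|)$'' is not available: $|\lambda|^n$ is not a rotation, and there is no second independent frequency, as you note yourself. The three-gap theorem describes the spacing of $\{n\theta\}$, but not which of the arcs $\Jcal_d$ a single orbit point lies in.

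For $U$ there is a second missing ingredient. $U$ is ordered by value, so indices outside your window can land in $(p_n,\zeta p_n)$ and destroy the pattern. Earlier indices are excluded by taking $\xi\mu^n\in\Ical$ (\Cref{lem::I-is-largest}). Later ones, however, impose infinitely many constraints $\xi\mu^n\notin\Jcal_k(1,\zeta)$ for $k>d_\ell$. The paper shows via Baker that $\xi\mu^n\in\Jcal_k(1,\zeta)$ forces $k<C_3\log(n+1)$ (\Cref{thm::log-bound}). It then uses an effective waiting-time bound (\Cref{thm::falling-into-an-interval}) to find $n\in[N,2N]$ avoiding the $O(\log N)$ relevant arcs.

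Two smaller points. In Step~1 the ranks must be counted only over indices whose values lie in a window such as $(u(c),u(e))$. Otherwise unselected small positive values in $(c,d]$ shift every $t_i$ by an uncontrolled amount. Also, transferring orderings from $v_n$ to $u_n$ needs the lower bound of \Cref{thm::baker-applied-to-vn} for \emph{all} large $n$, plus the separation bound of \Cref{thm::baker-application-distance-between-two-terms}. A bound holding merely infinitely often is not enough.
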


To us, the first result is again very surprising, since $\seq{u_n}$ can be very ``simple'', and $\langle \nat; 0,1,<, u\rangle$ has neither addition nor multiplication in its signature.\footnote{We do not know whether $+$ or $\cdot$ is definable in $\langle \nat; 0,1,<, u\rangle$; we suspect the answer is negative for both.}
We prove the two undecidability results of \Cref{thm::undec-lrs} respectively via the following number-theoretic results.

\begin{theorem}
	\label{thm::main-DA-function-version}
	Suppose $\seq{u_n}$ satisfies $(\maltese)$.
	Then there exists a computable $\zeta > 1$ with the following property.
	For all $\ell \ge 1$ and $1 < \gamma_j < \delta_j < \zeta$ for $1 \le j \le \ell$, there exist infinitely many $n < \widetilde{n}$ such that, writing $(n_j)_{j=0}^{k}$ for the ordering of 
    $
    \{m \colon n  \le m <\widetilde{n} \textrm{ and } u_n \le u_m < \zeta u_n\}
    $
    we have that $u_n > 0$, $k = \ell $, and
	\[
	\frac{u_{n_j}}{u_{n}} \in (\gamma_j, \delta_j)
	\]
	for all $1\le j \le \ell$.
\end{theorem}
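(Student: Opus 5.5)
Write $\lambda = |\lambda| e^{2\pi \im \theta}$ and $a = |a| e^{2\pi \im \psi}$. Then
\[
u_m = 2|a||\lambda|^m \cos(2\pi(m\theta+\psi)) + r_m .
\]
Since $\lambda/\overline{\lambda}$ is not a root of unity, $\theta$ is irrational. Put $\rho = |\lambda| > 1$.

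The plan has three steps. First, reduce the statement about ratios $u_m/u_n$ to a question about the angles $m\theta \bmod 1$. Second, use the equidistribution, or more precisely the Kronecker/density, properties of $(n\theta)$ to choose $n$ so that the angles over a window of length $L$ realise a prescribed pattern. Third, control the error term $r_m$, which is $O(\rho^m \rho^{-\eta m})$ for some $\eta>0$ (or polynomially smaller, using the non-repeated hypothesis), so that it is negligible for large $n$.

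\textbf{Step 1: choosing $\zeta$ and $L$.} For $m = n + t$ with $t \ge 0$,
\[
\frac{u_m}{u_n} \approx \rho^t\,\frac{\cos(2\pi(x+t\theta))}{\cos(2\pi x)}, \qquad x = n\theta+\psi \bmod 1 .
\]
Choose $\zeta$ so that only a bounded window of indices can ever contribute, say $\zeta = \rho$ or something smaller and computable. If $\cos(2\pi x)$ is bounded below, then $|u_m| \ge \rho^{t}\cdot c$ forces $t$ to be bounded. So I will aim for $x$ near $0$, with $\cos(2\pi x) \approx 1$. Then $u_m/u_n \in [1,\zeta)$ requires $\rho^t \cos(2\pi(x+t\theta)) \in [1,\zeta)$ approximately, and since $\cos \le 1$ this gives $\rho^t \ge 1$, which is automatic for $t \ge 0$. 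The problem is that $t$ can be large with $\cos$ small. This is why $\widetilde{n}$ appears: it truncates the window. Take $\widetilde{n} = n + L$, where $L$ is large enough that the window contains enough candidate $t$'s and small enough to be handled. For each $t \in [1, L)$, the condition $u_{n+t}/u_n \in (\gamma,\delta)$ is approximately the open condition
\[
\cos(2\pi(x + t\theta)) \in \bigl(\gamma\rho^{-t}, \delta\rho^{-t}\bigr),
\]
where the interval is nonempty and lies inside $(0,1)$ as soon as $\delta < \zeta \le \rho^t$. This holds for all $t\ge1$ if $\zeta \le \rho$.

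\textbf{Step 2: realising the pattern.} Pick $\ell$ distinct values $t_1 < \dots < t_\ell$ in $[1,L)$. Require the $t_j$-th window index to satisfy the $j$-th interval condition in the order matching the ordering $n_1<\dots<n_\ell$. Since the $n_j$ are ordered by index, we need $u_{n+t_j}/u_n \in (\gamma_j,\delta_j)$ with $t_1<\dots<t_\ell$. For all other $t \in [1,L)$ we require $u_{n+t}/u_n \notin [1,\zeta)$, for instance $\cos(2\pi(x+t\theta)) < 0$ or $> \zeta\rho^{-t}$. The pair $(x, \text{condition})$ involves a single rotation, so a priori it is not independent across the $t$'s. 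This is the main obstacle: the angles $x+t\theta$ for consecutive $t$ are not free.

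To get around it, I would take the $t_j$ to be multiples of a large $q$ with $q\theta$ extremely close to an integer, using Dirichlet approximation and the irrationality of $\theta$. The small drift $\epsilon = \|q\theta\|$ then lets the angles $x + jq\theta$ be tuned essentially continuously. For the non-selected $t$, which are not multiples of $q$ (and the few multiples we skip), we need $\cos(2\pi(x+t\theta))$ outside a tiny interval near $0$ of width $\zeta\rho^{-t}$ (for the large $t$) or inside the forbidden sets. For large $t$ these forbidden sets are tiny, so the needed set of $x$ has positive measure.

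Concretely, I would choose $x$ in an open set $O$ of positive measure defined by the finitely many open conditions above, after checking compatibility by taking $t_j$ large enough that all target intervals are near $\cos \approx 0^+$. Then equidistribution of $n\theta+\psi$ gives infinitely many $n$ with $x \in O$.

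\textbf{Step 3: error terms and counting.} Finally, for $n$ large, the term $r_m/u_n = O(\rho^{-\eta n})$ is smaller than the margins of the finitely many open conditions, so the approximate conditions imply the exact ones. This also gives $u_n>0$, and $k=\ell$ by the exclusion conditions. I expect the hardest point to be ensuring the exclusion of unwanted $t$ in the window while keeping the target intervals achievable. If the simple choice of $\zeta$ fails, I would shrink $\zeta$ toward $1$, which makes the accepted band $\cos \in [\rho^{-t}, \zeta\rho^{-t})$ thin, so that generic $x$ avoid it for all but the designed indices.
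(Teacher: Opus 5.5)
There is a genuine gap, and it sits exactly where you placed the ``main obstacle'': the proposed fix for Step~2 cannot work. With $\cos(2\pi x)$ bounded below, the condition $u_{n+t}/u_n\in(\gamma_j,\delta_j)\subset(1,\zeta)$ forces $0<\cos(2\pi(x+t\theta))<C\rho^{-t}$. So the admissible $x$ form an arc of length $\asymp\rho^{-t}$ next to an anchor $\pm\frac14-t\theta$; this is the paper's $\Jcal_t(\gamma_j,\delta_j)$. Your set $O$ is therefore an intersection of $\ell$ exponentially small arcs. It is non-empty only if $(t_{j+1}-t_j)\theta$ lies within $O(\rho^{-t_j})$ of $0$ or $\frac12$ modulo $1$.

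Dirichlet gives only $\|q\theta\|<1/q$. Baker's theorem (\Cref{lem::baker-distance-from-alpha-n-to-b} with $\beta=\pm1$) gives the opposite bound $\|q\theta\|,\|q\theta-\frac12\|>q^{-C}$. So with $t_j=jq$ the drift per step is at least $q^{-C}\gg\rho^{-q}$. If $\cos(2\pi(x+q\theta))\in(0,\zeta\rho^{-q})$, then $\cos(2\pi(x+2q\theta))$ is either negative or at least of order $q^{-C}$, nowhere near $(\gamma_2\rho^{-2q},\delta_2\rho^{-2q})$. Hence $O=\varnothing$ for large $q$: the angles cannot be ``tuned essentially continuously'', and your compatibility check fails. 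In fact any workable choice needs gaps $t_{j+1}-t_j$ of size roughly $\rho^{t_j/C}$ (\Cref{thm::two-interval-int}).

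Once the gaps are that large, exclusion is no longer a positive-measure statement. The window then contains enormously many unselected $t$. For $t_j<t<t_{j+1}$ the acceptance arc $\Jcal_t(1,\zeta)$ has length $\asymp\rho^{-t}$, much longer than the final target arc of length $\asymp\rho^{-t_\ell}$, so one such arc can cover the target entirely. Knowing that the bad arcs have small total measure on $\torus$ says nothing about whether some point lies in all $\Jcal_{t_j}(\gamma_j,\delta_j)$ and in no other $\Jcal_t(1,\zeta)$ with $t\le t_\ell$.

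This is the missing idea, and the paper supplies it via the Hieronymi--Schulz stacking lemma (\Cref{thm::hs}):
\begin{itemize}
\item $\zeta$ is chosen close to $1$ so that $\sum_d|\Jcal_d(1,\zeta)|<|\Ical|$.
\item A lower bound $|\Jcal_d(\gamma,\delta)|>C_2(\delta-\gamma)\rho^{-d}$ (\Cref{thm::int-sizes-lower}), combined with the Baker gap, makes each $\Jcal_{d_j}(\gamma_j,\delta_j)$ ``protected'': it is longer than the total length of later arcs that can meet it.
\item The arcs $\Jcal_d(1,\zeta)$ are dense, because their anchors $\pm\im\mu^{-d}$ are.
\item Inductively, one picks two points of the protected arc covered by no later $\Jcal_d(1,\zeta)$, and lets $d_{j+1}$ be the \emph{first} index whose arc meets the sub-arc between them. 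That arc must then lie inside the sub-arc and miss the arcs of all intermediate indices (\Cref{thm::stacking-intervals}).
\item Setting $\widetilde n=n+d_\ell+1$ makes indices beyond $d_\ell$ irrelevant.
\end{itemize}
Your Steps~1 and~3 (reduction to angles, error control) are fine, but without this nested construction the central step is unproved.
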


\begin{theorem}
	\label{thm::main-DA}
	Suppose $\seq{u_n}$ satisfies $(\maltese)$.
	Then there exists a computable $\zeta > 1$ with the following property.
	For all $\ell \ge 1$ and $1 < \gamma_1 < \delta_1 \le \gamma_2 < \delta_2 \le \cdots \le \gamma_\ell < \delta_\ell < \zeta$, there exist infinitely many $n$ such that
	\[
	\frac{p_{n+j}}{p_n} \in (\gamma_j, \delta_j)
	\]
	for all $1 \le j \le \ell$.
\end{theorem}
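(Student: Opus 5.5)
We want to prove \Cref{thm::main-DA}: any prescribed pattern of consecutive ratios $p_{n+j}/p_n$ inside $(1,\zeta)$ occurs in the increasing enumeration of $U$. The plan is to derive it from the same Diophantine-approximation machinery that yields \Cref{thm::main-DA-function-version}, but with a stronger requirement. There we only had to control the values $u_m$ for indices $m$ in a window $[n,\widetilde n)$. Here we must control \emph{all} elements of $U$ lying in $[p_n,\zeta p_n)$, whatever index produces them.

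\textbf{Setup and localisation.} Write $u_m = a\lambda^m+\overline a\,\overline\lambda^m + r_m = 2|a||\lambda|^m\cos(m\theta+\varphi)+r_m$, where $\lambda=|\lambda|e^{\im\theta}$. Assumption ($\maltese$) gives $\theta/\pi\notin\rat$, and $r_m = o(|\lambda|^{m}|\lambda|^{-\epsilon m})$ for some $\epsilon>0$. Now fix a target value $X=u_N$ with $u_N>0$ large. An index $m$ can satisfy $u_m\in[X,\zeta X)$ only if $|\lambda|^m \ge X/(2|a|)-o(\cdot)$, which forces $m\ge N-C$ for a constant $C$. Conversely, for $m$ much larger than $N$ we need $|\cos(m\theta+\varphi)|\lesssim |\lambda|^{N-m}$. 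So the relevant indices are a bounded window $N-C\le m\le N+K$ together with "rare" indices $m>N+K$ where $m\theta+\varphi$ is extremely close to $\pi/2 \pmod{\pi}$.

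\textbf{Choosing $\zeta$ and the angles.} Choose $\zeta$ close to $1$; it must be computable and depends only on $|\lambda|$. Then only a fixed finite set of offsets $j=m-N\in[-C,K]$ is potentially relevant, and $u_{N+j}/u_N \approx |\lambda|^j\cos((N+j)\theta+\varphi)/\cos(N\theta+\varphi)$. By Weyl/Kronecker equidistribution of $N\theta+\varphi \bmod 2\pi$ (irrationality of $\theta/\pi$), the point $N\theta+\varphi$ can be placed in any open arc for infinitely many $N$. The values $u_{N+j}/u_N$ are continuous functions of this single angle $x$. The task is then to find an open arc of $x$ on which exactly $\ell$ of the functions $g_j(x)=|\lambda|^j\cos(x+j\theta)/\cos x$ land in the prescribed intervals $(\gamma_i,\delta_i)$, one in each, with all other $g_j$ outside $[1,\zeta)$. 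A single angle does not give enough freedom for arbitrary $\ell$, so the plan is to let the pattern be produced by \emph{several} scales. For this, choose $x$ very close to $\pi/2$, so that $\cos x = \eta$ is small and $|\lambda|^N\eta$ sets the size $X$. The later indices $N+j$ with $j$ in a longer range then have $|\lambda|^j\cos(x+j\theta)$ taking values governed by the pair $(\eta, \{j\theta\})$. We therefore use simultaneous approximation to $\pi/2$ over a long range of $j$ and apply Kronecker in the joint variable. Concretely, I would reduce to \Cref{thm::main-DA-function-version}: take the $n<\widetilde n$ it supplies, then show that every $m\ge \widetilde n$ and every $m<n$ has $u_m\notin[u_n,\zeta u_n)$. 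Then the ordered elements of $U$ in that range are exactly the $u_{n_j}$, and $p_{n+j}/p_n=u_{n_j}/u_n$. The ordering of the target intervals matches the ordering of the $n_j$ by value, not by index, so we simply sort.

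\textbf{Main obstacle.} The hard step is excluding the indices $m\ge\widetilde n$, i.e.\ showing no later term $u_m$ hits $[u_n,\zeta u_n)$. This is a lower bound of the form $|\cos(m\theta+\varphi)|\gg|\lambda|^{-(m-n)}$ (times an absolute constant), uniformly in $m$. It needs a Baker-type linear-forms-in-logarithms estimate: $|m\theta+\varphi-\pi/2-k\pi|\ge m^{-B}$ since $e^{\im\theta}$ and $e^{\im\varphi}$ are algebraic, $\theta$ being the argument of $\lambda/\overline\lambda$. With this estimate, for $m-n\ge B'\log m$ the term is too large or negative. The remaining logarithmic window of $m$ is handled by choosing $n$ (via Kronecker) so that those finitely many angles avoid a neighbourhood of $\pi/2$. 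The subtle point is that $u_n$ itself is small relative to $|\lambda|^n$, so the window size depends on $n$. I would first try to bound it by $O(\log n)$ and absorb it into $\widetilde n$, enlarging the index window while keeping only $\ell$ hits. If that fails, I would instead bound $u_n$ from below using the same Baker bound, which gives $u_n\ge|\lambda|^n n^{-B}$, and rerun the argument.
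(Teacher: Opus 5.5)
\textbf{There is a genuine gap in the logarithmic window after $\widetilde n$.} You correctly isolate the crux. By Baker, a later term can land in $[u_n,\zeta u_n)$ only if $m-n=O(\log m)$ (this is the paper's \Cref{thm::log-bound}), so everything hinges on that window. But your step ``choose $n$ via Kronecker so that those finitely many angles avoid a neighbourhood of $\pi/2$'' does not work, and this is exactly where the paper needs two further ingredients.

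Each requirement $u_{n+k}\notin[u_n,\zeta u_n)$ is a condition on the single point $\xi\mu^n$: up to the $u$/$v$ comparison, it says $\xi\mu^n\notin\Jcal_k(1,\zeta)$, an arc of length $O(\rho^{-k})$ near the anchor $\alpha_k$. Fixed-size neighbourhoods cannot all be avoided over a window of unbounded length, since $(\mu^k)$ is eventually $\varepsilon$-dense. The arcs $\Jcal_k(1,\zeta)$ are dense in $\torus_+$ (\Cref{thm::density-of-intervals}), so the set of points in your target arc avoiding all of them has empty interior. A priori the removed arcs could even cover the target arc.

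The first ingredient rules this out. The paper builds $I=\Jcal_{d_\ell}(\gamma_\ell,\delta_\ell)$ via the Hieronymi--Schulz lemma so that it is protected (\Cref{thm::stacking-intervals}(c)). This gives $|Y_m|>\tau>0$ uniformly in $m$, where $Y_m=I\setminus\bigcup_{k=d_\ell+1}^{m}\Jcal_k(1,\zeta)$.

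Even so, you need $\xi\mu^n\in Y_{\lfloor C_3\log(n+1)\rfloor}$. This is a finite union of arcs whose pieces shrink as $n$ grows, and the number of constraints depends on $n$ itself, so qualitative density of $(\xi\mu^n)$ gives nothing. The second ingredient breaks this circularity: an effective Kronecker bound (\Cref{thm::falling-into-an-interval}, again from Baker) says $\mu^n$ enters any arc $J$ within $(4\pi/|J|)^{C_4}$ steps. Since $Y_m$ is a union of at most $m$ arcs, it contains one of length at least $\tau/m$. With $m=\lfloor C_3\log(2N+1)\rfloor$ the waiting time is polylogarithmic in $N$, which yields a suitable $n\in[N,2N]$.

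Your fallback of absorbing the window into $\widetilde n$ is circular. \Cref{thm::main-DA-function-version} gives no control on $\widetilde n-n$; in its proof $\widetilde n=n+d_\ell+1$ with $d_\ell$ fixed. Extending the hit-free window to $n+C\log n$ is precisely the missing statement.

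\textbf{A second, smaller gap concerns $m<n$, which your reduction does not handle.} The statement of \Cref{thm::main-DA-function-version} does not prevent an earlier term from lying in $[u_n,\zeta u_n)$. Your localisation claim that $u_m\in[X,\zeta X)$ forces $m\ge N-C$ fails when $u_N$ is small compared to $\rho^N$. That is exactly the regime you propose when you put the base angle near $\pi/2$. Your fallback Baker bound $u_n\ge\rho^n n^{-B}$ only converts this into a second logarithmic window. The paper makes the opposite choice: the base point is taken in $\Ical=\{z\in\torus\colon\Rea(z)>1/\rho\}$, so $v_n$ exceeds all earlier $v_m$ (\Cref{lem::I-is-largest}). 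It is the later indices $n+d_j$ whose angles are close to $\pm\pi/2$.
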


These two results say that we can, to a certain extent, control the ratios of consecutive terms of $\seq{u_n}$ as well as the ordered version $\seq{p_n}$.
They are novel mathematical results of independent interest that are analogous to \Cref{thm::schinzel}.
However, because the randomness result we obtain for our LRS is weaker than Schinzel's result for the totient function, the process of extracting arbitrary finite sequences from LRS is a little more involved than just proving realisation of arbitrary permutations; see \Cref{sec::1st-structure,sec::2nd-structure}.


To prove \Cref{thm::main-DA-function-version,thm::main-DA}, we develop the theory of \emph{algebraic} sequences $a\lambda^n + \overline{a}\, \overline{\lambda}^n$ through novel Diophantine approximation arguments, to an extent that now LRS with two dominant roots are comparable to LRS with one dominant root in terms of how much we understand them.
Roughly speaking, we operate on the unit circle $\torus$ in $\com$, equipped with the function $z \mapsto z\gamma$ for a fixed $\gamma$ and a class of sequences $\seq{I^{(k)}_n}$ (where $k \in \mathcal{K}$ for an uncountable set~$\mathcal{K}$) of interval subsets of $\torus$ that disappear exponentially fast: for all $k\in\mathcal{K}$, $|I^{(k)}_n| \to 0$ as $n \to \infty$. 
There we argue about which intervals $I^{(k)}_n$ are hit and which are avoided by $\gamma^n \in \torus$ as $n \to \infty$.
However, we must mention that strong randomness properties of the class of LRS that we consider in this work were already hinted at in the recent paper~\cite{pronormality}, where the \emph{decidability} of the \emph{monadic second-order} (MSO) theory of $\langle \nat; <, U \rangle$ was established via the following theorem.\footnote{The paper \cite{pronormality} is about integer LRS $\seq{u_n}$ with two non-real dominant roots such that (i) all roots of $\seq{u_n}$ are non-repeated and (ii) the ratio of \emph{any} two characteristic roots of $\seq{u_n}$ is not a root of unity, which is stronger than ($\maltese$). However, the ratios and the repeatedness of non-dominant roots are irrelevant to their arguments, and all results of \cite{pronormality} that we cite apply under the assumption of ($\maltese$).}

\begin{theorem}[\unexpanded{\cite[Theorem~4]{pronormality}}]
	\label{thm::jojo}
	Let $\seq{u_n}$ be an integer LRS satisfying $(\maltese)$,  $m \ge 1$, and
	\[
	\Sigma_m = \{0 \le r < m \colon u_n \equiv r \, (\bmod\, m) \textrm{ for infinitely many $n$}\}.
	\]
	Then for any $t_0, \ldots, t_\ell \in \Sigma_m$ we can compute infinitely many $n$ such that for all $0 \le i \le \ell$,
	\[
	p_{n+i} \equiv t_i \, (\bmod\, m).
	\]
\end{theorem}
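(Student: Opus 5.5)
The residues of $u_n$ modulo $m$ are eventually periodic, and I would prescribe residues by prescribing indices. Concretely, fix a period $T$ and threshold $n_0$ such that $u_{n+T}\equiv u_n \pmod m$ for $n\ge n_0$. Then $\Sigma_m$ is exactly the set of residues $u_s \bmod m$ for $s$ ranging over one period $n_0\le s<n_0+T$. For each $t_i$, choose a class $s_i \bmod T$ with $u_n\equiv t_i$ whenever $n\equiv s_i \pmod T$ and $n\ge n_0$. It then suffices to find infinitely many windows in the ordered sequence $\seq{p_n}$ in which $\ell+1$ consecutive terms are $u_{M_0},\dots,u_{M_\ell}$, listed in increasing order, with $M_i\equiv s_i \pmod T$.

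To find such windows I would use the analytic form from (\maltese): $u_n = 2|a||\lambda|^n\cos(n\theta+\varphi)+r_n$, where $r_n=o(|\lambda|^{n(1-\eta)})$ for some $\eta>0$ and $\theta/\pi$ is irrational. The irrationality is exactly the condition that $\lambda/\overline\lambda$ is not a root of unity. Fix small target offsets $0=j_0<j_1<\dots<j_\ell$ with $j_i\equiv s_i-s_0 \pmod T$, and look for $N\equiv s_0 \pmod T$ with $M_i=N+j_i$. By Kronecker/Weyl equidistribution of $(N\theta \bmod 2\pi)$ along the progression $N\equiv s_0 \pmod T$, the angle $N\theta+\varphi$ can be placed in any prescribed open arc $A$, for infinitely many such $N$. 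I would choose $A$, and enlarge the gaps between the $j_i$ where necessary, so that the following hold throughout $A$:
\begin{itemize}
\item[(i)] $\cos((N+j_i)\theta+\varphi)>0$ for every $i$, and the numbers $|\lambda|^{j_i}\cos((N+j_i)\theta+\varphi)$ are strictly increasing in $i$;
\item[(ii)] these $\ell+1$ values all lie in a narrow multiplicative band $[c,c(1+\delta)]$.
\end{itemize}
Then the $u_{M_i}$ all lie in an interval $[X,(1+\delta)X]$ with $X\approx 2|a||\lambda|^N c$, and they appear there in the prescribed order. Finding offsets and an arc satisfying (i) and (ii) is a finite combinatorial choice. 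Since $\{j\theta\}$ is dense, one can pick $j_i$ in the right residue classes such that the angles $j_i\theta$ fall where the cosines compensate the growth factors $|\lambda|^{j_i}$.

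The main obstacle is excluding every other index $m\notin\{M_i\}$ with $u_m\in[X,(1+\delta)X]$. There are two sources of such intruders.
\begin{itemize}
\item \emph{Small indices.} If $m<N-C$ for a suitable constant $C$, then $|u_m|\le 3|a||\lambda|^m<X$ automatically.
\item \emph{Large indices.} For $m>N+C'$ an intrusion requires $|\cos(m\theta+\varphi)|\lesssim |\lambda|^{N-m}$. I would rule this out with a Baker-type lower bound on linear forms in logarithms, which gives $|u_m|\ge |\lambda|^{m(1-\varepsilon)}$ for large $m$. This confines intruders to $m\le(1+2\varepsilon)N$. The remaining range is not bounded in length, so for it I would use the stronger effective estimate, as in \cite{pronormality}: for $m\ge N+C'$ the value $|\cos(m\theta+\varphi)|$ is at least $|\lambda|^{-(m-N)/2}$, say, outside a set of $N$ of density tending to $0$ as $C'\to\infty$. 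I expect this density estimate to be the genuinely hard step.
\item \emph{The bounded middle window} $N-C\le m\le N+C'$. Here finitely many angles must be controlled, and I would build them into the choice of the arc $A$ by also requiring $|\lambda|^{j}\cos((N+j)\theta+\varphi)\notin[c,c(1+\delta)]$ for all $j\in[-C,C']\setminus\{j_i\}$.
\end{itemize}
Taking $N$ in the intersection of the positive-density set from equidistribution with the complement of the exceptional set gives infinitely many good $N$. Because every step is effective, such $N$ can be computed.
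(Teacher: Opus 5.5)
The paper does not prove this statement. It quotes it from \cite{pronormality}, remarking only that the intervals $\Jcal_d(\gamma,\delta)$ are central to its proof. Your outline follows the same architecture as the paper's proof of the analogous \Cref{thm::main-DA-vn-version}, with residue classes mod $T$ added. Measured against that proof there is a genuine gap, and it lies in the step you treat as routine rather than the one you flag.

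In the paper's normalisation (ratios $v_{N+j}/v_N$ required to lie in a band $(1,\zeta)$), index $N+j$ lands in the band exactly when $\xi\mu^N\in\Jcal_j(1,\zeta)$. This is an arc of length less than $C_1(\zeta-1)\rho^{-j}$ (\Cref{thm::int-sizes-upper}), and these arcs are dense (\Cref{thm::density-of-intervals}). Your arc $A$ lies in $\Jcal_{j_\ell}(1,\zeta)$, so $|A|<C_1(\zeta-1)\rho^{-j_\ell}$. The later arcs $\Jcal_j(1,\zeta)$, $j>j_\ell$, have total length bounded only by $C_1(\zeta-1)\rho^{-j_\ell}/(\rho-1)$. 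That is of the same order, and larger whenever $\rho<2$. So nothing you say prevents the arcs with $j_\ell<j\le C'$ from covering $A$. The same problem occurs for $j$ strictly between consecutive $j_i$, where $\Jcal_j(1,\zeta)$ is even longer than the next target arc. Your middle-window step (``build the avoidance into $A$'') is therefore unjustified.

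It also makes your final step circular. You need the density $|A|/(2\pi T)$ of good $N$ to exceed an exceptional density that tends to $0$ only as $C'\to\infty$. But $A$ must avoid every arc with $j\le C'$, so it depends on $C'$ and may shrink faster.

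The missing ingredient is the paper's \emph{protection} property, which bounds the surviving measure from below uniformly in $C'$.
\begin{itemize}
\item Baker's theorem applied to the anchor points $\alpha_d=\pm\im\mu^{-d}$ gives \Cref{thm::two-interval-int}: $\Jcal_d(1,\zeta)$ and $\Jcal_{d'}(1,\zeta)$ with $d<d'$ can meet only if $d'-d>(\rho^{d}/(2C_1\zeta))^{1/C_0}-1$.
\item Together with the lower length bound of \Cref{thm::int-sizes-lower}, this yields \Cref{thm::helpful-1}.
\item The Hieronymi--Schulz lemma (\Cref{thm::hs}) then produces offsets as in \Cref{thm::stacking-intervals}. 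Part (b) there is exactly your middle-window condition. Part (c) gives $|Y_m|>\tau$ for all $m$, where $Y_m=I\setminus\bigcup_{k=d_\ell+1}^m\Jcal_k(1,\zeta)$ and $I=\Jcal_{d_\ell}(\gamma_\ell,\delta_\ell)$.
\end{itemize}

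The tail, which you leave open, then needs no density estimate. By \Cref{thm::log-bound}, $\xi\mu^n\in\Jcal_k(1,\zeta)$ forces $k<C_3\log(n+1)$. The same kind of conclusion follows from $|v_m|>\rho^m m^{-C}$ in \Cref{thm::baker-applied-to-vn}, whereas your weaker $\rho^{m(1-\varepsilon)}$ leaves a range of length of order $\varepsilon N$. Hence $Y_{\lfloor C_3\log(2N+1)\rfloor}$ contains an arc of length at least $\tau/(C_3\log(2N+1))$. The quantitative Kronecker bound \Cref{thm::falling-into-an-interval} then puts $\xi\mu^n$ into this arc for some $n\in[N,2N]$.

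Finally, the congruences $j_i\equiv s_i-s_0\pmod T$ are harmless for (i)--(ii), but not once exact avoidance is required. \Cref{thm::hs} selects the first arc to enter a free region and gives no control over its index mod $T$. Placing each offset in a prescribed class therefore needs a further argument.
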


\section{Preliminaries}

We denote the cardinality of a set $X$ by $\# X$.
For $x \in \rel$ and $y \in \rel \setminus \{0\}$, let $[\![  x ]\!]_y$ be the smallest distance from $x$ to an integer multiple of~$y$.
We denote by $\im$ the imaginary number and by $\Log$ the principal branch of the complex logarithm, which satisfies $\Log(x+y\im) = \im \theta + \log \sqrt{x^2 + y^2}$, $\theta = \arg(x+y\im) \in (-\pi,\pi]$ for all $(x, y)\in \rel^2\setminus\{(0,0)\}$.
We write $\torus$ for the unit circle in $\com$, and $\torus_+$ for $\{z \in \torus \colon \Rea(z) > 0 \}$.
For $z_1,z_2 \in \torus$, we write $\Delta(z_1,z_2)$ for the length of a shortest arc of $\torus$ connecting $z_1$ and $z_2$.

\subsection{Structures and their theories}

A \emph{structure} $\mathbb{M}$ consists of a domain $D$, constants $c_1,\ldots,c_k \in D$, predicates $P_1,\ldots,P_l$ where each $P_i \subseteq D^{\mu(i)}$ for some $\mu(i) \ge 1$, and functions $f_1,\ldots,f_m$ where each $f_i$ has the type $f_i \colon D^{\delta(i)} \to D$ for some $\delta(i) \ge 1$.
We denote such $\Mb$ by $\langle D; c_1,\ldots,c_k, P_1,\ldots,P_l, f_1,\ldots,f_m\rangle$.
We do not explicitly mention $=$ as a predicate, but assume that every structure has it.
A \emph{theory} is simply a set of \emph{sentences}, i.e.\ first-order formulas without free variables.
The (first-order) theory of a structure~$\mathbb{M}$ is the set of all well-formed sentences constructed from the symbols $c_1,\ldots, c_k,P_1,\ldots,P_l, f_1,\ldots, f_m$ as well as $\land,\lor,\lnot, \exists,\forall$ that are true in $\mathbb{M}$.
We write $\Mb \models \varphi$ to mean that $\varphi$ holds in $\Mb$.
A formula is \emph{existential} if it is of the form $\exists x_1 \cdots \exists x_m \colon \varphi(x_1,\ldots,x_m)$ for $\varphi$ quantifier-free.
A theory $\Tcal$ is \emph{decidable} if there exists an algorithm that takes a sentence $\varphi$ and decides whether $\varphi \in \Tcal$ and \emph{undecidable} otherwise.

\subsection{Algebraic numbers}
An algebraic number $\alpha$ is a complex number that is a root of a nonzero polynomial $p \in \rat[x]$.
The unique monic polynomial $p \in \rat[x]$ of the smallest degree that has $\alpha$ as a root is called the \emph{minimal polynomial} of $\alpha$.
The set of algebraic numbers forms a field, written $\alg$.
An algebraic number $\alpha$ can be represented in computer memory, for example, by its minimal polynomial $p$ together with sufficiently close rational approximations to $\Rea(\alpha)$ and $\Ima(\alpha)$.
In this representation, all usual arithmetic operations can be effectively performed on algebraic numbers~\cite[Chapter~4]{cohen2013course}.

\subsection{Baker's theorem}

Baker's theorem and its $p$-adic analogue are among the most important mathematical tools in the study of linear recurrence sequences.
Let $\Lambda = b_1\Log(\alpha_1) + \cdots + b_m \Log(\alpha_m)$, where $b_i \in \intg$, $\alpha_i \in \alg \setminus \{0\}$.
Such a $\Lambda$ is called a linear form in logarithms.
The following is a quantitative version of Baker's theorem, see e.g.,~\cite{baker-rational-sharp-version-1993}.
\begin{theorem}[Baker's theorem]\label{thm:Baker}
    There exists a computable constant $C > 0$ (that only depends on $\alpha_1,\ldots,\alpha_m$) such that whenever $B \ge 3$, $B > |b_1|,\ldots,|b_m|$, and $\Lambda \ne 0$, 
	\[
	|\Lambda| > B^{-C}.
	\]
\end{theorem}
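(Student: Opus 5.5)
This is the standard quantitative form of Baker's theorem, and I would not reprove it from scratch. Instead I would derive it as a direct specialisation of the explicit lower bound of Baker and W\"ustholz (or of Matveev), in the form cited in \cite{baker-rational-sharp-version-1993}. That bound reads as follows. Let $K=\rat(\alpha_1,\ldots,\alpha_m)$ have degree $D$ over $\rat$, and let $\log\alpha_1,\ldots,\log\alpha_m$ be \emph{any} fixed determinations of the logarithms. If $\Lambda=b_1\log\alpha_1+\cdots+b_m\log\alpha_m\neq 0$, then
\[
\log|\Lambda| > -c(m,D)\, h'(\alpha_1)\cdots h'(\alpha_m)\,\log B',
\]
where
\begin{itemize}
\item $B'=\max\{|b_1|,\ldots,|b_m|,e\}$,
\item $h'(\alpha_i)=\max\{h(\alpha_i),|\log\alpha_i|/D,1/D\}$ is the modified logarithmic Weil height,
\item $c(m,D)$ is an explicit constant, for example $18(m+1)!\,m^{m+1}(32D)^{m+2}\log(2mD)$.
\end{itemize}

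The steps are as follows. First, I take the determination $\log\alpha_i=\Log(\alpha_i)$, the principal branch. This is allowed because the bound holds for any fixed choice of logarithms, so the $\Lambda$ in the statement is exactly the linear form covered. Second, I set $C_0=c(m,D)\prod_i h'(\alpha_i)$. This quantity depends only on $\alpha_1,\ldots,\alpha_m$, and it is computable. Indeed, $D$ comes from the minimal polynomials, and the heights $h(\alpha_i)$ come from the Mahler measures of those polynomials, which can be approximated to arbitrary precision (see \cite[Chapter~4]{cohen2013course}). Since we only need some valid upper value, we may round $C_0$ up to a rational number. Third, I compare $\log B'$ with $\log B$. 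The hypotheses give $B\ge 3>e$ and $B>|b_i|$ for all $i$, so $B'\le B$ and $\log B'\le\log B$. Combining these, $|\Lambda|>\exp(-C_0\log B)=B^{-C_0}$, so $C=C_0$ works.

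The only real content, and hence the main obstacle if one insisted on a self-contained proof, is the Baker--W\"ustholz inequality itself. Its proof uses the transcendence method:
\begin{itemize}
\item build an auxiliary function $\Phi(z)=\sum p(\lambda)\,\alpha_1^{\lambda_1 z}\cdots\alpha_{m-1}^{\lambda_{m-1}z}\cdots$ with integer coefficients, using Siegel's lemma, so that it vanishes to high order at many integer points;
\item assume $|\Lambda|$ is tiny and extrapolate this vanishing via the maximum principle and Liouville-type estimates;
\item obtain a contradiction from a multiplicity (zero) estimate on the group variety $\mathbb{G}_a\times\mathbb{G}_m^{m}$.
\end{itemize}
I would not reproduce this. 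In line with how the paper treats it, I would present the theorem as a cited black box, and note only the three bookkeeping points above: the branch choice, the computability of $C$, and the replacement of $B'$ by $B$ using $B\ge 3$.
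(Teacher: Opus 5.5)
Your proposal is correct and takes the same approach as the paper, which also does not prove Baker's theorem but cites the Baker--W\"ustholz bound \cite{baker-rational-sharp-version-1993} as a black box. Your three bookkeeping steps are exactly what is needed to specialise that bound to the stated form: taking the principal branch $\Log$ as an admissible determination of the logarithms, bounding $c(m,D)\prod_i h'(\alpha_i)$ from above by a computable rational, and using $\log\max\{|b_i|,e\}\le\log B$, which follows from $B\ge 3$ and $B>|b_i|$.
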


\begin{lemma}
	\label{lem::baker-distance-from-alpha-n-to-b}
	Let $\alpha, \beta \in \torus \cap \alg$. There exists a computable constant $C_0 > 0$ such that for any $n \in \nat$ satisfying $\alpha^n \ne \beta$, we have that	
	\[
	\Delta(\alpha^n, \beta) > \frac{1}{(\max\{2, n\})^{C_0}}.
	\]
\end{lemma}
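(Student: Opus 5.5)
The plan is to reduce the arc distance on $\torus$ to a linear form in logarithms and then apply Baker's theorem (\Cref{thm:Baker}). Since $\alpha,\beta \in \torus$, we have $\Log(\alpha) = \im\theta_\alpha$ and $\Log(\beta) = \im\theta_\beta$ with $\theta_\alpha,\theta_\beta \in (-\pi,\pi]$. Moreover, $\Delta(\alpha^n,\beta) = [\![ n\theta_\alpha - \theta_\beta ]\!]_{2\pi}$. Let $k \in \intg$ be the integer attaining this minimum. Then
\[
\Delta(\alpha^n,\beta) = |n\theta_\alpha - \theta_\beta - 2\pi k| = |n\Log(\alpha) - \Log(\beta) - 2k\Log(-1)|,
\]
using $\Log(-1) = \im\pi$. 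This is a linear form $\Lambda$ in the logarithms of the fixed algebraic numbers $\alpha,\beta,-1$, with integer coefficients $n,-1,-2k$.

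Next we bound the coefficients. Because $\Delta \le \pi$, we get $|2\pi k| \le |n\theta_\alpha| + |\theta_\beta| + \pi \le (n+2)\pi$, so $|2k| \le n+2$. Hence every coefficient has absolute value below $B \coloneqq n+3 \ge 3$. Nonvanishing follows from the hypothesis: if $\Lambda = 0$, then exponentiating gives $\alpha^n\beta^{-1} = 1$, i.e.\ $\alpha^n = \beta$, which is excluded. So \Cref{thm:Baker} applies, with a computable constant $C$ depending only on $\alpha,\beta,-1$, and gives $\Delta(\alpha^n,\beta) = |\Lambda| > (n+3)^{-C}$.

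Finally we convert to the stated form. For $n \ge 2$ we have $n+3 \le n^{3}$, so $(n+3)^{-C} \ge n^{-3C}$. For $n \in \{0,1\}$ the bound reads $(n+3)^{-C} \ge 4^{-C} = 2^{-2C}$. Taking $C_0 = 3C$ (enlarged slightly if needed, to make the inequality strict) yields $\Delta(\alpha^n,\beta) > (\max\{2,n\})^{-C_0}$ for all $n$ with $\alpha^n \ne \beta$.

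No step is a real obstacle. The only care needed is choosing $k$ so that the linear form equals the arc distance exactly, and checking that $|k|$ grows only linearly in $n$, so that Baker's height parameter $B$ is polynomial in $n$.
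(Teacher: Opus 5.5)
Your proposal is correct and follows essentially the same route as the paper: write $\Delta(\alpha^n,\beta)$ as a linear form in $\Log(\alpha),\Log(\beta),\Log(-1)$ with coefficients of size $O(n)$, apply Baker's theorem, and then absorb the $n+O(1)$ base into $\max\{2,n\}$ by enlarging the exponent. Your use of even multiples $2k\Log(-1)$, with $k$ chosen so that the form equals the arc distance exactly, is a small improvement: nonvanishing then follows directly from $\alpha^n\ne\beta$, so you avoid the paper's separate case $\alpha^n=-\beta$.
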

\begin{proof}
	If $\alpha^n = -\beta$, $\Delta(\alpha^n, \beta) = \pi$ and any $C_0 > 0$ suffices.
    Else, we have that
	\begin{align*}
		\Delta(\alpha^n, \beta) &= [\![|n\Log(\alpha) - \Log(\beta)| ]\!]_{2\pi}\\
		&\ge  |n \Log(\alpha) - \Log(\beta) - k\pi\im | \\
		&= |n \Log(\alpha) - \Log(\beta) - k\Log(-1)| \ne 0
	\end{align*}
	where $-n-1 \le k \le n+1$.
	Applying Baker's theorem (\Cref{thm:Baker}), there is a computable constant $C > 0$ such that, setting $B = \max \, \{3, n+2\}$, 
	\[
	\Delta(\alpha^n, \beta) > \frac{1}{(\max\{3, n+2\})^{C}}.
	\]
	Finally, take $C_0 = 2C$. Then, $C_0 > 0$ and we claim that $(\max\{2,n\})^{C_0} \ge (\max\{3,n+2\})^{C}$. 
    Indeed, for $n \le 1$, $2^{2C} \ge 3^C$ as $C > 0$, and for $n \ge 2$, $n^{2C} \ge (n+2)^{C}$ as $2\log(n) \ge \log(n+2)$ for $n \ge 2$.
    The lemma follows.
\end{proof}

\subsection{Linear recurrence sequences}
\label{sec::lrs}

A sequence $\seq{u_n}$ over a ring $R$ is a \emph{linear recurrence sequence} (LRS) over $R$ if there exist $d \ge 0$ and $a_1,\dots,a_d \in R$, where $a_d \ne 0$ when $d \ge 1$, such that
\begin{equation}\label{eq::lrs-2}
	u_{n+d} = a_1u_{n+d-1}+ \cdots +a_du_n
\end{equation}
for all $n \in \nat$.
The smallest such $d$ is called the \emph{order} of $\seq{u_n}$.
In this paper, we work with LRS over $\intg$, which we also call \emph{integer LRS}.
For example, the Fibonacci sequence satisfies $u_{n+2} = u_{n+1} + u_n$ for all $n \in \nat$ and is an integer LRS of order two.
We refer the reader to the book~\cite{everest2003recurrence} for a detailed account of LRS\@.

Let $R \subseteq \alg$ be a ring and $\seq{u_n}$ be an LRS over $R$ of order~$d$.
Then there exist unique $a_1,\ldots,a_d \in R$ (with $a_d \ne 0$) such that $\seq{u_n}$ satisfies the recurrence relation \eqref{eq::lrs-2}.
The \emph{characteristic polynomial} of $\seq{u_n}$ is $p(x) = x^d - \sum_{i=1}^d a_i x^{d-i}$.
Suppose $p$ has the (distinct) roots $\lambda_1,\dots,\lambda_m \in \alg$, called the \emph{characteristic roots} of $\seq{u_n}$.
Then there exist unique non-zero polynomials $q_1,\ldots,q_m \in \alg[x]$ such that~\eqref{eq::lrs-1} holds for all $n \in \nat$ and $\deg(q_i)$ is at most the multiplicity of $\lambda_i$ as a root of the characteristic polynomial minus 1.
Equation~\eqref{eq::lrs-1} is known as the \emph{exponential-polynomial form} of $\seq{u_n}$.
A characteristic root $\lambda_i$ is called \emph{non-repeated} if $q_i$ is constant.
The sequence $\seq{u_n}$ is called \emph{diagonalisable} (alternatively, \emph{simple}) if every $\lambda_i$ is non-repeated.
A characteristic root $\lambda_i$ is called \emph{dominant} if $|\lambda_i| \ge |\lambda_j|$ for all $1 \le j \le m$.
We say that $\seq{u_n}$ is \emph{non-degenerate} if $z \coloneqq \lambda_i/\lambda_j$ is not a root of unity for all $i \ne j$, i.e., $z^k \ne 1$ for all $k \ne 0$.
For any LRS $\seq{u_n}$, there exists $L$ (that is effectively computable for integer LRS) such that the subsequences $\seq{u_{nL+r}}$ are non-degenerate for all $0 \le r < L$.
By the Skolem--Mahler--Lech theorem, every non-zero and non-degenerate LRS over $R$ (in fact, over any field of characteristic zero) has finitely many zeros.
We say that an integer LRS $\seq{u_n}$ is \emph{irreducible} if its characteristic polynomial is irreducible over~$\rat$.

Various fundamental decision problems of linear recurrence sequences are of central interest in computer science and mathematics~\cite{karimov2022decidable}.
The most famous example is the \emph{Skolem Problem}, which asks to decide whether a given integer LRS contains zero.
(An equivalent formulation is to decide whether a given integer linear \texttt{while} loop with a linear guard terminates~\cite{karimov2022linearloops}.)
At the time of writing, it is known to be decidable for non-degenerate LRS with at most 3 distinct dominant roots~\cite{tijdeman1984distance}, and remains open in general.

\subsection{LRS with two dominant roots}
\label{sec::lrs-two-roots}

In this section, fix an integer LRS $\seq{u_n}$ satisfying ($\maltese$).
If both dominant roots were real, then their ratio would be $-1$, which is a root of unity.
Hence, because the dominant roots are closed under complex conjugation, we have two non-real dominant roots $\lambda, \overline{\lambda}$.
We will adopt the following notation throughout this paper: $u_n = v_n + r_n$, $v_n = a\lambda^n + \overline{a}\, \overline{\lambda}^n$ where $a \in \alg \setminus \{0\}$, $\rho = |\lambda|  > 0$, $\lambda = \rho\mu$ with $\mu = e^{\im\theta} \in \torus$, and $a = |a|\xi$ with $\xi = e^{\im\varphi} \in \torus$.
We have that
\[
a\lambda^n + \overline{a}\, \overline{\lambda}^n = |a|\rho^n(\xi\mu^n + \overline{\xi}\,\overline{\mu}^n) = 2|a|\rho^n \cos(n\theta + \varphi)
\]
and, by the assumption on the dominant roots, $|r_n| = O((\rho-\varepsilon)^n)$ for all sufficiently small $\varepsilon > 0$ (where the implied constant is effective).
By non-degeneracy, $\theta$ is not a rational multiple of~$\pi$.
Finally, we mention that both $\seq{v_n}$ and $\seq{r_n}$ are themselves LRS over $\ralg$.

By Kronecker's theorem in Diophantine approximation~\cite{gonek2016kronecker}, because $\theta$ is not a rational multiple of $\pi$, we have that $\seq{\cos(n\theta + \varphi)}$ is dense in $[-1,1]$, which follows from $\seq{\xi \mu^n}$ being dense in $\torus$.
Then $\rho > 1$ due to \cite[Lemma 13]{pronormality}.

We next give a few lemmas for our class of LRS\@.
The following two results are proven using Baker's theorem.

\begin{lemma}[\unexpanded{\cite[Lemma 3]{vereshchagin1985occurrence}}]
	\label{thm::baker-applied-to-vn}
	There exist computable constants $N, C > 0$ such that for all $n \ge N$, $|v_n| > \frac{\rho^n}{n^C}$.
\end{lemma}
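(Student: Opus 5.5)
We need to prove: there exist computable $N, C > 0$ such that $|v_n| > \rho^n/n^C$ for all $n \ge N$, where $v_n = 2|a|\rho^n\cos(n\theta+\varphi)$.

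The plan is to reduce the bound on $|v_n|$ to a lower bound on how close the point $\xi\mu^n$ can come to $\pm\im$, and then apply \Cref{lem::baker-distance-from-alpha-n-to-b}. By the formula in \Cref{sec::lrs-two-roots}, $|v_n| = 2|a|\rho^n|\cos(n\theta+\varphi)|$. Since $|a|$ is a fixed positive constant, it suffices to show $|\cos(n\theta+\varphi)| \ge n^{-C'}$ for large $n$. We have $\cos(n\theta+\varphi) = 0$ exactly when $\xi\mu^n \in \{\im,-\im\}$, equivalently when $\mu^{2n} = -\overline{\xi}^{\,2}$.

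\textbf{Step 1: reduce to an algebraic target.} Note that $\mu = \lambda/\rho$ and $\mu^2 = \lambda/\overline{\lambda}$ are algebraic, and $\xi^2 = a/\overline{a}$ is algebraic. So $\alpha \coloneqq \mu^2$ and $\beta \coloneqq -\overline{\xi}^{\,2}$ both lie in $\torus \cap \alg$, and they are computable from the data of $\seq{u_n}$.

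\textbf{Step 2: no exact zeros for large $n$.} The equation $\alpha^n = \beta$ holds for at most one $n$. Indeed, if it held for $n \neq n'$, then $\alpha^{n-n'} = 1$, so $(\lambda/\overline{\lambda})^{n-n'} = 1$, contradicting ($\maltese$). This also matches the fact that $v_n$ is a non-degenerate LRS, which by Skolem--Mahler--Lech has finitely many zeros. Testing the equality for a given $n$ is effective, so we can choose $N$ beyond this exceptional index. Computing the index itself effectively is not needed; it suffices that $N$ exceeds it, and with a little more care one can bound it, e.g.\ through Step 3.

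\textbf{Step 3: the Baker bound.} For $n \ge N$, \Cref{lem::baker-distance-from-alpha-n-to-b} gives $\Delta(\mu^{2n}, -\overline{\xi}^{\,2}) > n^{-C_0}$. Multiplying by $\xi^2$, this says the doubled angle $2(n\theta+\varphi)$ is at distance more than $n^{-C_0}$ from $\pi \pmod{2\pi}$. Hence $n\theta+\varphi$ is at distance more than $n^{-C_0}/2$ from $\pi/2 + \pi\intg$. Using $|\cos x| \ge \tfrac{2}{\pi}\,\mathrm{dist}(x, \pi/2+\pi\intg)$, we obtain $|\cos(n\theta+\varphi)| \ge n^{-C_0}/\pi$.

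\textbf{Step 4: absorb constants.} This gives $|v_n| \ge (2|a|/\pi)\rho^n n^{-C_0}$. Choosing $C = C_0+1$ and enlarging $N$ so that $n \ge \pi/(2|a|)$ yields $|v_n| > \rho^n/n^C$.

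The main subtlety is Step 2, namely excluding the possibility that $\alpha^n = \beta$ for arbitrarily large $n$. This is handled by non-degeneracy as above. Everything else is a direct application of the earlier lemma.
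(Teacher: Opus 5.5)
Your analytic argument (Steps 1, 3 and 4) is correct. It is the standard Baker-type proof: the paper gives no proof of this lemma, cites \cite{vereshchagin1985occurrence}, and remarks only that the result follows from Baker's theorem. Your reduction of $|\cos(n\theta+\varphi)|$ to $\Delta(\mu^{2n},-\overline{\xi}^{\,2})$, followed by \Cref{lem::baker-distance-from-alpha-n-to-b}, is that route.

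The gap is in the \emph{computability} of $N$, which is part of the statement and is used later, e.g.\ for the computable $M$ in \Cref{thm::from-un-to-vn}. If $v_{n_0}=0$ for some $n_0$, then necessarily $N>n_0$, so you need an effective upper bound on $n_0$. Your remark that testing $\alpha^n=\beta$ for a given $n$ is effective only gives a search that halts if a solution exists. It can never certify that no solution lies beyond a given point, so it does not produce $N$. Step 3 does not help either. \Cref{lem::baker-distance-from-alpha-n-to-b}, like Baker's theorem with its hypothesis $\Lambda\neq 0$, says nothing about indices where $\alpha^n=\beta$ holds exactly. The distances near a putative solution, e.g.\ $\Delta(\alpha^{n_0\pm 1},\beta)=\Delta(\alpha,1)$, are fixed constants compatible with $n_0$ being arbitrarily large.

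The missing ingredient is an effective bound on the unique solution of $\alpha^{n}=\beta$, and this needs a different tool. One option: let $h$ be the absolute logarithmic Weil height. Then $\alpha^{n_0}=\beta$ gives $n_0\,h(\alpha)=h(\beta)$. Moreover $h(\alpha)>0$ by Kronecker's theorem, because $\alpha=\lambda/\overline{\lambda}$ is non-zero and not a root of unity. Both heights are computable reals via Mahler measures, and positivity of $h(\alpha)$ can be certified. Hence $n_0\le h(\beta)/h(\alpha)$ is a computable bound. An equivalent formulation: since $\alpha$ is not a root of unity, some place $v$ of $\rat(\alpha,\beta)$ (a prime ideal or a complex embedding) has $|\alpha|_v\neq 1$, and then $n_0=\log|\beta|_v/\log|\alpha|_v$. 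With $N$ chosen larger than this bound, than $2$, and than $\pi/(2|a|)$, the rest of your proof goes through unchanged.
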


\begin{theorem}[\unexpanded{\cite[Theorem~3]{tijdeman1984distance}}]
	\label{thm::baker-application-distance-between-two-terms}
	There exist computable constants $N, C > 0$ such that for all $n \ge N$ and $m < n$,
	\[
	|u_n - u_m| > \rho^n n^{-C (\log (n+1))^2}.
	\]
\end{theorem}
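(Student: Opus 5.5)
The plan is to split according to the gap $k \coloneqq n-m$ and treat the dominant part $v_n - v_m$ separately from the error $r_n - r_m$. Throughout, $|r_n|,|r_m| = O((\rho-\varepsilon)^n)$. For large $n$ this is much smaller than $\rho^n n^{-C(\log(n+1))^2}$, because $n^{-C(\log n)^2} = e^{-C(\log n)^3}$ decays subexponentially. So the error terms can be absorbed at the end, and it suffices to bound $|v_n - v_m|$ from below by $2\rho^n n^{-C(\log(n+1))^2}$.

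\emph{Case 1: large gap.} Suppose $k \ge K\log n$ for a suitable computable $K$. By \Cref{thm::baker-applied-to-vn}, $|v_n| > \rho^n n^{-C_1}$, while $|v_m| \le 2|a|\rho^m = 2|a|\rho^n\rho^{-k} \le 2|a|\rho^n n^{-K\log\rho}$. Choosing $K$ with $K\log\rho > C_1 + 1$ gives $|v_n - v_m| \ge \tfrac12\rho^n n^{-C_1}$, which is more than enough.

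\emph{Case 2: small gap,} $1 \le k < K\log n$. Here I would write
\[
v_n - v_m = 2\Rea\big(a\lambda^m(\lambda^k - 1)\big) = 2|a|\,|\lambda^k-1|\,\rho^m\cos\big(m\theta + \varphi + \arg(\lambda^k-1)\big).
\]
Since $\rho>1$, we have $|\lambda^k - 1| \ge \rho^k - 1 \gg 1$, and also $\rho^m = \rho^n\rho^{-k} \ge \rho^n n^{-K\log\rho}$. It therefore remains to bound $|\cos(\cdot)|$ from below. Set $\eta_k = \xi(\lambda^k-1)/|\lambda^k-1| \in \torus\cap\alg$. Then $|\cos(\cdot)|$ is comparable to $\min\{\Delta(\eta_k\mu^m, \im), \Delta(\eta_k\mu^m,-\im)\}$, which is a linear form
\[
m\Log\mu + \Log\eta_k - \Log(\pm\im) - 2j\pi\im .
\]
\Cref{lem::baker-distance-from-alpha-n-to-b} cannot be used verbatim, because $\eta_k$ varies with $k$. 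Instead I would invoke the explicit form of Baker's theorem, in which the constant depends on the logarithmic heights of the algebraic numbers. The degree of $\eta_k$ is bounded, since it lies in the fixed field $\rat(\lambda,\overline{\lambda},a,\overline{a})$ (up to a square root). Its height is $O(k) = O(\log n)$. The coefficients are bounded by $B = O(n)$. Baker then gives a lower bound of the shape $\exp(-C' h(\eta_k)\log B) = n^{-C''\log n}$. Multiplying the factors yields $|v_n-v_m| \ge \rho^n n^{-C'''\log n}$, which is stronger than the claimed $(\log(n+1))^2$ exponent. The stated exponent leaves slack for cruder height or degree bounds, for example if one works with $\lambda^k-1$ directly.

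\emph{Main obstacle.} The delicate point is the degenerate situation in Case 2 where the linear form vanishes, i.e.\ $\eta_k\mu^m = \pm\im$ exactly, so that $v_n = v_m$. This is equivalent to
\[
\overline{\mu}^{\,2m} = -\eta_k^2 .
\]
Because $\mu$ is not a root of unity, for each fixed $k$ this holds for at most one $m$. I would then need to show that such coincidences occur only for $n$ below a computable bound $N$, or else handle them through the $r$-terms; the latter does not seem to work, since $r$ is too small to help. My first attempt would be a height comparison: the equation forces $h(\mu^{2m}) = 2m\,h(\mu) \le 2h(\eta_k) = O(k)$. If $h(\mu) > 0$, this contradicts $m \approx n$ and $k = O(\log n)$ for large $n$. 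If instead $h(\mu)=0$, then $\mu$ is an algebraic number all of whose conjugates have modulus one, so by Kronecker's theorem it would be a root of unity, contradicting ($\maltese$). Carrying out this height argument cleanly is the step I expect to require the most care.
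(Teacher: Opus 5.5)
Your proposal is correct. The paper gives no proof of this statement; it imports it from Mignotte--Shorey--Tijdeman. Your argument is essentially the standard Baker-type proof behind that citation:
\begin{itemize}
\item split according to the gap $k=n-m$;
\item for $k\ge K\log n$, let $v_n$ dominate via \Cref{thm::baker-applied-to-vn};
\item for $k<K\log n$, apply a height-explicit lower bound for linear forms in logarithms to $m\Log\mu+\Log\eta_k+b\Log\im$, where $\eta_k$ has bounded degree and height $O(k)$.
\end{itemize}

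You are right on two points. First, the version of Baker's theorem quoted in the paper, whose constant depends on the fixed $\alpha_i$, does not suffice here because $\eta_k$ varies. Second, with the Baker--W\"ustholz bound, which is linear in each height and in $\log B$, your argument gives the stronger factor $n^{-C\log n}$. The $(\log(n+1))^2$ in the stated exponent presumably reflects the weaker estimates available in 1984.

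The step you flagged as delicate is routine. The equality $v_n=v_m$ forces $2(n-k)\,h(\mu)=h(\eta_k^2)=O(k)$. For large $n$ with $k<K\log n$ this is impossible, since $h(\mu)>0$ and a positive lower bound for $h(\mu)$ is computable.

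One correction in how you justify $h(\mu)>0$. It is not true that an algebraic number whose conjugates all have modulus one must be a root of unity; $(3+4\im)/5$ is a counterexample. What you need is the following: $h(\mu)=0$ forces $\mu$ to be an algebraic integer with all conjugates in the closed unit disc, and only then does Kronecker's theorem give that $\mu$ is a root of unity, contradicting $(\maltese)$.
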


In particular, by \Cref{thm::baker-applied-to-vn}, $|u_n| \to \infty$ as $n \to \infty$.
Because $\cos(n\theta+\varphi)$ is dense in $[-1,1]$, it follows that $u_n$ is infinitely often positive and infinitely often negative.
In particular, $U \coloneqq \{u_n \colon n \in\nat\} \cap \nat$ is infinite.

In the following lemmas, we show that, for our purposes, we can work with $v_n$ instead of $u_n$, provided that $n$ is sufficiently large.

\begin{lemma}
	\label{thm::from-un-to-vn}
	There exists computable $M$ with the following property.
	For all distinct $n_1, n_2 \in \nat$ with $n_1 \ge M$ we have that $u_{n_1} \ne u_{n_2}$, $u_{n_1} \ne 0$, $\operatorname{sign}(u_{n_1}) = \operatorname{sign}(v_{n_1})$, and $\operatorname{sign}(u_{n_1}-u_{n_2}) = \operatorname{sign}(v_{n_1}-v_{n_2})$.
\end{lemma}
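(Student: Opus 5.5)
The plan is to compare the perturbation $r_n$ with the Baker-type lower bounds of \Cref{thm::baker-applied-to-vn} and \Cref{thm::baker-application-distance-between-two-terms}, which say that $v_n$ and differences of terms are only polynomially (or quasi-polynomially) smaller than $\rho^n$. Since $|r_n| \le K(\rho-\varepsilon)^n$ for some fixed $0<\varepsilon<\rho-1$ and an effective constant $K$, the ratio $|r_n|/\rho^n$ decays exponentially and is eventually dominated by any quasi-polynomial decay. All thresholds will therefore be effective.

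The first step handles the sign and nonvanishing of $u_{n_1}$. For $n_1 \ge N$, \Cref{thm::baker-applied-to-vn} gives $|v_{n_1}| > \rho^{n_1} n_1^{-C}$. We choose $M_1$ computably so that $K(\rho-\varepsilon)^n < \rho^n n^{-C}$ for all $n\ge M_1$; this is possible because $((\rho-\varepsilon)/\rho)^n n^C \to 0$ effectively. Then $|r_{n_1}|<|v_{n_1}|$, so $u_{n_1}\neq 0$ and $\operatorname{sign}(u_{n_1})=\operatorname{sign}(v_{n_1})$.

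The second step handles differences, by splitting on which index is larger. If $n_1>n_2$, \Cref{thm::baker-application-distance-between-two-terms} gives $|u_{n_1}-u_{n_2}| > \rho^{n_1} n_1^{-C'(\log(n_1+1))^2}$, so $u_{n_1}\ne u_{n_2}$. For the signs, we note that $v_{n_1}-v_{n_2} = (u_{n_1}-u_{n_2}) - (r_{n_1}-r_{n_2})$, and $|r_{n_1}-r_{n_2}| \le 2K(\rho-\varepsilon)^{n_1}$ (assuming $\rho-\varepsilon\ge1$, or else using $\max\{1,\rho-\varepsilon\}^{n_1}$). Hence once $n_1 \ge M_2$, where $2K(\rho-\varepsilon)^{n}<\rho^n n^{-C'(\log(n+1))^2}$ for all $n\ge M_2$, the perturbation cannot flip the sign. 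If instead $n_2>n_1$, the Tijdeman bound applies with $n_2$ as the larger index. Since $n_2 > n_1 \ge M$, it gives $|u_{n_2}-u_{n_1}| > \rho^{n_2} n_2^{-C'(\log(n_2+1))^2}$, while $|r_{n_1}-r_{n_2}|\le 2K(\rho-\varepsilon)^{n_2}$. The same threshold $M_2$, applied at $n_2$, then works. We take $M=\max\{N,N',M_1,M_2\}$.

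The main obstacle is mostly bookkeeping. One point is ensuring the bound on $r_n$ is effective and uniform, which follows from its exponential-polynomial form with roots of modulus below $\rho$. The other is handling the case $n_2<M$ in the difference: there the larger index is $n_1\ge M$, so the Tijdeman bound still applies since it holds for all $m<n$. Computability of $M$ follows from computability of $C, C', N, N', K, \varepsilon$ and of $\rho$ as an algebraic number.
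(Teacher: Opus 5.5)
Your proposal is correct and takes essentially the same route as the paper. You bound $|r_n|$ by an exponential strictly below $\rho^n$, use the Vereshchagin bound $|v_n|>\rho^n n^{-C}$ to get $|v_{n_1}|>|r_{n_1}|$, and apply Tijdeman's bound at the larger of the two indices to get $|u_{n_1}-u_{n_2}|>|r_{n_1}-r_{n_2}|$; your explicit treatment of the case $n_2>n_1$ is exactly what the paper's ``exchanging $n_1$ and $n_2$ if necessary'' amounts to.
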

\begin{proof}
	Let $N,C$ be as in \Cref{thm::baker-application-distance-between-two-terms}.
	Then for all $n_1 \ne n_2$ such that at least one $n_i \ge N$, we have that $u_{n_1} \ne u_{n_2}$.
	Since $|r_n| = O((\rho-\varepsilon)^n)$ for all sufficiently small $\varepsilon > 0$, applying \Cref{thm::baker-applied-to-vn} we can compute $N' \ge N$ such that $|v_n| > |r_n|$ and hence $\operatorname{sign}(u_n) = \operatorname{sign}(v_n)$ and $u_n \ne 0$ for all $n \ge N'$.
	Consider $n_1 \ge N'$ and $n_1 \ne n_2$.
	Exchanging $n_1$ and $n_2$ if necessary, we can assume that $n_1 > n_2$.
	Then
	\begin{equation}
		\label{eq::2-roots-reduction-to dominant}
		v_{n_1} - v_{n_2} = u_{n_1} - u_{n_2} + (r_{n_2} - r_{n_1}).
	\end{equation}
    Since $\seq{r_n} = O((\rho-\varepsilon)^n)$ for all sufficiently small $\varepsilon > 0$, we can compute $\widetilde{N}$ such that
	\[
	|r_{n_2} - r_{n_1}| < (\rho -\varepsilon)^{n_1}
	\]
	for all $n_1, n_2$ with $n_1 \ge \widetilde{N}$.
	Applying \Cref{thm::baker-application-distance-between-two-terms}, for all sufficiently large $n_1$ and (any) $n_2 < n_1$, we have that
	\[
	|u_{n_1}-u_{n_2}| > \rho^{n_1} n_1^{-C (\log (n_1+1))^2} > (\rho - \varepsilon)^{n_1}  > |r_{n_2}-r_{n_1}|
	\]
	which implies (together with \Cref{eq::2-roots-reduction-to dominant}) that
	\[
	\operatorname{sign}(v_{n_1} - v_{n_2}) = \operatorname{sign}(u_{n_1} - u_{n_2})  \in \{-, +\}.
	\qedhere
	\]
\end{proof}

\begin{lemma}
	\label{thm::vn-close-to-un}
	For every $\varepsilon \in \rat_{>0}$ there exists computable $M_\varepsilon$ such that for all $n \ge M_\varepsilon$
	\[
	1 - \varepsilon < \frac{|u_n|}{|v_n|} < 1+\varepsilon.
	\]
\end{lemma}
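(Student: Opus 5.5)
The plan is to write $u_n = v_n + r_n$ and show that $|r_n|/|v_n| \to 0$ at an effectively computable exponential rate. Once that holds, the ratio bound is immediate. Indeed,
\[
\frac{|u_n|}{|v_n|} = \Bigl|1 + \frac{r_n}{v_n}\Bigr| ,
\]
and by the triangle inequality this lies in $(1-\varepsilon, 1+\varepsilon)$ as soon as $|r_n| < \varepsilon |v_n|$.

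First, fix a small rational $\eta > 0$ with $\rho - \eta > 1$; this is possible since $\rho > 1$. By the standing assumption of \Cref{sec::lrs-two-roots}, $|r_n| = O((\rho-\eta)^n)$, so we can compute a constant $K$ with $|r_n| \le K(\rho-\eta)^n$ for all $n$. Concretely, $\seq{r_n}$ is an LRS over $\ralg$ whose characteristic roots all have modulus strictly less than $\rho$. We pick $\eta$ smaller than the gap between $\rho$ and the largest non-dominant modulus, and bound each term $q_i(n)\lambda_i^n$ explicitly using computable bounds on $|q_i|$.

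Second, invoke \Cref{thm::baker-applied-to-vn}. This gives computable $N, C$ with $|v_n| > \rho^n n^{-C}$ for $n \ge N$. Combining the two bounds yields
\[
\frac{|r_n|}{|v_n|} < K n^C \Bigl(\frac{\rho-\eta}{\rho}\Bigr)^n .
\]

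Third, the right-hand side tends to $0$, since polynomial growth loses to exponential decay. We compute $M_\varepsilon \ge N$ beyond which it is below $\varepsilon$. For example, we can take logarithms and use an explicit bound such as $\log n \le \sqrt{n}$ to solve for a threshold; this is routine. For $n \ge M_\varepsilon$ we have $v_n \ne 0$, and the desired two-sided bound follows.

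The only mildly delicate point is effectivity of the constant $K$ in the bound on $r_n$, which requires computing the non-dominant roots and coefficients as algebraic numbers. This is standard via \cite[Chapter~4]{cohen2013course}. The paper already treats this bound as effective in the proof of \Cref{thm::from-un-to-vn}, so I would simply cite that.
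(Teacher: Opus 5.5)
Your proposal is correct and follows the paper's proof. The paper likewise writes $u_n/v_n = 1 + r_n/v_n$ and concludes from \Cref{thm::baker-applied-to-vn}, together with the standing bound $|r_n| = O((\rho-\varepsilon)^n)$, that $r_n/v_n \to 0$ effectively; you simply make the constants $K$, $\eta$, $N$, $C$ and the threshold $M_\varepsilon$ explicit (your extra requirement $\rho-\eta>1$ is unnecessary but harmless).
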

\begin{proof}
	We have
	\[
	\frac{u_n}{v_n} = 1 + \frac{r_n}{v_n}.
	\]
	It remains to observe that by \Cref{thm::baker-applied-to-vn},
	\[
	\lim_{n\to \infty} \frac{r_n}{v_n} = 0
	\]
	effectively.
\end{proof}

\subsection{The Hieronymi--Schulz interval stacking lemma}

The following lemma plays a crucial role in the proof of \Cref{thm::hs-randomness}, as well as our \Cref{thm::main-DA,thm::main-DA-function-version}.
\begin{lemma}[\unexpanded{\cite[Lemma~2.1]{hieronymi2022strong}}]
	\label{thm::hs}
	Let $I \subseteq \torus$ be an interval and $\seq{I_n}$ be a sequence of intervals such that $\sum_{n=0}^\infty |I_n \cap I| < |I|$ and $\seq{I_n}$ is dense in $I$.
	That is, for every non-empty open $J \subseteq I$ there exist infinitely many $n$ such that $I_n$ intersects $J$.
	Then there exist infinitely many $k$ such that $I_k \subseteq I$ and $I_k \cap I_n = \varnothing$ for all $n < k$.
\end{lemma}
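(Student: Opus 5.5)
We prove the stacking lemma by a pigeonhole-style measure argument on the ``first hitting'' structure of the intervals. For each $k$, define the exposed part $E_k \coloneqq (I_k \cap I) \setminus \bigcup_{n<k} I_n$. Call $k$ \emph{good} if $I_k \subseteq I$ and $I_k \cap I_n = \varnothing$ for all $n<k$. Suppose, for contradiction, that there are only finitely many good $k$; we may then discard an initial segment and assume there are none. The new first index in the tail might now become good. To avoid this, we do not discard anything. Instead we fix $K$ such that no $k \ge K$ is good, and work with the finite union $F \coloneqq \bigcup_{n<K} I_n$.

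The key observation is the following. The set $G \coloneqq I \setminus \overline{\bigcup_{n} I_n}$ has positive measure, because $\sum_n |I_n\cap I| < |I|$ and the boundary points are countably many. Since $\seq{I_n}$ is dense in $I$, the set $\bigcup_n I_n$ is dense in $I$. Hence $G$ has empty interior but positive measure.

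We aim to locate an interval $J \subseteq I$ that is disjoint from $F$ and that meets only few of the later intervals. Consider the open set $O \coloneqq \operatorname{int}(I) \setminus \overline{F}$. It is a finite union of open arcs and has measure at least $|I| - \sum_{n<K}|I_n \cap I| > \sum_{n\ge K}|I_n\cap I|$.

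Take the least $k \ge K$ such that $I_k$ meets $O$. Such $k$ exists by density. We would like $I_k$ to be contained in a component $A$ of $O$. It then automatically avoids $F$, and it avoids $I_n$ for $K\le n<k$ because those do not meet $O$. That would make $k$ good, a contradiction.

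The obstacle is that $I_k$ may stick out of $A$, crossing into $F$ or out of $I$. To handle this, we argue recursively by shrinking. If $I_k$ meets $A$ but is not contained in it, then $I_k$ covers an endpoint region of $A$. We replace $O$ by $O \setminus \overline{I_k}$. This costs measure at most $|I_k \cap I|$, since the part removed lies inside $A \subseteq I$. We then continue with the next index meeting the new open set.

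Iterating, the total measure removed over all steps is at most $\sum_{n \ge K}|I_n\cap I|$, which is strictly less than $|O|$. So a nonempty open set $O_\infty \subseteq O$ survives. It is open because each component of $O$ loses only neighbourhoods of its endpoints. More precisely, each step removes an arc that contains an endpoint of a current component, so components shrink from their ends and remain open arcs. The intersection must be handled carefully, but the measure bound gives a component of positive length in the limit, and its interior is a nonempty open set $J$.

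By density, some $I_m$ meets $J$. Take the first such $m$. At step $m$ either $I_m$ was contained in a component of the current open set, which makes $m$ good, or it was removed together with an endpoint region, in which case it cannot meet $J$. Both alternatives contradict our assumptions, which shows that good indices exist beyond every $K$, i.e.\ there are infinitely many of them.

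The main technical point is the limiting step: showing that the shrinking process leaves an open interval rather than merely a positive-measure set. We address this by noting that within each original component of $O$, the removed parts form two monotone families of end-arcs. These converge to two limit points, and the gap between them has length at least the component length minus the removed measure. Summing over the finitely many components, some gap is nonempty.
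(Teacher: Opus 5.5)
Your proof is correct, but it takes a different route from the paper's.

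The paper argues directly. Given $N$, it picks a component $J$ of $I\setminus\bigcup_{n\le N} I_n$ with $|J|>\sum_{n>N}|I_n\cap J|$. It observes that $J\setminus\bigcup_n I_n$ then has positive measure, and chooses two distinct points $z_1,z_2\in J$ lying in \emph{no} $I_n$. These points act as walls. The first $I_k$ meeting the arc $J'$ between $z_1$ and $z_2$ cannot contain either point, so by connectedness $I_k\subseteq J'$. It is disjoint from all earlier intervals by the minimality of $k$ and the choice of $J$.

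You instead argue by contradiction through an infinite trimming process. This needs two pieces of bookkeeping that the paper avoids. The first is the structural fact that removing the closure of a connected set not contained in any component only cuts off end-pieces, so each original component of $O$ survives as a single nested open arc. The second is the monotone-limit argument producing a nonempty open arc $J\subseteq\bigcap_j O_j$, where $O_j$ denotes the current open set after $j$ steps. Both steps work as you sketch them. You should also say explicitly why any $m$ with $I_m\cap J\neq\varnothing$ is actually processed at some step: $I_m$ meets every $O_j$, so your rule of always taking the least index that meets the current open set cannot skip it.

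Your route never needs to exhibit uncovered points. The paper's route buys brevity and a direct construction of the required $k>N$ with no limiting argument.

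One sentence in your second paragraph is false and should be removed. Since $\bigcup_n I_n$ is dense in $I$, its closure contains $I$, so $G=I\setminus\overline{\bigcup_n I_n}$ is empty rather than of positive measure. The closure of a countable union of intervals can add far more than countably many endpoints. The correct statement is that $I\setminus\bigcup_n I_n$ has positive measure and empty interior, which is essentially the fact the paper's proof exploits. You never use $G$ later, so this does not affect the validity of your argument.
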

\begin{proof}
	Let $N \in \nat$.
	We will construct $k > N$ with the required property.
	Let $J$ be a non-empty interval component of $I \setminus \bigcup_{n=0}^N I_n$ such that $|J| > \sum_{n=N+1}^\infty |I_n \cap J|$; such $J$ must exist by the assumption that $\sum_{n=0}^\infty |I_n \cap I| < |I|$.
	Let $z_1,z_2$ be two distinct points in $J$ outside $\bigcup_{n=0}^\infty I_n$.
	Further let $J' \subseteq J$ be an arc spanned by $z_1$ and $z_2$.
	We choose $k$ to be the smallest $n$ such that $I_n \cap J' \ne \varnothing$, which must exist by the density assumption.
	Such $I_k$ must also satisfy $I_k \subset J'$, as $z_1,z_2 \notin I_n$ for all $n$, and the conclusion follows.
\end{proof}

\section{Undecidability via simulation of counter machines}
\label{sec::how-to-prove-undec}

We defined the notion of a structure simulating counter machines in the Introduction.
We now formally prove that it implies undecidability of the attendant first-order theory.

\begin{proof}[Proof of \Cref{thm::how-to-prove-undec}]
    We will describe a procedure that takes as input a two-counter machine $\Mcal$, and outputs a formula $\Phi$ in the language of $\Mb$ that is true (in $\Mb$) if and only if $\Mcal$ halts.
	Recall that $\Mcal$ has counters $c_1,c_2$ (initialised to 1) that take positive integer values and instructions $1, \ldots, H$ for $H > 1$.
	The execution starts at line $l = 1$, and $l = H$ is the unique halting instruction.
	The trace of $\Mcal$ is $(0, \iota_0, c_{1,0}, c_{2,0} , 0, \iota_1, c_{1,1}, c_{2,1}, \ldots)$
	where $\iota_n, c_{1,n}, c_{2,n}$ are, respectively, the (next) instruction to be executed, the value of $c_1$, and the value of $c_2$ at time~$n$.
	We will construct a formula $\Phi$ stating that ``there exists $x$ such that $\mathsf{Seq}(x)$ is a finite sequence that is the trace of $\Mcal$ ending in the halting state'', which, by surjectivity of $\mathsf{Seq}$, is true if and only if $\Mcal$ halts.

	Define $\Phi \coloneqq \exists x \in D^l \colon
	\big(
	\Phi_{\mathsf{init}}(x) \land \Phi_{\mathsf{fin}}(x) \land \Phi_{\delta}(x)
	\big)$
	where
	\begin{align*}
		&\Phi_{\mathsf{init}}(x) \coloneqq \exists y_1,y_2,y_3,y_4 \in D^m \colon
		\bigg(
        \forall y \colon \lnot \frm{succ}(x, y, y_1) \:\land\: 
        \\
		&\quad \bigwedge_{i=1}^3 \frm{succ}(x,y_i,y_{i+1})
		\:\land\: \frm{cnst}_0(x,y_1) \:\land\: \bigwedge_{i=2}^4 \frm{cnst}_1(x,y_i) 
        \bigg)
        \\
		&\Phi_{\mathsf{fin}}(x) \coloneqq \exists y_1,y_2,y_3,y_4 \in D^m \colon \\
		&\quad \bigg(
		\bigwedge_{i=1}^3 \frm{succ}(x,y_i,y_{i+1})
		\:\land\: \frm{cnst}_0(x,y_1) \:\land\:
		\frm{cnst}_H(x, y_2)\:\land\:\forall z : \lnot \frm{succ}(x,y_4,z)
		\bigg)
		\\
		&\Phi_\delta(x) \coloneqq \forall y_1,\ldots,y_8 \in D^m \colon \\
		&\quad
		\bigg(
		\frm{cnst}_0(x,y_1) \:\land\: 
		\bigwedge_{i=1}^7 \frm{succ}(x, y_i, y_{i+1}) \Rightarrow \frm{cnst}_0(x,y_5) \:\land\: \Psi(x,y_2,y_3,y_4,y_6,y_7,y_8)
		\bigg)
	\end{align*}
	and $\Psi$ is a positive Boolean combination of the formulas implementing the transition function of $\Mcal$, i.e.\ a positive Boolean combination of formulas $\frm{inc}(x,y_i,y_j)$, $\frm{eq}(x,y_i,y_j)$, $\frm{cnst}_k(x,y_i)$, and $\lnot \frm{cnst}_k(x,y_i)$ for $i,j \in \{2,3,4,6,7,8\}$ and $1 \le k \le H$.
	The formula $\Phi_{\mathsf{init}}(x)$ states that the sequence $\mathsf{Seq}(x)$ starts with four consecutive terms that are $0, 1, 1, 1$, respectively.
	The formula $\Phi_{\mathsf{fin}}(x)$ states that $\mathsf{Seq}(x)$ contains four consecutive terms $s_1,\ldots,s_4$ such that $s_1 = 0$ and $s_2 = H$.
	Together, the two formulas imply that $\mathsf{Seq}(x)$ must contain at least eight terms: in particular, two blocks of four consecutive terms that start with the delimiter 0.
	Finally, $\Phi_\delta(x)$ states that whenever $s_1,\ldots,s_8$ are consecutive terms of $\mathsf{Seq}(x)$ such that $s_1 = 0$, then $s_5 = 0$ and the transition function of $\Mcal$ satisfies $\delta_{\Mcal}(s_2,s_3,s_4) = (s_6,s_7,s_8)$.    
\end{proof}

Recall that we also gave a function $f$ realising arbitrary permutations as a sufficient condition for undecidability of the first-order theory of $\langle\nat;0,1,<,f\rangle$.
We now formally prove this. 
We note that the construction for extracting arbitrary finite sequences over $\nat$ from an arbitrary permutation is the precursor to the more complex constructions for extracting arbitrary finite sequences via \Cref{thm::main-DA-function-version} or \Cref{thm::main-DA}.

\begin{proof}[Proof of \Cref{thm::permutation-to-undec}]
	For $c,d,e \in \nat$ we define
	\[
	\mathsf{Rep}(c,d,e) = ( d+1,d+2,\ldots,e)
	\]
	and $\mathsf{Seq}(c,d,e)$ to be the finite sequence $(t_i)_{i=1}^{e-d}$ over $\{0, \ldots, d-c\}$ where
	\[
	t_i =
	\#
	\big\{c < n \le d \colon f(n) < f(d+i)\big\}
	.
	\]
	Thus $d+i$ indexes the $i$th term of $\mathsf{Seq}(c,d,e)$.	
    Next, consider a non-empty sequence $(t_i)_{i=1}^N$ over $\{0,\ldots,R\}$; note that for the empty sequence, we can take, e.g., $c=d=e$.
	Let $\sigma$ be a permutation of $\{1,\ldots,N+R\}$ such that $\sigma^{-1}(1) < \cdots < \sigma^{-1}(R)$ and
	\[
	\# \big\{1 \le j \le R \colon \sigma^{-1}(j) < \sigma^{-1}(R+i)\big\} = t_i
	\]
	for all $1 \le i \le N$.
	Since $f$ realises arbitrary permutations, there exists $c \in \nat$ such that for all $c < n_1, n_2 \le c+N+R$,
	\[
	f(n_1) < f(n_2) \Leftrightarrow \sigma^{-1}(n_1-c) < \sigma^{-1}(n_2-c).
	\]
	We set $d = c+R$, $e = c+R+N$.
	Then, substituting $j = n-c$, we have for $1 \le i \le N$ that
	\begin{align*}
		\#\big\{c < n \le d \colon f(n) < f(d+i)\big\}
		&= \#\big\{c < n \le c+R \colon \sigma^{-1}(n-c) < \sigma^{-1}(R+i)\big\} \\
		&= \#\big\{1 \le j \le R \colon \sigma^{-1}(j) < \sigma^{-1}(R+i)\big\} = t_i.
	\end{align*}
	Therefore, $\mathsf{Seq}(c,d,e) =(t_i)_{i=1}^N$.
    
    Recall that we do not have $+$ in our signature; however, we can define the successor function $n \mapsto n+1$ via the formula 
    \[
    \psi(n, n') = n < n' \:\land\: \forall m: (m > n \Rightarrow m \ge n').
    \]
	Next, we define
	\begin{align*}
		&\frm{rep}(c,d,e,n) \coloneqq d+1 \le n \le e\\
		&\frm{cnst}_0(c,d,e,n) \coloneqq \frm{rep}(c,d,e,n) \:\land\: \lnot \exists m \in (c,d]\colon f(m) < f(n)\\
		&\frm{cnst}_k(c,d,e,n) \coloneqq  \frm{rep}(c,d,e,n) \:\land\: \exists! \{m_1, \ldots, m_k\} \subseteq (c,d] \colon  f(m_1), \ldots, f(m_k) < f(n) \\
		&\frm{succ}(c,d,e,n_1,n_2) \coloneqq \frm{rep}(c,d,e,n_1) \:\land\: \frm{rep}(c,d,e,n_2) \:\land\: n_2 = n_1 + 1\\
		&\frm{inc}(c,d,e,n_1,n_2) \coloneqq \frm{rep}(c,d,e,n_1) \:\land\: \frm{rep}(c,d,e,n_2) \:\land\: \\
		&\quad \exists ! m \in (c,d]\colon f(m) \in [f(n_1), f(n_2))\\
		&\frm{eq}(c,d,e,n_1,n_2) \coloneqq \frm{rep}(c,d,e,n_1) \:\land\: \frm{rep}(c,d,e,n_2) \:\land\: \\
		&\quad\lnot\exists m \in (c,d]\colon f(m) \in [f(n_1), f(n_2)) \cup [f(n_2), f(n_1))
	\end{align*}
	where $k \in \nat_{\ge1}$ and $\exists ! \{m_1, \ldots, m_k\} \subseteq (c,d]$ means ``there exists a unique set $\{m_1, \ldots,m_k\}$ of $k$ numbers from $(c,d]$''. (Similarly, $\exists ! m \in (c,d]$ means ``there exists unique $m \in (c,d]$''.)
	Observe that for valid representatives $n_1, n_2$ (i.e., when $d+1 \le n_1,n_2 \le e$),  $\frm{eq}(c,d,e,n_1,n_2)$ should evaluate to false if and only if there exists $m \in (c,d]$ such that either
	\begin{itemize}
        \item $f(m) < f(n_2)$ but $f(m) \nless f(n_1)$, or
		\item  $f(m) < f(n_1)$ but $f(m) \nless f(n_2)$.
	\end{itemize}
    
	That is, $ f(m) \in [f(n_1), f(n_2)) \cup [f(n_2), f(n_1))$.
	Similarly, for valid representatives $n_1, n_2$, $\frm{inc}(c,d,e,n_1,n_2)$ should evaluate to true if and only if $f(n_2) > f(n_1)$ and there exists unique $c < m \le d$ such that $f(m) < f(n_2)$ but $f(m) \nless f(n_1)$.
	The conditions (2-7) in the definition of simulating counter machines are thus satisfied, and we can apply \Cref{thm::how-to-prove-undec}.
\end{proof}

\section{The first-order theory of $\langle \nat; 0,1, <, u\rangle$}
\label{sec::1st-structure}

Let $\seq{u_n}$ be an integer LRS satisfying ($\maltese$) as in the Introduction.
In this section we prove that the first-order theory of $\Mb \coloneqq \langle \nat; 0,1, <, u\rangle$ is undecidable, which is the first half of \Cref{thm::undec-lrs}.
We do this by explicitly giving the maps $\mathsf{Seq} \colon \nat^3 \to \nat^*$, $\mathsf{Rep} \colon \nat^3 \to \nat^*$ and the formulas $\frm{rep}, \frm{cnst}_k,\frm{succ}, \frm{inc}, \frm{eq}$.
Our main tool for proving surjectivity of $\mathsf{Seq}$ is \Cref{thm::main-DA-function-version}, which will be proved in \Cref{sec::proof-of-DA-easy-version}.
Let $\zeta > 1$ be as in the statement of \Cref{thm::main-DA-function-version}.

Let $c,d,e \in \nat$, and suppose that $c< d < e$ and
\[
0 < u(c) < u(d) < u(e).
\]
(For $c,d,e$ that do not satisfy these conditions, we define $\mathsf{Seq}(c,d,e)$ and $\mathsf{Rep}(c,d,e)$ to be the empty sequence.)
Let
\[
X = \big\{c < n \le d \colon u(c) < u(n) < u(e)\big\}
\]
and $n_1 < \cdots < n_k$ be the ordering of all $d < n < e$ such that $u(c) < u(n) < u(e)$.
We extract  from $(c,d,e)$ a finite sequence $\mathsf{Seq}(c,d,e)$ of length $k$ by defining
\[
\mathsf{Rep}(c,d,e) \coloneqq  ( n_1, n_2, \ldots, n_k )
\]
and
\[
\mathsf{Seq}(c,d,e)_i \coloneqq \# \{x\in X \colon u(x) < u(n_i)\} \in \{0, \ldots, \#X\}
\]
for $1 \le i \le k$.
Note that this construction is slightly different   from the one we used in the previous section.

\begin{example}
	Consider the sequence defined by $u_{n+3} = -u_{n+2} + u_n$, $u_0 = 0$, $u_1 = 1$, and $u_2 = 2$.
	We have that
	\[
	u_n = a\lambda^n + \overline{a}\,\overline{\lambda}^n + y r^n
	\]
	where $\lambda \approx -0.88 + 0.74\im$, $r \approx 0.75$, $a \approx -0.58 + 0.61\im$ and $y \approx 1.17$.
	Let us compute $\mathsf{Seq}(c,d,e)$ and $\mathsf{Rep}(c,d,e)$ for $(c,d,e) = (81,92,100)$.
	(We mention that $u_{81}, u_{92}, u_{100} > 0$.)
	First, determine all $n \in (81,100)$ such that $u_{81} < u_n < u_{100}$.
	This leaves us with $n = 82$, $84$, $87$, $89$, $92$ , $94$, $95$, $99$.
	Since $d = 92$, $X = \{82,84,87,89,92\}$.
	Next, we observe that
	\label{ex1}
	\begin{align*}
		&{\color{blue} u_{81}} < u_{82} < u_{84} < u_{87} < u_{89} < {\color{red} u_{94}} < u_{92} < {\color{blue} u_{100}} \\
		&{\color{blue} u_{81}} < u_{82} < {\color{red} u_{95}} < u_{84} < u_{87} < u_{89} < u_{92} < {\color{blue} u_{100}} \\
		&{\color{blue} u_{81}} < u_{82} < u_{84} < u_{87} < {\color{red} u_{99}} < u_{89} < u_{92} < {\color{blue} u_{100}}.
	\end{align*}
	For each red term, we count the number of black terms that are smaller.
	(The blue terms are just delimiters, and are not counted.)
	Hence the triple $(c,d,e)$ defines the sequence $\mathsf{Seq}(81,92,100) = (4, 1, 3)$ with the corresponding representatives $\mathsf{Rep}(81,92,100) = (94, 95, 99)$.
\end{example}

We next show how the maps $\mathsf{Seq}$ and $\mathsf{Rep}$ can be implemented using first-order formulas.
We proceed similarly to the proof of \Cref{thm::permutation-to-undec} (\Cref{sec::how-to-prove-undec}) but need a more general successor function. 
Define
\begin{align*}
	&\frm{rep}(c,d,e,n) \coloneqq c < d < n < e \:\land\: 0 < u(c) < u(d) < u(e) \:\land\: u(c) < u(n) < u(e)\\
	&\frm{cnst}_0(c,d,e,n) \coloneqq \frm{rep}(c,d,e,n)
	\:\land\: \lnot \exists m \in (c,d] \colon u(m) \in (u(c),u(n))\\
	&\frm{cnst}_k(c,d,e,n) \coloneqq  \frm{rep}(c,d,e,n) \:\land\: \\
    &\quad\exists! \{m_1, \ldots, m_k\} \subset (c,d] \colon  u(m_1), \ldots, u(m_k) \in (u(c),u(n))\\
	&\frm{succ}(c,d,e,n_1,n_2) \coloneqq \frm{rep}(c,d,e,n_1) \:\land \: \frm{rep}(c,d,e,n_2) \:\land\: \\
    &\quad n_1 < n_2 \:\land\: \forall m \in (n_1,n_2) \colon \lnot\frm{rep}(c,d,e,m)\\
	&\frm{inc}(c,d,e,n_1,n_2) \coloneqq \frm{rep}(c,d,e,n_1) \:\land\: \frm{rep}(c,d,e,n_2) \:\land\: \\
	&\quad \exists ! m \in (c,d]\colon u(m) \in [u(n_1), u(n_2))\\
	&\frm{eq}(c,d,e,n_1,n_2) \coloneqq \frm{rep}(c,d,e,n_1) \:\land\: \frm{rep}(c,d,e,n_2) \:\land\: \\
	&\quad\lnot\exists m \in (c,d]\colon u(m) \in [u(n_1), u(n_2)) \cup [u(n_2), u(n_1)).
\end{align*}
where $k \in \nat_{\ge1}$, and $\exists! \{m_1, \ldots, m_k\}$ and $\exists !m$ are interpreted as in the proof of \Cref{thm::permutation-to-undec}.
Note that the definitions of $\frm{inc}$, and $\frm{eq}$ are of the same form (up to replacing $u(\cdot)$ with $f(\cdot)$) with those given in \Cref{sec::how-to-prove-undec}.
Items (2-7) in the definition of simulating counter machines are satisfied by construction.
To prove undecidability of the first-order theory of $\Mb$ (by applying \Cref{thm::how-to-prove-undec}) it remains to show the following.

\begin{lemma}
	\label{thm::seq-onto-easy-version}
	The map $\mathsf{Seq} \colon \nat^3 \to \nat^*$ above is surjective.
\end{lemma}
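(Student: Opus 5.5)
Given a target sequence $(t_i)_{i=1}^N$ over $\{0,\ldots,R\}$, I will construct ratio targets and apply \Cref{thm::main-DA-function-version}. The empty sequence is handled trivially, for instance by taking $(c,d,e)$ with no $n\in(d,e)$ satisfying $u(c)<u(n)<u(e)$, or by appealing to the convention for invalid triples. For nonempty targets, the plan is to put $R$ ``black'' values, $N$ ``red'' values and one ``blue'' upper delimiter into $(1,\zeta)$, in a prescribed order. Fix $1<x_1<\cdots<x_R<x_{R+1}<\zeta$ for the black values $u(m)/u(c)$ of the elements of $X$. Place $x_{R+1}$ last, so that it can serve as $u(e)/u(c)$. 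For each red index $i$, choose a value $y_i$ strictly between $x_{t_i}$ and $x_{t_i+1}$, using the convention $x_0=1$, and make all $y_i$ distinct. Around each of these $R+N+1$ values choose pairwise disjoint small intervals $(\gamma_j,\delta_j)\subset(1,\zeta)$.

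The key point is that \Cref{thm::main-DA-function-version} lets me prescribe the ratios $u_{n_j}/u_n$ in \emph{index order} $n<n_1<\cdots<n_\ell<\widetilde n$. It also guarantees that these are the \emph{only} indices in $[n,\widetilde n)$ with $u_n\le u_m<\zeta u_n$, and the condition $1<\gamma_j<\delta_j<\zeta$ places no monotonicity requirement on $j\mapsto\gamma_j$. So I list the intervals in index order as follows: first the $R$ black intervals in some order (say increasing), then one more black interval whose value will be $u(d)$, then the $N$ red intervals in order $i=1,\ldots,N$, and finally the blue interval around $x_{R+1}$. Accordingly I take one additional black value, so there are $R$ black values among which the last in index order is $d$. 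Applying the theorem with $\ell=R+N+1$ gives $n<n_1<\cdots<n_\ell<\widetilde n$ with $u_n>0$. I set $c=n$, $d=n_R$ (the last black index) and $e=n_\ell$. Then $0<u(c)<u(d)<u(e)$ holds because all ratios exceed $1$ and the blue ratio is the largest.

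Next I verify $\mathsf{Seq}(c,d,e)=(t_i)$. An index $m\in(c,e)$ with $u(c)<u(m)<u(e)$ satisfies $u(m)/u(c)\in(1,x_{R+1})\subset(1,\zeta)$, so $m$ is one of the $n_j$ by the exhaustiveness clause of the theorem. This identifies $X$ as exactly the $R$ black indices, and the indices in $(d,e)$ satisfying the condition as exactly the red ones, in order $i=1,\ldots,N$. Hence $\mathsf{Rep}$ lists the red indices, and $\#\{x\in X: u(x)<u(n_i)\}=t_i$ holds by the placement of $y_i$, since the intervals are disjoint and are ordered in the same way as their centres. One small check remains: $u(m)=\max\{0,u_m\}$ agrees with $u_m$ whenever the ratio is positive, and indices with $u_m\le 0$ do not satisfy $u(c)<u(m)$ because $u(c)>0$.

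The main obstacle lies entirely in the theorem, namely the exhaustiveness guarantee, and here that is given. The only thing needing care in this argument is that $d$ must itself be an element of $X$. This is why the last black value, in index order, is assigned to $d$, and why $R\ge1$ is required. When $R=0$ (all $t_i=0$), I will instead use target alphabet bound $R=1$: I then take one black value $x_1$ placed above all the $y_i$, which keeps every $t_i=0$ and still puts $d\in X$.
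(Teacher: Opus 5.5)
Your proposal is correct and follows essentially the same route as the paper. Both apply \Cref{thm::main-DA-function-version} with $\ell=R+N+1$ intervals laid out in index order as $c=n_0<n_1<\cdots<n_R=d<n_{R+1}<\cdots<n_{R+N+1}=e$ (black, red, blue), use the exhaustiveness clause to identify $X$ and $\mathsf{Rep}$, and enlarge the alphabet to $R\ge1$ so that $d\in X$. The paper only makes your abstract disjoint neighbourhoods concrete via $\eta=(\zeta-1)/(2R+4)$, with black intervals $(1+2k\eta,1+(2k+1)\eta)$, red intervals $(1+(2t_i+1)\eta,1+(2t_i+2)\eta)$ (which need not be distinct), and the blue interval $(1+(2R+2)\eta,1+(2R+3)\eta)$. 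Your wording about ``one more black interval'' is muddled, but with $\ell=R+N+1$ and $d=n_R$ the intended count of exactly $R$ black values is the right one.
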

\begin{proof}
	To define the empty sequence, we can choose, for example, $(c,d,e) = (0,0,0)$.
	Now consider a non-empty finite sequence $(t_i)_{i=1}^N$ over $\{0, \ldots, R\}$ where, without loss of generality, we can enlarge the alphabet to ensure $R \ge 1$.
	We will construct, using \Cref{thm::main-DA-function-version},
	\[
	c = n_0 < n_1 < \cdots < n_R = d < n_{R+1} < \cdots < n_{R+N+1} = e
	\]
	such that
	\begin{itemize}
		\item[1)] $0 < u(c) < u(n_i) < u(e)$ for all $1\le i \le R + N$,
		\item[2)] $u(n) \notin (u(c), u(e))$ for any $n \in (c,e) \setminus \{n_1, \ldots, n_{R+N}\}$, and
		\item[3)] writing $X = \{n_1,\ldots,n_R\}$,
		\[
		\# \{x\in X \colon u(x) < u(n_{R+i})\}  = t_i
		\]
		for all $1 \le i \le N$.
	\end{itemize}
	Then $\mathsf{Seq}(c,d,e) = (t_i)_{i=1}^N$.

	\begin{example}
		\Cref{fig::1} illustrates our construction for the finite sequence $(1, 3, 0, 1)$.
		In this case $N = 4$ and $R = 3$.
		We partition $(1, \zeta)$ into $10$ intervals of equal length $\eta$.
		The intervals $(1, 1+\eta)$ and $(\zeta-\eta, \zeta)$ are just buffers: they are needed because in the statement of \Cref{thm::main-DA-function-version}, $\gamma_j$ and $\delta_j$ must be strictly between $1$ and~$\zeta$ for all $j$.
		We define
		\begin{itemize}
			\item $(\gamma_1, \delta_1) = (1+2\eta, 1+3\eta)$, $(\gamma_2,\delta_2) = (1+4\eta, 1+5\eta)$, $(\gamma_3,\delta_3) = (1+6\eta, 1+7\eta)$,
			\item $(\gamma_4, \delta_4) = (1+3\eta, 1+4\eta)$, $(\gamma_5,\delta_5) = (1+7\eta, 1+8\eta)$, $(\gamma_6,\delta_6) = (1+\eta, 1+2\eta)$, $(\gamma_7,\delta_7) = (1+3\eta, 1+4\eta)$, and
			\item $(\gamma_8, \delta_8) = (\zeta - 2\eta, \zeta-\eta)$.
		\end{itemize}
		Then $1 < \gamma_j < \delta_j < \zeta$ for all $j$.
		By \Cref{thm::main-DA-function-version} there exist infinitely many $n_0 < n_1 < \cdots < n_8$ such that $u_{n_j} > 0$ and,
		\[
		\frac{u(n_j)}{u(n_0)}  \in (\gamma_j, \delta_j)
		\]
		for all $j$, and for all integers $n \in [n_0, n_8] \setminus \{n_0, \ldots, n_8\}$, either $u(n) < u(n_j)$ for all $j$, or $u(n) > u(n_j)$ for all $j$.
		Note that because $\gamma_8 \ge \delta_j$ for $1 \le j \le 7$, we have that $u({n_0}) < u({n_j)} < u({n_8})$ for all $1 \le j \le 7$.
		From our construction of $(\gamma_j,\delta_j)$, $1 \le j \le \ell$ it then follows that  $\mathsf{Seq}(n_0, n_3, n_8) = (1, 3, 0, 1)$.
	\end{example}
	We now proceed with the proof.
	Let $\ell = R + N + 1$ and $\eta = (\zeta-1)/(2R+4)$.
	We define
	\[
	(\gamma_k,\delta_k) = (1 + 2k\eta, 1+(2k+1)\eta)
	\]
	for $1 \le k \le R$.
	For $k = 1,\ldots, N$, we define
	\[
	(\gamma_{R+k},\delta_{R+k}) = (1 + (2t_k+1)\eta, 1+(2t_k+2)\eta)
	\]
	Finally,  we set $(\gamma_\ell,\delta_\ell) = (1 + (2R+2)\eta, 1+(2R+3)\eta)$.
	Note that $1 < \gamma_j < \delta_j < \zeta$ for all $j$.
	Applying \Cref{thm::main-DA-function-version} we obtain infinitely many $n_0 < \cdots < n_\ell$ such that
	\begin{itemize}
		\item $u({n_j}) > 0$ for all $0 \le j \le \ell$,
		\item $\frac{u(n_j)}{u(n_0)} \in (\gamma_j, \delta_j)$ for all $1 \le j \le \ell$, which implies that $u({n_0}) < u({n_j}) < u({n_\ell})$ for such $j$, and
		\item $u(n) \notin (u(n_0), u(n_\ell))$ for all integers $n_0 < n  < n_\ell$ outside $\{n_1, \ldots, n_{\ell-1}\}$.
	\end{itemize}
	Thus $n_0,\ldots,n_\ell$ satisfy conditions (1-2) above.
	It remains to verify~(3).
	For all $1 \le i \le N$ we have that
	\begin{equation*}
		\#\{x\in X \colon u(x) < u(n_{R+i})\}
		=\#\bigg\{
		1 \le k \le R \colon \frac{u(n_k)}{u(n_0)} < \frac{u(n_{R+i})}{u(n_0)}
		\bigg\}.
	\end{equation*}
	Moreover,
	\[
	\frac{u(n_{R+i})}{u(n_0)} \in (1 + (2t_i+1)\eta, 1+2(t_i+1)\eta)
	\]
	and
	\[
	\frac{u(n_k)}{u(n_0)} \in (1 + 2k\eta, 1+(2k+1)\eta)
	\]
	for all $1\le k \le R$.
	Therefore,
	\[
	\bigg\{
	1 \le k \le R \colon \frac{u(n_k)}{u(n_0)} < \frac{u(n_{R+i})}{u(n_0)}
	\bigg\} = \{1, \ldots, t_i\}
	\]
	and hence $\#\{x\in X \colon u(x) < u(n_{R+i})\} = t_i$.
\end{proof}

\begin{figure}
	\centering
	\seqlinefig{u(n_{#1})}
	\caption{The construction of the proof of \Cref{thm::seq-onto-easy-version} for the finite sequence $( 1, 3, 0, 1 )$. The horizontal line is $\rel$, and the vertical ticks are the points $u(n_0)(1 + \eta k)$ for $0 \le k \le 10$ and $\eta = (\zeta-1)/10$. The dashed lines indicate the locations of $u({n_0}), \ldots, u({n_8}), \zeta u({n_0})$.}
	\label{fig::1}
\end{figure}

\section{The first-order theory of $\langle \nat; 0, 1, <, +, U\rangle$}
\label{sec::2nd-structure}

Let $\seq{u_n}$ be an integer LRS satisfying ($\maltese$).
We now prove that $\Mb \coloneqq \langle \nat; 0, 1, <, +, U\rangle$ simulates counter machines, thus completing the proof of \Cref{thm::undec-lrs}.
Our main tool is \Cref{thm::main-DA}, which will be proven in \Cref{sec::proof-of-DA-hard-version}.

We first define the maps $\mathsf{Seq} \colon \nat^3 \to \nat^*$ and $\mathsf{Rep} \colon \nat^3 \to (\nat^2)^*$.
Recall that $<$ can be defined in $\langle \nat; +, U\rangle$ by $x < y \Leftrightarrow x \ne y \land \exists z \colon x + z = y$.
Consider $(p_c, p_d, p_e) \in U^3$ with $c < d < e$ (where $\seq{p_n}$ enumerates $U$ as in the Introduction);
for all other triples, both $\mathsf{Seq}$ and $\mathsf{Rep}$ return the empty sequence.
Let $R = d - c - 1$ and $N = e - d$.
For $1 \le i \le N$, we define
\[
\mathsf{Rep}(p_c,p_d,p_e)_i = (p_{d+i-1}, p_{d+i})
\]
and $\mathsf{Seq}(p_c,p_d,p_e)_i$ as
\[
\# \{c < n < d \colon p_n < p_c + p_{d+i} - p_{d+i-1}\} \in \{0, \ldots, R\}.
\]
We next show how to implement $\mathsf{Seq}$ and $\mathsf{Rep}$ using first-order formulas; these are modifications of the formulas given in \Cref{sec::1st-structure}.
We write $\mathbf{r}$ for the pair of variables $r_1,r_2$, and $\mathbf{\tilde{r}}$ for the pair $\tilde{r}_1, \tilde{r}_2$.
Define
\begin{align*}
	&\frm{rep}(y_1,y_2,y_3,\mathbf{r}) \coloneqq y_1,y_2,y_3 \in U \:\land\: y_1 < y_2 \le r_1 < r_2 \le y_3 \:\land\: r_1,r_2 \in U \:\land\: \\&\quad\forall r \in (r_1,r_2)\colon r \notin U \\
	&\frm{cnst}_0(y_1,y_2,y_3,\mathbf{r}) \coloneqq \frm{rep}(y_1,y_2,y_3,\mathbf{r}) \:\land\: \lnot \exists x \in (y_1,y_2) \cap U \colon x < y_1 + r_2 - r_1\\
	&\frm{cnst}_k(y_1,y_2,y_3,\mathbf{r}) \coloneqq \frm{rep}(y_1,y_2,y_3,\mathbf{r}) \:\land\: \\
	&\quad \exists! \{x_1,\ldots,x_k \}\subset (y_1,y_2) \cap U \colon x_1,\ldots,x_k < y_1 + r_2 - r_1\\
	&\frm{succ}(y_1,y_2,y_3,\mathbf{r},\mathbf{\tilde{r}}) \coloneqq \frm{rep}(y_1,y_2,y_3,\mathbf{r}) \: \land \frm{rep}(y_1,y_2,y_3,\mathbf{\tilde{r}}) \: \land \: r_2 = \tilde{r}_1
\end{align*}
where $k \in \nat_{\ge1}$ and $\exists! \{m_1, \ldots, m_k\}$ (as well as $\exists !$ used below) are interpreted as in \Cref{sec::how-to-prove-undec}.
It remains to define $\frm{eq}$ and $\frm{inc}$.
Write $t$ and $\tilde{t}$ for $y_1+r_2-r_1$ and $y_1+\tilde{r}_2-\tilde{r}_1$, respectively.
Then we can define
\begin{align*}
	&\frm{eq}(y_1,y_2,y_3,\mathbf{r},\mathbf{\tilde{r}}) \coloneqq \frm{rep}(y_1,y_2,y_3,\mathbf{r}) \:\land \: \frm{rep}(y_1,y_2,y_3,\mathbf{\tilde{r}}) \:\land\:  \\
	& \quad \forall x \in (y_1,y_2) \cap U \colon x \notin [t, \tilde{t}\,) \cup [\tilde{t}, t)\\
	&\frm{inc}(y_1,y_2,y_3,\mathbf{r},\mathbf{\tilde{r}}) \coloneqq \frm{rep}(y_1,y_2,y_3,\mathbf{r}) \:\land \: \frm{rep}(y_1,y_2,y_3,\mathbf{\tilde{r}}) \:\land\: \\
	& \quad \exists! x \in (y_1,y_2) \cap U \colon x \in [t, \tilde{t}\,).
\end{align*}
Items (2-7) in the definition of simulating counter machines are satisfied by construction.
It remains to prove the following.

\begin{lemma}
	The map $\mathsf{Seq} \colon \nat^3 \to \nat^*$ above is surjective.
\end{lemma}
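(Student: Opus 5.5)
Given a nonempty $(t_i)_{i=1}^N$ over $\{0,\ldots,R\}$ with $R\ge 1$, I will use \Cref{thm::main-DA} to find $c$ such that the ratios $p_{c+j}/p_c$ lie in prescribed disjoint windows for $1\le j\le R+1+N$. I then set $d=c+R+1$ and $e=d+N$. By the definition of $\mathsf{Seq}$, the $i$th term counts those $c<n<d$ with $p_n<p_c+(p_{d+i}-p_{d+i-1})$. Dividing by $p_c$, the condition becomes
\[
\frac{p_n}{p_c}<1+\frac{p_{d+i}}{p_c}-\frac{p_{d+i-1}}{p_c}.
\]
So I need the gaps between consecutive windows in the tail to encode the $t_i$, measured against the first $R$ ratios.

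\textbf{Choice of windows.} Fix a small $\eta>0$ and let $\varepsilon\ll\eta$ be the width of every window. For $1\le k\le R$, place a window around $1+2k\eta$; these are the ``$X$'' elements. Place $p_{d}/p_c$ around some base point $B$ with $B>1+(2R+2)\eta$. Then define the tail windows cumulatively: choose target gaps $g_i=(2t_i+1)\eta$ and put $p_{d+i}/p_c$ in a window around $B+\sum_{j\le i} g_j$.

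The gap $(p_{d+i}-p_{d+i-1})/p_c$ is then within $2\varepsilon$ of $(2t_i+1)\eta$. Hence $1+\text{gap}$ lies strictly between $1+2t_i\eta+\varepsilon$ and $1+2(t_i+1)\eta-\varepsilon$, so exactly the $t_i$ elements $k=1,\ldots,t_i$ satisfy the inequality. All windows must be increasing, disjoint, and lie in $(1,\zeta)$, as \Cref{thm::main-DA} requires (it needs $\gamma_1<\delta_1\le\gamma_2<\cdots<\zeta$). I will ensure this by taking $\eta=(\zeta-1)/M$ with $M$ larger than $2R+4+\sum_i(2t_i+1)$, and $\varepsilon<\eta/4$.

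\textbf{Applying the theorem.} \Cref{thm::main-DA} now gives infinitely many $c$ with $p_{c+j}/p_c$ in the $j$th window. Since the $p_n$ enumerate $U$ in order, the terms $p_{c+1},\ldots,p_{c+R}$ are exactly the elements of $U$ strictly between $p_c$ and $p_d$. Consecutive indices $d+i-1,d+i$ are adjacent elements of $U$, which matches $\mathsf{Rep}$ and the formula $\frm{rep}$. Then $\mathsf{Seq}(p_c,p_d,p_e)=(t_i)$, and the empty sequence is produced by a degenerate triple.

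The only delicate step is the error bookkeeping: each gap is a difference of two approximate ratios, so its error doubles, and this must not cross the thresholds $1+(2k\pm1)\eta$. This is handled by the margin $\varepsilon<\eta/4$. One must also confirm that the number of windows, roughly $R+N+1$, fits below $\zeta$, which the choice of $M$ guarantees.
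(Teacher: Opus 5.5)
Your proposal is correct and is essentially the paper's proof. You apply \Cref{thm::main-DA} with $R$ windows spaced $2\eta$ apart for the terms strictly between $p_c$ and $p_d$, a window for $p_d$, and cumulative tail windows whose consecutive gaps encode the $t_i$, and you set $d=c+R+1$, $e=d+N$. The only differences are cosmetic: the paper centres the gaps at $(2t_i+\tfrac12)\eta$ with tail windows of width $\tfrac14\eta$, rather than your $(2t_i+1)\eta$ with half-width $\varepsilon<\eta/4$; your threshold check goes through, since it needs only $3\varepsilon<\eta$.
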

\begin{proof}
	Take a non-empty finite sequence $(t_i)_{i=1}^N$ over $\{0,\ldots,R\}$.
	We will construct $c < d  < e$ such that $\mathsf{Seq}(p_c,p_d,p_e) = (t_i)_{i=1}^N$.
	(For the empty sequence, we can take any $c=d=e$.)
	It suffices to find $n$ such that
	\begin{equation}
		\label{eq::2nd-structure-surjective-1}
		\# \{n < m \le n+R \colon p_m < p_n + p_{n+R+i+1}-p_{n+R+i}\} = t_i
	\end{equation}
	for all $1 \le i \le N$.
	We can then choose $c=n$, $d = n+R+1$ and $e = n+R+N+1$.
	For $1 \le k \le R$, we define
	\[
	(\gamma_k, \delta_k) = (1+(2k-1)\eta, 1+2k\eta)
	\] where $\eta > 0$ is to be determined.
	Further define $(\gamma_{R+1}, \delta_{R+1}) = (1+(2R+1 - 1/8)\eta, 1+(2R+1 + 1/8)\eta)$.
	So far we have that
	\[
	\gamma_1 < \delta_1 < \gamma_2 < \delta_2 < \cdots < \gamma_R < \delta_R < \gamma_{R+1} < \delta_{R+1}.
	\]
	For $1 \le i \le N$, we write $k = R + i + 1$ and define
	\begin{align*}
		\gamma_k &= 1 + \bigg(-\frac 1 8 + 2R + 1 + \sum_{j=1}^i \big(2t_j + \frac 1 2\big)\bigg)\eta\\
		\delta_k &= 1 + \bigg(\frac 1 8 + 2R + 1 + \sum_{j=1}^i \big(2t_j + \frac 1 2\big)\bigg)\eta.
	\end{align*}
	Observe that $ \gamma_{R+1} < \delta_{R+1} < \cdots < \gamma_{R+1+N} < \delta_{R+1+N}$, and for any $z \in (\gamma_{R+i+1}, \delta_{R+i+1})$ and $x \in (\gamma_{R+i}, \delta_{R+i})$ we have that
	\[
	\frac{z-x}{\eta} \in \bigg(2t_i + \frac{1}{2} - \frac{1}{4}, 2t_i + \frac 1 2 + \frac 1 4\bigg),
	\]
	i.e., $z - x \approx (2t_i + 1/2)\eta$.
	Since
	\[
	1 < \gamma_j, \delta_j < 1 +2 \bigg(R + N + \sum_{i=1}^N t_i\bigg)\eta,
	\]
	for all $j$, we choose
	\[
	\eta = \frac{\zeta-1}{2 \big(R + N + \sum_{j=1}^N t_j\big)}.
	\]
	Then $1 < \gamma_j < \delta_j < \zeta$ for all $j$.

	Next, applying \Cref{thm::main-DA}, construct $n$ such that
	$
	\frac{p_{n+j}}{p_n} \in (\gamma_j, \delta_j)
	$
	for all $j$.
	It remains to show that \eqref{eq::2nd-structure-surjective-1} holds for all $1 \le i \le N$.
	Observe that
	\begin{multline*}
		\#\{n < m \le n+R \colon p_m < p_n + p_{n+R+i+1}-p_{n+R+i}\} \\
		=\#\bigg\{
		1 \le k \le R \colon \frac{p_{n+k}}{p_n} - 1 < \frac{ p_{n+R+i+1}-p_{n+R+i}}{p_n}
		\bigg\}.
	\end{multline*}
	Then it follows that
	$p_{n+R+i+1}/p_n \in (\gamma_{R+i+1}, \delta_{R+i+1})$ and
	$p_{n+R+i}/p_n \in (\gamma_{R+i}, \delta_{R+i})$.
	Therefore, as discussed earlier,
	\[
	\frac{p_{n+R+i+1}-p_{n+R+i}}{p_n} \in \bigg(\big(2t_i + \frac{1}{4}\big)\eta, \big(2t_i + \frac 3 4\big)\eta\bigg).
	\]
	On the other hand,
	\[
	\frac{p_{n+k}}{p_n} - 1 \in ((2k-1)\eta, 2k\eta)
	\]
	for all $1 \le k \le R$.
	Therefore,
	\begin{equation*}
		\bigg\{
		1 \le k \le R \colon \frac{p_{n+k}}{p_n} - 1 < \frac{ p_{n+R+i+1}-p_{n+R+i}}{p_n}
		\bigg\}
		= \{1, \ldots, t_i\}.
		\qedhere
	\end{equation*}
\end{proof}

\section{Proof of \Cref{thm::main-DA-function-version}}
\label{sec::proof-of-DA-easy-version}

Assume the notation of \Cref{sec::lrs-two-roots}.
Recall that we have
\begin{align*}
	u_n &= v_n + r_n  = a \lambda^n + \overline{a} \, \overline{\lambda}^n + r_n \\
	r_n &= o((\rho-\varepsilon)^n) \textrm{ for all sufficiently small $\varepsilon > 0$}\\
	v_n &= |a|\rho^n(\xi\mu^n + \overline{\xi}\,\overline{\mu}^n) =  2 |a| \rho^n \cos(n\theta + \varphi).
\end{align*}
Define
\begin{align*}
	\Ical &= \bigg\{z \in \torus \colon \Rea(z) > \frac{1}{\rho}\bigg\}.
\end{align*}
Then, by elementary geometry,
\begin{equation}
	\label{I-length}
	2 \cdot \sqrt{1 - \frac{1}{\rho^2}} <|\Ical| < \pi.
\end{equation}
The significance of $\Ical$ is as follows.
\begin{lemma}\label{lem::I-is-largest}
	For all $n$, if $\xi \mu^n \in \Ical$ then $v_n > 0, v_0,\ldots,v_{n-1}$.
\end{lemma}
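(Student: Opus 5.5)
The plan is a direct estimate of $v_n$ against an upper bound for all earlier terms, using the closed form $v_m = 2|a|\rho^m\cos(m\theta+\varphi)$ from \Cref{sec::lrs-two-roots}. Note first that $\cos(m\theta+\varphi) = \Rea(\xi\mu^m)$. Hence the hypothesis $\xi\mu^n \in \Ical$ says exactly that $\cos(n\theta+\varphi) > 1/\rho$, which gives
\[
v_n = 2|a|\rho^n\cos(n\theta+\varphi) > 2|a|\rho^{n-1}.
\]
In particular $v_n > 0$, since $a \ne 0$ and $\rho > 1$.

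For the comparison with earlier terms, take $0 \le m \le n-1$. The trivial bound $\cos(m\theta+\varphi) \le 1$ gives
\[
v_m \le 2|a|\rho^m \le 2|a|\rho^{n-1},
\]
using $\rho > 1$ for the monotonicity of $\rho^m$. Combining the two displays yields $v_m \le 2|a|\rho^{n-1} < v_n$ for every $m < n$. The inequality is strict because the membership $\xi\mu^n \in \Ical$ is strict, which is why $\Ical$ is defined with the threshold $\Rea(z) > 1/\rho$.

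There is no real obstacle here. The only points to check are that $|a| > 0$ and $\rho > 1$ (the latter via \cite[Lemma 13]{pronormality}, as recorded in \Cref{sec::lrs-two-roots}), and that the strict inequality is preserved. The case $n = 0$ reduces to $v_0 > 0$, which is the first part of the argument.
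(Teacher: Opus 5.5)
Your proof is correct and follows essentially the same route as the paper: both compare $\Rea(\xi\mu^n) > 1/\rho$ with the trivial bound $\Rea(\xi\mu^m) \le 1$ and use $\rho^m \le \rho^{n-1}$. The only cosmetic difference is that the paper bounds the ratio $v_n/v_m$ after first setting aside the case $v_m \le 0$, whereas your direct bound $v_m \le 2|a|\rho^{n-1} < v_n$ covers all $m < n$ at once.
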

\begin{proof}
	Suppose $\xi \mu^n \in \Ical$.
	Then $\xi \mu^n \in \torus_+$ and hence $v_n > 0$.
	Next, take $0 \le m < n$ such that $v_m > 0$; otherwise it is immediate that $v_n > v_m$.
	We have that
	\[
	\frac{v_n}{v_m} = \frac{\rho^{n} \Rea(\xi\mu^n)}{\rho^m \Rea(\xi\mu^m)} > \frac{\rho^{n-m}}{\Rea(\xi \mu^m)}\cdot \frac 1 \rho \ge \rho^{n-m-1} \ge 1.
	\qedhere
	\]
\end{proof}

For $z \in \torus_+$, let
\[
g_d(z) =  \frac{z \lambda^d + \overline{z} \, \overline{\lambda}^d}{z + \overline{z}}.
\]
Note that
\begin{equation}\label{eq::g_d(z)-square-expression}
	g_d(z) = \kappa \Leftrightarrow z^2 = -\frac{\overline{\lambda}^d - \kappa}{\lambda^d - \kappa}.
\end{equation}
For $\gamma,\delta \in \rel$ and $d \ge 1$, we define
\[
\Jcal_d(\gamma,\delta) = \{z \in \torus_+ \colon g_d(z) \in (\gamma,\delta)\}.
\]
We next argue that each $\Jcal_d(\gamma,\delta)$ is an interval.

\begin{lemma}
	For all $d \ge 1$, $g_d \colon \torus_+ \to \rel$ is a homeomorphism.
\end{lemma}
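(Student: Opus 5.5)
We plan to parametrise $\torus_+$ explicitly and reduce $g_d$ to an affine function of a tangent. The map $t \mapsto e^{\im t}$ is a homeomorphism from $(-\pi/2,\pi/2)$ onto $\torus_+$, so it suffices to show that $h_d(t) \coloneqq g_d(e^{\im t})$ is a homeomorphism $(-\pi/2,\pi/2) \to \rel$. First we check that $g_d$ is well defined on $\torus_+$: for $z \in \torus_+$ we have $z + \overline{z} = 2\Rea(z) > 0$.

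Next we compute $h_d$. Writing $\lambda^d = \rho^d \mu^d = \rho^d e^{\im d\theta}$, the numerator of $g_d(e^{\im t})$ is $2\Rea(e^{\im t}\lambda^d) = 2\rho^d\cos(t + d\theta)$, and the denominator is $2\cos t$. Expanding the cosine of the sum gives
\[
h_d(t) = \rho^d\,\frac{\cos(t+d\theta)}{\cos t} = \rho^d\big(\cos(d\theta) - \sin(d\theta)\tan t\big).
\]

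The key point is that the slope $-\rho^d \sin(d\theta)$ is nonzero. This holds because $\sin(d\theta) = 0$ would force $d\theta \in \pi\intg$, whereas by non-degeneracy (see \Cref{sec::lrs-two-roots}) $\theta$ is not a rational multiple of $\pi$, and $d \ge 1$. Hence $h_d$ is the composition of $\tan \colon (-\pi/2,\pi/2) \to \rel$, which is a homeomorphism, with a non-constant affine map $\rel \to \rel$, which is also a homeomorphism. Therefore $h_d$ is a homeomorphism, and so is $g_d = h_d \circ (e^{\im t})^{-1}$.

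There is no real obstacle here. The only substantive input is the non-vanishing of $\sin(d\theta)$, which comes from ($\maltese$). As a by-product, each $\Jcal_d(\gamma,\delta)$ is the preimage of an interval under a homeomorphism, and hence an arc of $\torus_+$.
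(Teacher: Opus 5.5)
Your proposal is correct and follows the paper's proof essentially verbatim: parametrise $\torus_+$ by $t \in (-\pi/2,\pi/2)$, rewrite $g_d(e^{\im t}) = \rho^d\big(\cos(d\theta) - \sin(d\theta)\tan t\big)$, and use non-degeneracy to get $\sin(d\theta) \ne 0$, so that the map is a homeomorphism. Your explicit checks that the denominator $z+\overline{z} = 2\Rea(z)$ is positive and that the slope is a non-zero affine coefficient are small details the paper leaves implicit.
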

\begin{proof}
	Define $f_d \colon (-\pi/2,\pi/2) \to \rel$ by $f_d(x) = g_d(e^{\im x})$.
	We have that
	\[
	f_d(x) = \frac{\rho^d\cos(x+d\theta)}{\cos(x)}
	=
	\rho^d\big(\cos(d\theta)-\tan(x)\sin(d\theta)\big)
	\]
	where the last equality follows from the usual trigonometric relations.
    As $\mu = e^{\im\theta}$ is not a root of unity, $\sin(d\theta) \ne 0$.
	Because $\tan(x)$ is a homeomorphism from $(-\pi/2,\pi/2)$ to~$\rel$, so is $f_d$.
	It remains to write $g_d(z) = f_d(\Log(z)/\im)$, which is a composition of two homeomorphisms.
\end{proof}
\begin{corollary}
	\label{thm::Jd-are-open-intervals}
	For every $\gamma,\delta \in \rel$ with $\gamma < \delta$ and $d \ge 1$, $\Jcal_d(\gamma,\delta)$ is a non-empty and open interval.
\end{corollary}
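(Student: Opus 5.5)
The plan is to read \Cref{thm::Jd-are-open-intervals} off the preceding lemma, which says that $g_d \colon \torus_+ \to \rel$ is a homeomorphism. By definition, $\Jcal_d(\gamma,\delta) = g_d^{-1}\big((\gamma,\delta)\big)$. Since $\gamma < \delta$, the interval $(\gamma,\delta)$ is non-empty and open in $\rel$. A homeomorphism is in particular a bijection, so $g_d^{-1}$ maps non-empty sets to non-empty sets. It is also continuous, so preimages of open sets are open. Hence $\Jcal_d(\gamma,\delta)$ is a non-empty open subset of $\torus_+$. Because $\torus_+$ is itself open in $\torus$, the set $\Jcal_d(\gamma,\delta)$ is open in $\torus$ as well.

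It remains to see that $\Jcal_d(\gamma,\delta)$ is an interval, i.e.\ an arc of $\torus$. I would argue this through the explicit parametrisation from the proof of the lemma. There, $g_d(e^{\im x}) = f_d(x)$ with $f_d(x) = \rho^d\big(\cos(d\theta) - \tan(x)\sin(d\theta)\big)$ on $(-\pi/2,\pi/2)$. Since $\sin(d\theta) \neq 0$, the function $f_d$ is strictly monotone: it is increasing if $\sin(d\theta) < 0$ and decreasing otherwise. Therefore $f_d^{-1}\big((\gamma,\delta)\big)$ is an open subinterval $(x_1,x_2)$ of $(-\pi/2,\pi/2)$, with endpoints given explicitly by $\tan(x_i) = (\cos(d\theta) - \kappa\rho^{-d})/\sin(d\theta)$ for $\kappa \in \{\gamma,\delta\}$. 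It follows that $\Jcal_d(\gamma,\delta) = \{e^{\im x} \colon x_1 < x < x_2\}$, which is an open arc.

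An alternative route is purely topological: $(\gamma,\delta)$ is connected, homeomorphisms preserve connectedness, and a connected open subset of the arc $\torus_+$ is an open arc. Either argument suffices. The only step that needs any care is the non-vanishing of $\sin(d\theta)$, and that was already established from the non-degeneracy of $\mu$. I therefore expect no real obstacle here.
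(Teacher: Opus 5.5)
Your proposal is correct and follows essentially the same route as the paper, whose proof consists only of the observation $\Jcal_d(\gamma,\delta) = g_d^{-1}((\gamma,\delta))$ combined with the homeomorphism lemma. Your checks of non-emptiness (bijectivity), openness (continuity) and arc structure (monotonicity of $f_d$, or connectedness) simply make explicit what the paper leaves implicit.
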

\begin{proof}
	We have that
	\[
	\Jcal_d(\gamma,\delta) = \{z \in \torus_+ \colon g_d(z) \in (\gamma,\delta)\} = g_d^{-1}((\gamma,\delta)). \qedhere
	\]
\end{proof}

The  intervals $\Jcal_d(\gamma,\delta)$ play a key role in the proof of \Cref{thm::jojo}~\cite{pronormality}.
The idea behind their definition is that for all $n \in \nat$ and $d \ge 1$,
\begin{align}
	\label{eq::fallin-into_Jd}
	\xi\mu^n \in \Jcal_d(\gamma,\delta) &\Leftrightarrow  v_n > 0 \:\land\: \frac{v_{n+d}}{v_n} \in (\gamma, \delta).
\end{align}
Recall from \Cref{sec::lrs-two-roots} that $\seq{\xi \mu^n}$ is dense in $\torus$.
Hence \Cref{thm::Jd-are-open-intervals} tells us that for any $d \ge 1$ and $\gamma < \delta$, we can find infinitely many $n$ such that $v_n > 0$ and
\[
\frac{v_{n+d} }{v_n} \in (\gamma,\delta).
\]
Proving \Cref{thm::main-DA-function-version} amounts to proving a version of the preceding statement that involves arbitrarily many terms of $\seq{v_n}$ as opposed to only $v_n$ and $v_{n+d}$.

We next estimate the length of $\Jcal_d( \gamma,\delta)$.
\Cref{thm::int-sizes-upper} is similar to a result proven in~\cite{pronormality}, whereas \Cref{thm::int-sizes-lower} is much stronger than the analogous result from~\cite{pronormality}.
Recall
that we denote by $\Delta$ the arc distance function on the unit circle $\torus \subseteq \com$.

\begin{lemma}
	\label{thm::int-sizes-upper}
	There exists computable $C_1 > 0$ with the following property.
	For any $0 \le \gamma < \delta \le \frac{\rho+1}{2}$ and $d \ge 1$,
	\[
	|\Jcal_d(\gamma,\delta)| < \frac{C_1(\delta-\gamma)}{\rho^d}.
	\]
\end{lemma}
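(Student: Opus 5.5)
We work in the coordinate $x \in (-\pi/2,\pi/2)$, $z = e^{\im x}$, using the formula from the homeomorphism lemma:
\[
f_d(x) = g_d(e^{\im x}) = \rho^d\big(\cos(d\theta) - \tan(x)\sin(d\theta)\big).
\]
The arc length of $\Jcal_d(\gamma,\delta)$ is the length of the $x$-interval $f_d^{-1}((\gamma,\delta))$. By the mean value theorem it is at most $(\delta-\gamma)/\min|f_d'|$, where the minimum is over that preimage. We have
\[
|f_d'(x)| = \rho^d |\sin(d\theta)| \sec^2(x).
\]
The danger is that $|\sin(d\theta)|$ can be tiny, so $\sec^2(x)$ must compensate.

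The key step is to exploit the constraint $\gamma \ge 0$ and $\delta \le (\rho+1)/2$. On the preimage, $0 \le \rho^d\cos(x+d\theta)/\cos(x) < (\rho+1)/2$, so
\[
|\cos(x+d\theta)| < \frac{\rho+1}{2\rho^{d}}\cos x \le \frac{\rho+1}{2\rho} < 1,
\]
uniformly in $d \ge 1$. Hence $x + d\theta$ stays a fixed positive distance $c = c(\rho) > 0$ away from multiples of $\pi$. Put $y = x + d\theta$ and use
\[
\sin(d\theta) = \sin(y - x) = \sin y\cos x - \cos y \sin x.
\]
Here $|\sin y| \ge s := \sqrt{1-((\rho+1)/2\rho)^2} > 0$, and $|\cos y| \le \frac{\rho+1}{2\rho^d}\cos x$. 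Therefore
\[
|\sin(d\theta)| \ge \cos x\,\bigg(s - \frac{\rho+1}{2\rho^d}|\sin x|\bigg) \ge \cos x\,\bigg(s - \frac{\rho+1}{2\rho}\bigg).
\]
This is not obviously positive, so we refine the estimate. The cleaner route is to write
\[
|f_d'(x)| = \rho^d\frac{|\sin(d\theta)|}{\cos^2 x} = \rho^d\,\frac{|\sin y\cos x - \cos y\sin x|}{\cos^2 x}
\]
and bound it below by $\rho^d\,\big(s/\cos x - |\cos y|/\cos^2 x\big)$. Using $|\cos y|/\cos x < (\rho+1)/(2\rho^d)$, we get
\[
|f_d'(x)| \ge \rho^d\,\frac{s - (\rho+1)/(2\rho^{d})}{\cos x} \ge \rho^d\Big(s - \frac{\rho+1}{2\rho}\Big).
\]
Whether $s > (\rho+1)/(2\rho)$ holds depends on $\rho$. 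This is the main obstacle.

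To handle it, I would restrict further. Since $\gamma \ge 0$ and $\cos x > 0$, we have $\cos y \ge 0$, and $|\cos y| \le \frac{\rho+1}{2\rho^d}$. For large $d$ this forces $y$ to be near $\pm\pi/2$, so $s \to 1$ and the bound $\rho^d/2$ follows for $d \ge d_0$, with $d_0$ computable from $\rho$. For the finitely many $d < d_0$, non-degeneracy gives $\sin(d\theta) \ne 0$. Since $\sec^2 x \ge 1$, we get $|f_d'| \ge \rho^d\min_{d<d_0}|\sin(d\theta)|$, a computable positive constant (computable since $\lambda$ is algebraic, or via \Cref{lem::baker-distance-from-alpha-n-to-b}). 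Combining the two ranges gives $|\Jcal_d(\gamma,\delta)| \le (\delta-\gamma)/(c_0\rho^d)$. Taking $C_1 > 1/c_0$ yields the strict inequality, which holds because the derivative bound is strict on the open interval.

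Alternatively, one could avoid the case split by bounding the $y$-length directly. From $\cos y\,/\cos x \in (\gamma,\delta)\rho^{-d}$ with $\cos y$ bounded away from $\pm1$, $\cos y$ has derivative of size about $s$, but $x$ and $y$ move together. So I expect the derivative computation with the explicit split at $d_0$ to be the workable path.
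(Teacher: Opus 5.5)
Your argument is correct, but it takes a different route from the paper. You parametrise $\torus_+$ by $x\in(-\pi/2,\pi/2)$ and apply the mean value theorem to $f_d$. Everything then rests on a lower bound for $|f_d'(x)|=\rho^d|\sin(d\theta)|\sec^2 x$ on the preimage of $(\gamma,\delta)$. Since $|\sin(d\theta)|$ is not bounded below uniformly in $d$, you have to show that $\sec^2 x$ compensates. You do this via $0<\cos y/\cos x<(\rho+1)/(2\rho^d)$ for $d\ge d_0$, and via the finite, computable minimum $\min_{d<d_0}|\sin(d\theta)|>0$ for the remaining $d$.

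The paper instead works with the endpoints $z_1,z_2$ and the identity $g_d(z)=\kappa\Leftrightarrow z^2=-(\overline{\lambda}^d-\kappa)/(\lambda^d-\kappa)$. This yields the exact formula $|z_1^2-z_2^2|=(\delta-\gamma)\rho^d|\mu^{2d}-1|/(|\lambda^d-\gamma||\lambda^d-\delta|)$. There the dangerous factor $|\mu^{2d}-1|=2|\sin(d\theta)|$ sits in the numerator and is simply bounded by $2$, so no case split and no lower bound on $|\sin(d\theta)|$ is needed. The price is a separate argument that $\Delta(z_1,z_2)<\pi/2$, so that one can pass from $z^2$ back to $z$ via $|z_1+z_2|>\sqrt2$. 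Your calculus approach avoids that step, at the cost of the split.

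The split is in fact an artefact of the lossy step $|\sin y-t\sin x|\ge|\sin y|-t$, which is what fails for small $d$ when $\rho$ is near $1$. At a point with $f_d(x)=\kappa$ one has $\tan x=(\cos(d\theta)-\kappa\rho^{-d})/\sin(d\theta)$, hence $\sec^2x=\rho^{-2d}|\lambda^d-\kappa|^2/\sin^2(d\theta)$. This gives
\[
|f_d'(x)|=\frac{|\lambda^d-\kappa|^2}{\rho^d|\sin(d\theta)|}\ge\Big(1-\frac{\rho+1}{2\rho}\Big)^2\rho^d
\]
uniformly for $d\ge1$ and $0\le\kappa<(\rho+1)/2$. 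This is exactly the differential form of the paper's estimate, with the same constant $c^2$.
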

\begin{proof}
	Let $z_1,z_2 \in \torus_+$ be such that $g_d(z_1) = \gamma$  and $g_d(z_2)~=~\delta$.
	Then $|\Jcal_d(\gamma,\delta)| = \Delta(z_1,z_2)$.
	By \eqref{eq::g_d(z)-square-expression},
	\begin{align*}
		z_1^2 &= -\frac{\overline{\lambda}^d - \gamma}{\lambda^d - \gamma}\\
		z_2^2 &= -\frac{\overline{\lambda}^d - \delta}{\lambda^d - \delta}.
	\end{align*}
	By the geometry of the unit circle, we have that
	\begin{equation}
		\label{eq::int-size-lower-1}
		|z_1 - z_2| < \Delta(z_1,z_2) < \frac{\pi}{2}|z_1-z_2|.
	\end{equation}
	Hence it suffices to estimate
	\begin{equation}
		\label{eq::int-size-lower-2}
		|z_1 - z_2| = \frac{|z_1^2-z_2^2|}{|z_1 + z_2|}.
	\end{equation}
	We first argue that $\Delta(z_1,z_2) < \pi/2$, which implies that
	\begin{equation}
		\label{eq::int-size-lower-3}
		\sqrt{2} < |z_1 + z_2| < 2.
	\end{equation}
	Suppose that $\Delta(z_1,z_2) \ge~\pi/2$.
	Then, by the continuity of $g_d$, there exists $\gamma < \beta \le \delta$ and $z_3 \in \torus_+$ such that $g_d(z_3) = \beta$ and $\Delta(z_1,z_3) = \pi/2$.
	Then either $z_3 = \im z_1$ or $z_3 = - \im z_1$, and hence $z_1^2 = -z_3^2$.
	Applying \Cref{eq::g_d(z)-square-expression}, we obtain that
	\[
	\frac{\overline{\lambda}^d - \gamma}{\lambda^d - \gamma}=-\frac{\overline{\lambda}^d - \beta}{\lambda^d - \beta}
	\]
	which simplifies to
	\[
	2\lambda^d \overline{\lambda}^d - (\beta+\gamma)(\overline{\lambda}^d+\lambda^d) + 2\beta\gamma = 0.
	\]
	Observe that $\lambda^d \overline{\lambda}^d = \rho^{2d}$, $\lambda^d+\overline{\lambda}^d$, and $0 \le \gamma < \beta \le (\rho+1)/2$ are all real and that $\lambda^d+\overline{\lambda}^d \le 2\rho^d$. 
	Hence, by factoring, $(\rho^d-\beta)(\rho^d-\gamma)\le 0$, which implies that $\rho^d \le \beta$. However, $\beta\le(\rho+1)/2 < \rho$, contradicting that $d\ge 1$.
	We conclude that $\Delta(z_1,z_2)  < \pi/2$.
	
	Therefore, to prove an upper bound on $\Delta(z_1,z_2)$ it suffices to consider $|z_1^2-z_2^2|$.
	We have
	\begin{equation}
		\label{eq::int-size-lower-4}
		|z_1^2-z_2^2| = \frac{(\delta-\gamma)\rho^d}{|\lambda^d-\gamma||\lambda^d-\delta|}|\mu^{2d}-1|.
	\end{equation}
	Consider $|\lambda^d - \gamma|$.
	By assumption, $|\lambda^d| = \rho^d > \gamma$.
	Moreover, $\lambda^d$ is non-real.
	Hence $|\lambda^d - \gamma| > \rho^d - \gamma \ge \rho^d - \frac{\rho+1}{2} = \rho^d (1- \frac{\rho+1}{2\rho^d}) \ge  \rho^d \cdot c$ where
	$c = 1- \frac{\rho+1}{2\rho} > 0$.
	Similarly, $|\lambda^d - \delta| > \rho^d \cdot c$.
	Therefore, we only need to bound $|\mu^{2d}-1|$ from above by a constant.
	Since $|\mu^{2d}-1| \le 2$, we obtain
	\[
	|z_1^2-z_2^2| < \frac{2(\delta-\gamma)}{c^2 \rho^d}.
	\]
	and hence
	\[
	\Delta(z_1,z_2) < \frac{\pi}{2} \cdot \frac{2(\delta-\gamma)}{c^2 \rho^d} \cdot \frac{1}{\sqrt{2}}.\qedhere
	\]
\end{proof}
For $d \ge 1$ let $\alpha_d$ be the unique $z \in \torus_+$ such that $g_d(z) = 0$.
For any $d$, $\gamma$, and $\delta$, we refer to $\alpha_d$ as the \emph{anchor point} of $\Jcal_d(\gamma,\delta)$.
The reason for this is that for any fixed $\gamma < \delta$, as $d \to \infty$, the intervals $\Jcal_d(\gamma,\delta)$ become arbitrarily small while getting arbitrarily close to the point $\alpha_d$.
Note that $\alpha_d$ satisfies $\alpha_d \lambda^d + \overline{\alpha_d} \overline{\lambda}^d = 0$ and $|\alpha_d| = 1$.
Hence $\alpha_d \in \{\im \mu^{-d}, -\im \mu^{-d}\}$.

\begin{lemma}
	\label{thm::int-sizes-lower}
	There exists computable $C_2 > 0$ with the following property.
	Let $0 \le \gamma < \delta \le \frac{\rho+1}{2}$, $d \ge 1$, and suppose $\Jcal_d(\gamma,\delta) \subseteq \Ical$.
	Then
	\[
	|\Jcal_d(\gamma, \delta)| > \frac{C_2(\delta-\gamma)}{\rho^d}.
	\]
\end{lemma}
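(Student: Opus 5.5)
The plan is to parametrise the arc directly and use the formula for $f_d$ from the proof that $g_d$ is a homeomorphism. This avoids the route through $|z_1^2 - z_2^2|$ used in \Cref{thm::int-sizes-upper}. That route would need a lower bound on $|\mu^{2d}-1|$, which is not available in general (it would require Baker-type estimates).

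Write $z_1 = e^{\im x_1}$ and $z_2 = e^{\im x_2}$ with $x_1, x_2 \in (-\pi/2, \pi/2)$, $g_d(z_1) = \gamma$ and $g_d(z_2) = \delta$. Then $\Jcal_d(\gamma,\delta)$ is the open arc between $z_1$ and $z_2$ inside $\torus_+$, and $|\Jcal_d(\gamma,\delta)| = |x_2 - x_1|$. Recall that
\[
f_d(x) = \rho^d\big(\cos(d\theta) - \tan(x)\sin(d\theta)\big).
\]
Subtracting the two values gives
\[
\delta - \gamma = \rho^d\,|\sin(d\theta)|\,|\tan(x_2) - \tan(x_1)| \le \rho^d\,|\tan(x_2)-\tan(x_1)|,
\]
where we only use $|\sin(d\theta)| \le 1$.

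Next, use the hypothesis $\Jcal_d(\gamma,\delta) \subseteq \Ical$. Every $x$ strictly between $x_1$ and $x_2$ satisfies $\cos(x) = \Rea(e^{\im x}) > 1/\rho$. Hence $\tan'(x) = \sec^2(x) < \rho^2$ on the open interval between $x_1$ and $x_2$. By the mean value theorem, $|\tan(x_2)-\tan(x_1)| \le \rho^2 |x_2 - x_1|$. Combining this with the previous display,
\[
|\Jcal_d(\gamma,\delta)| = |x_2 - x_1| \ge \frac{\delta-\gamma}{\rho^{d+2}}.
\]
So $C_2 = \rho^{-2}$ works, perhaps halved to get a strict inequality. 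It is computable because $\rho = |\lambda|$ is a computable algebraic number.

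I expect the main obstacle to be only bookkeeping at the endpoints. The inequality $\cos x > 1/\rho$ is guaranteed for interior points of the arc, and the mean value theorem needs only those, so this is fine. One should also check that $\Jcal_d(\gamma,\delta)$ really is the whole arc between $z_1$ and $z_2$, which follows from $g_d$ being a homeomorphism (\Cref{thm::Jd-are-open-intervals}). Notably, the constraint $\delta \le (\rho+1)/2$ is not needed for this direction. It is harmless to keep it for uniformity with \Cref{thm::int-sizes-upper}.
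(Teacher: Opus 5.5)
Your proof is correct, and it takes a different, more direct route than the paper.

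The paper reuses the machinery of \Cref{thm::int-sizes-upper}. It bounds $|\Jcal_d(\gamma,\delta)| > |z_1-z_2| > |z_1^2-z_2^2|/2$ and uses $|z_1^2-z_2^2| = \frac{(\delta-\gamma)\rho^d}{|\lambda^d-\gamma||\lambda^d-\delta|}|\mu^{2d}-1|$. It then bounds the denominator above by $(1+\frac{\rho+1}{2})^2\rho^{2d}$, which uses $\gamma,\delta \le \frac{\rho+1}{2}$. This forces it to bound $|\mu^{2d}-1| = 2|\sin(d\theta)|$ below by a constant.

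Contrary to your opening remark, that lower bound needs no Baker-type input once $\Jcal_d(\gamma,\delta) \subseteq \Ical$ is assumed. The paper picks $D$ with $|\Jcal_d(0,\frac{\rho+1}{2})| < \frac{1}{2\rho}$ for $d \ge D$. For such $d$, the containment forces the anchor point $\alpha_d \in \{\im\mu^{-d}, -\im\mu^{-d}\}$ to satisfy $\Rea(\alpha_d) > \frac{1}{2\rho}$, which gives $|\mu^{2d}-1| \ge \frac{1}{4\rho^2}$. The cases $d < D$ are covered by the finite minimum $A = \min_{1 \le d \le D}|\mu^{2d}-1| > 0$.

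Your parametrisation shows why that detour is an artefact. From $f_d(x_1) = \gamma$ one gets exactly $|\lambda^d-\gamma| = \rho^d|\sin(d\theta)|/\cos x_1$, so the paper's denominator contains $\sin^2(d\theta)$. After cancellation, $|\sin(d\theta)|$ survives only in the denominator, where the trivial bound $|\sin(d\theta)| \le 1$ suffices. The paper's crude upper bound on $|\lambda^d-\gamma|$ discards this information and must recover it through the anchor points.

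Your approach gives an explicit constant $C_2 = \rho^{-2}$ and needs no case split on $d$. As you note, it also holds for all $\gamma < \delta$, without $\gamma \ge 0$ or $\delta \le \frac{\rho+1}{2}$. The paper's approach mainly buys uniformity with the proof of the upper bound.

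One small point: no halving is needed. The mean value point lies in the open arc, where $\sec^2 < \rho^2$ strictly, so your inequality is already strict.
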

\begin{proof}
	Define $z_1$ and $z_2$ as was done in the proof of \Cref{thm::int-sizes-upper}; then $|\Jcal_d(\gamma,\delta)| > |z_1 - z_2|$.
	From \Cref{eq::int-size-lower-1,eq::int-size-lower-2,eq::int-size-lower-3,eq::int-size-lower-4}
	we obtain
	\[
	|z_1-z_2| > \frac{(\delta-\gamma)\rho^d}{2|\lambda^d-\gamma||\lambda^d-\delta|}|\mu^{2d}-1|.
	\]
	By the triangle inequality, $|\lambda^d-\gamma|, |\lambda^d-\delta| < (1 + \frac{\rho+1}{2})\rho^d$.
	Hence it suffices to give a constant lower bound for $|\mu^{2d}-1|$, which we do below.
	
	Let $\widetilde{\Ical} = \{z \in \torus_+ \colon \Rea(z) > 1/(2\rho)\} \supset \Ical$ and $D \ge 1$ be such that $|\Jcal_d(0, \frac{\rho+1}{2})| < \frac{1}{2\rho}$ for all $d \ge D$.
	Then for all $d \ge D$, if $\Jcal_d(\gamma,\delta) \subseteq \Ical$
	then $\Jcal_d(0, \frac{\rho+1}{2}) \subseteq \widetilde{\Ical}$ and, as $\alpha_d$ is an endpoint of $\Jcal_d(0, \frac{\rho+1}{2})$, $\alpha_d \in \widetilde{\Ical}$.
	We define
	\[
	A = \min_{1\le d \le D} |\mu^{2d}-1|
	\]
	which is positive by the assumption that $\mu$ is not a root of unity.
	We will show that
	\[
	|\mu^{2d}-1| \ge \min \{A, 1/(4\rho^2)\}.
	\]
	If $d \le D$, this is immediate.
	Now suppose $d \ge D$.
	Recall that $\mu^d = \frac{\im}{\pm \alpha_d}$ where $\pm \alpha_d$ is one of $\alpha_d$ and $-\alpha_d$.
	Then, as $\alpha_d \in \widetilde{\Ical}$, we have that $\Rea(\alpha_d) > 1/(2\rho)$ and hence
	\[
	|\mu^{2d}-1| = |\mu^d-1| \cdot |\mu^d+1| = |\pm\alpha_d - \im|\cdot |\pm\alpha_d+\im| \ge 1/(4\rho^2).\qedhere
	\]
\end{proof}

We will also need the following lemma, which is analogous to the density of $\xi \mu^n \in \torus$ proven via Kronecker's theorem.

\begin{lemma}
	\label{thm::density-of-intervals}
	For any $0 \le \gamma < \delta < (\rho+1)/2$, the intervals $(\Jcal_d(\gamma,\delta))_{d=1}^\infty$ are dense in $\torus_+$.
\end{lemma}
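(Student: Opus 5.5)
The plan is to show that $\Jcal_d(\gamma,\delta)$ is a very short interval sitting next to the anchor point $\alpha_d$. Density of the intervals then reduces to density of the anchor points, which follows from Kronecker's theorem because $\mu$ is not a root of unity. Concretely, we will show: for every non-empty open arc $J \subseteq \torus_+$ there are infinitely many $d$ with $\Jcal_d(\gamma,\delta) \subseteq J$, which in particular gives $\Jcal_d(\gamma,\delta) \cap J \neq \varnothing$.

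\emph{Step 1: localisation near the anchor.} Since $g_d$ is a homeomorphism $\torus_+ \to \rel$ (hence monotone along the arc $\torus_+$) and $0 \le \gamma < \delta < \frac{\rho+1}{2}$, we have $\Jcal_d(\gamma,\delta) \subseteq \Jcal_d(0,\frac{\rho+1}{2})$. The interval $\Jcal_d(0,\frac{\rho+1}{2})$ has $\alpha_d = g_d^{-1}(0)$ as an endpoint. By \Cref{thm::int-sizes-upper}, applied with $\gamma=0$ and $\delta=\frac{\rho+1}{2}$, its length is less than $C_1\frac{\rho+1}{2}\rho^{-d}$. Hence every point of $\Jcal_d(\gamma,\delta)$ lies within arc distance $C\rho^{-d}$ of $\alpha_d$, where $C = C_1(\rho+1)/2$. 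Because $\rho > 1$, this bound tends to $0$.

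\emph{Step 2: the anchors are dense in $\torus_+$.} Recall that $\alpha_d \in \{\im\mu^{-d}, -\im\mu^{-d}\}$ and $\alpha_d \in \torus_+$. Given an open arc $J \subseteq \torus_+$, pick a smaller open arc $J''$ whose closure lies inside $J$, and let $\epsilon > 0$ be the arc distance from $J''$ to $\torus \setminus J$. Since $\mu = e^{\im\theta}$ with $\theta/\pi$ irrational, Kronecker's theorem makes $(\im\mu^{-d})_{d \ge 1}$ dense in $\torus$. Thus $\im\mu^{-d} \in J''$ for infinitely many $d$. For each such $d$, the point $\im\mu^{-d}$ lies in $\torus_+$ while $-\im\mu^{-d}$ does not. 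Since $\alpha_d$ is the unique element of $\{\pm \im\mu^{-d}\}$ in $\torus_+$, we get $\alpha_d = \im\mu^{-d} \in J''$.

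\emph{Step 3: conclusion.} For all such $d$ that are also large enough that $C\rho^{-d} < \epsilon$, Step 1 gives $\Jcal_d(\gamma,\delta) \subseteq J$. Here we also use that $\Jcal_d(\gamma,\delta)$ is non-empty by \Cref{thm::Jd-are-open-intervals}. This produces infinitely many $d$ with $\Jcal_d(\gamma,\delta) \cap J \ne \varnothing$, which is the claimed density.

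The only delicate point is Step 1: the upper bound on the length of $\Jcal_d(0,\frac{\rho+1}{2})$ must hold for this specific choice of endpoints, and the localisation relies on $\alpha_d$ being an endpoint of that interval. Both follow directly from the monotonicity of $g_d$ and from \Cref{thm::int-sizes-upper}, whose hypothesis $\delta \le \frac{\rho+1}{2}$ is exactly met.
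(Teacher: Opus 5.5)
Your proof is correct and follows the paper's argument. You localise $\Jcal_d(\gamma,\delta)$ next to the anchor $\alpha_d$ via the length bound of \Cref{thm::int-sizes-upper}, then reduce to density of the anchors using Kronecker's theorem and the fact that $\alpha_d=\im\mu^{-d}$ whenever $\im\mu^{-d}\in\torus_+$. The differences are cosmetic: you use the enclosing interval $\Jcal_d(0,\frac{\rho+1}{2})$ instead of the paper's $\Jcal_d(0,\delta)$, and you write out the $\epsilon$-bookkeeping that the paper leaves implicit.
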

\begin{proof}
	Recall that $\alpha_d$ is an endpoint of $\Jcal_d(0, \delta) \subseteq \torus_+$, and that $\lim_{d\to \infty} |\Jcal_d(0,\delta)| = 0$ by \Cref{thm::int-sizes-upper}.
	Therefore, it suffices to prove that $(\alpha_d)_{d=1}^\infty$ is dense in $\torus_+$.
	Recall that $\alpha_d$ is either $\im\mu^{-d}$ or $-\im \mu^{-d}$, whichever is in $\torus_+$. (Exactly one of these is always the case, since by the non-degeneracy assumption, $\mu$ is not a root of unity.)
	By Kronecker's theorem, $(\im\mu^{-d})_{d=1}^\infty$ is dense in $\torus$.
	Since $(\alpha_d)_{d=1}^\infty$ contains all terms of  $(\im\mu^{-d})_{d=1}^\infty$ that lie in $\torus_+$, it is dense in~$\torus_+$.
\end{proof}

Henceforth fix $C_1,C_2 > 0$ as in the two lemmas above.
Choose $\zeta$ such that
\begin{equation}
	\label{zeta-def}
	\zeta > 1, \qquad
	\zeta <  \frac{\rho+1}{2}, \qquad
	\frac{C_1(\zeta-1)}{\rho-1} < |\Ical|.
\end{equation}
We will only work with subintervals of $\Jcal_d(1,\zeta)$ for $d \ge 1$.

The next lemma tells us that as $d_1 \to \infty$, it becomes exponentially harder for $\Jcal_{d_2}(1,\zeta)$ with $d_2 > d_1$ to intersect $\Jcal_{d_1}(1,\zeta)$.

\begin{lemma}
	\label{thm::two-interval-int}
	Suppose $d_1  < d_2$ and
	\begin{equation}
		\label{eq::two-interval-int-1}
		\Jcal_{d_1}(1,\zeta) \cap \Jcal_{d_2}(1,\zeta) \ne \varnothing.
	\end{equation}
	Then
	\begin{equation}
		d_2-d_1 > \bigg(\frac{\rho^{d_1}}{2C_1\zeta}\bigg)^{1/C_0} - 1
	\end{equation}
	where $C_0, C_1$ are the constants of \Cref{lem::baker-distance-from-alpha-n-to-b,thm::int-sizes-upper}, respectively.
\end{lemma}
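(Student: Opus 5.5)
Since both intervals $\Jcal_{d_1}(1,\zeta)$ and $\Jcal_{d_2}(1,\zeta)$ sit close to their anchor points, the plan is to use Baker's theorem to show that two distinct anchor points cannot be very close, and hence that $d_2-d_1$ must be large. The key steps, in order, are as follows.

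First, I will show that each interval lies within a short arc of its anchor point. Since $g_{d}$ is a monotone homeomorphism and $\alpha_d = g_d^{-1}(0)$, the interval $\Jcal_d(1,\zeta)$ is contained in $\Jcal_d(0,\zeta)$, which has $\alpha_d$ as an endpoint. By \Cref{thm::int-sizes-upper} (applicable because $\zeta<(\rho+1)/2$), every point $z\in\Jcal_{d}(1,\zeta)$ therefore satisfies $\Delta(z,\alpha_{d}) \le |\Jcal_d(0,\zeta)| < C_1\zeta/\rho^{d}$. Taking a common point $z$ of the two intervals and using $d_1<d_2$, the triangle inequality gives
\[
\Delta(\alpha_{d_1},\alpha_{d_2}) < \frac{C_1\zeta}{\rho^{d_1}}+\frac{C_1\zeta}{\rho^{d_2}} < \frac{2C_1\zeta}{\rho^{d_1}}.
\]

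Second, I will translate this into a statement about powers of $\mu$. We have $\alpha_{d}=\epsilon_d\im\mu^{-d}$ with $\epsilon_d\in\{\pm1\}$. Writing $\alpha_{d_1}=\alpha_{d_2}\cdot \epsilon\,\mu^{d_2-d_1}$ with $\epsilon=\epsilon_{d_1}\epsilon_{d_2}$, and using rotation invariance of $\Delta$, we get $\Delta(\alpha_{d_1},\alpha_{d_2})=\Delta(\mu^{d_2-d_1},\epsilon)$. Because $\mu$ is not a root of unity, $\mu^{n}\neq\pm1$ for $n=d_2-d_1\ge1$.

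Third, I will apply Baker. Since $\mu=\lambda/\rho$ and $\rho^2=\lambda\overline\lambda$ is algebraic, $\mu\in\torus\cap\alg$. I will apply \Cref{lem::baker-distance-from-alpha-n-to-b} with $\alpha=\mu$ and $\beta=\epsilon$, taking $C_0$ as the maximum of the constants for $\beta=1$ and $\beta=-1$. This gives $\Delta(\mu^n,\epsilon) > (\max\{2,n\})^{-C_0}$. Combining with the first step yields $(\max\{2,n\})^{C_0} > \rho^{d_1}/(2C_1\zeta)$, so $\max\{2,n\}>(\rho^{d_1}/(2C_1\zeta))^{1/C_0}$. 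If $n\ge2$, then $n$ exceeds this quantity, hence exceeds it minus $1$. If $n=1$, then $\max\{2,n\}=2$, so the quantity is less than $2$, and $n=1$ exceeds it minus $1$.

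The main obstacle is the bookkeeping in the first step: justifying that $\alpha_d$ is an endpoint of $\Jcal_d(0,\zeta)$ that contains $\Jcal_d(1,\zeta)$, and that the relevant sign $\epsilon$ fits the single constant $C_0$. Both points follow from monotonicity of $g_d$ and from taking a maximum over finitely many $\beta$.
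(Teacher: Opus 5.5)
Your proof is correct and follows the paper's argument essentially step for step. You bound $\Delta(\alpha_{d_1},\alpha_{d_2})$ by $2C_1\zeta/\rho^{d_1}$ via the anchor points of $\Jcal_d(0,\zeta)$ and \Cref{thm::int-sizes-upper}, rewrite it as the distance from $\mu^{\pm(d_2-d_1)}$ to $\pm1$, and apply \Cref{lem::baker-distance-from-alpha-n-to-b}. The only cosmetic difference is that you split into the cases $n=1$ and $n\ge 2$, whereas the paper simply uses $\max\{2,n\}\le 1+n$.
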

\begin{proof}
		By the triangle inequality,
		\begin{equation}\label{eq::triangle-two-intervals}
			\Delta(\alpha_{d_1}, \alpha_{d_2}) \le \Delta(\alpha_{d_1}, z) + \Delta(z, \alpha_{d_2})
		\end{equation}
		for $z \in \Jcal_{d_1}(1,\zeta) \cap \Jcal_{d_2}(1,\zeta)$.
		Recall that $\alpha_d$ is an endpoint of $\Jcal_d(0, \zeta)$.
		Hence we have
		\begin{equation}
			\Delta(\alpha_{d_1}, \alpha_{d_2}) \le |\Jcal_{d_1}(0,\zeta)| + |\Jcal_{d_2}(0,\zeta)| < \frac{2C_1\zeta}{\rho^{d_1}}
		\end{equation}
		where the last inequality is deduced from \Cref{thm::int-sizes-upper}.
		Write $\alpha_{d_i} = \chi_i\mu^{-d_i}$ for $i = 1,2$ where $\chi_i \in \{\im, -\im\}$.
		Then
		\begin{align*}
			\Delta(\alpha_{d_1}, \alpha_{d_2}) = \Delta(\chi_1\chi_2^{-1}, \mu^{-d_2+d_1}).
		\end{align*}
		Note that $\alpha_{d_1} \ne \alpha_{d_2}$ as otherwise we would have $\mu^{-d_1} = \mu^{-d_2}$ or $\mu^{-d_1} = - \mu^{-d_2}$, contradicting the assumption that $\mu$ is not a root of unity.
		Applying \Cref{lem::baker-distance-from-alpha-n-to-b} on the left-hand side (note that $\max \, \{2,d_2-d_1\} \le 1 + d_2-d_1$),
		\[
		\Delta(\alpha_{d_1}, \alpha_{d_2}) > \frac{1}{(1 + d_2 - d_1)^{C_0}}.
		\]
		Therefore,
		\[
		\frac{1}{(1 + d_2 - d_1)^{C_0}} <  \frac{2C_1\zeta}{\rho^{d_1}}.
		\]
		Rearranging gives the desired conclusion.
\end{proof}

Next, we further study when $\Jcal_{d_1}(1,\zeta) \cap \Jcal_{d_2}(1,\zeta) \ne \varnothing$.

\begin{definition}
	An interval $I \subseteq \Ical$ is protected from time $d \ge 0$ onwards if
	\[
	|I| > \sum_{k=d+1}^\infty |I \cap \Jcal_k(1,\zeta)|.
	\]
\end{definition}

\begin{lemma}
	\label{thm::I-protected}
	The interval $\Ical$ is protected from time $0$ onwards.
\end{lemma}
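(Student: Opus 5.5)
We need $|\Ical| > \sum_{k=1}^\infty |\Ical \cap \Jcal_k(1,\zeta)|$. The plan is a crude union bound: discard the intersection with $\Ical$ and bound each term by the full length of the interval, using \Cref{thm::int-sizes-upper}. Since $1 < \zeta < \frac{\rho+1}{2}$ by \eqref{zeta-def}, the hypotheses $0 \le \gamma < \delta \le \frac{\rho+1}{2}$ of that lemma hold with $(\gamma,\delta) = (1,\zeta)$ for every $d \ge 1$. It gives
\[
|\Ical \cap \Jcal_k(1,\zeta)| \le |\Jcal_k(1,\zeta)| < \frac{C_1(\zeta-1)}{\rho^k}.
\]

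We then sum the geometric series, using $\rho > 1$ from \Cref{sec::lrs-two-roots}:
\[
\sum_{k=1}^\infty |\Ical \cap \Jcal_k(1,\zeta)| < C_1(\zeta-1)\sum_{k=1}^\infty \rho^{-k} = \frac{C_1(\zeta-1)}{\rho-1}.
\]
By the third condition in \eqref{zeta-def}, this is less than $|\Ical|$, which is exactly the statement that $\Ical$ is protected from time $0$ onwards. (The definition requires $\Ical \subseteq \Ical$, which is trivial.)

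There is no real obstacle. The only point needing care is that the bound of \Cref{thm::int-sizes-upper} holds for all $d \ge 1$ with the same constant $C_1$, so the series bound is uniform; this is guaranteed by the statement of the lemma. Indeed, $\zeta$ was chosen in \eqref{zeta-def} precisely so that this computation goes through.
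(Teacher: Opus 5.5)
Your proposal is correct and is essentially identical to the paper's proof. The paper also drops the intersection with $\Ical$, bounds each $|\Jcal_d(1,\zeta)|$ by \Cref{thm::int-sizes-upper}, sums the geometric series to get $\frac{C_1(\zeta-1)}{\rho-1}$, and concludes using the third condition in \eqref{zeta-def}.
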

\begin{proof}
	By \Cref{thm::int-sizes-upper} and the construction of $\zeta$,
	\[
	\sum_{d=1}^\infty |\Ical \cap \Jcal_d(1,\zeta)| \le \sum_{d=1}^\infty |\Jcal_d(1,\zeta)| \le \frac{C_1(\zeta-1)}{\rho-1} < |\Ical|.
	\qedhere
	\]
\end{proof}

We are now ready to prove a version of \Cref{thm::main-DA-function-version} for $\seq{v_n}$; soon thereafter we will move to $\seq{u_n}$.
First, a helpful lemma.

\begin{lemma}
	\label{thm::helpful-1}
	Let $1 < \gamma < \delta < \zeta$.
	There exists $D \ge 0$ such that for all $d \ge D$, if $\Jcal_d(\gamma, \delta) \subseteq \Ical$ then $\Jcal_d(\gamma, \delta)$ is protected from time $d$ onwards.
\end{lemma}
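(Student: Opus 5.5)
We need to show that for large $d$, if $J \coloneqq \Jcal_d(\gamma,\delta) \subseteq \Ical$, then
\[
|J| > \sum_{k=d+1}^\infty |J \cap \Jcal_k(1,\zeta)|.
\]
The plan is to bound the left-hand side from below by \Cref{thm::int-sizes-lower} and the right-hand side from above by combining \Cref{thm::two-interval-int} with \Cref{thm::int-sizes-upper}. The right-hand side should then be exponentially smaller relative to $\rho^{-d}$.

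\textbf{Lower bound.} Since $0 \le \gamma < \delta < \zeta < (\rho+1)/2$ and $J \subseteq \Ical$, \Cref{thm::int-sizes-lower} gives
\[
|J| > \frac{C_2(\delta-\gamma)}{\rho^d}.
\]

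\textbf{Upper bound.} Note first that $J \subseteq \Jcal_d(1,\zeta)$, because $(\gamma,\delta) \subseteq (1,\zeta)$. Hence a term $J \cap \Jcal_k(1,\zeta)$ with $k > d$ can be non-empty only if $\Jcal_d(1,\zeta) \cap \Jcal_k(1,\zeta) \ne \varnothing$. By \Cref{thm::two-interval-int}, this forces
\[
k \ge d + K_d, \qquad K_d \coloneqq \bigg(\frac{\rho^{d}}{2C_1\zeta}\bigg)^{1/C_0} - 1.
\]
Bounding each surviving term by $|\Jcal_k(1,\zeta)| < C_1(\zeta-1)\rho^{-k}$ via \Cref{thm::int-sizes-upper} and summing the geometric tail gives
\[
\sum_{k=d+1}^\infty |J \cap \Jcal_k(1,\zeta)| \le \frac{C_1(\zeta-1)\rho}{\rho-1}\,\rho^{-d}\,\rho^{-K_d}.
\]

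\textbf{Comparison.} It remains to choose $D$ so that for all $d \ge D$,
\[
\frac{C_1(\zeta-1)\rho}{\rho-1}\,\rho^{-K_d} < C_2(\delta-\gamma).
\]
This holds for all large $d$, since $K_d \to \infty$ (indeed exponentially fast in $d$) while the right-hand side is a fixed positive constant depending only on $\gamma,\delta$. One can even take $D$ computable, since all constants involved are computable.

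The only real obstacle is the bookkeeping in the upper bound. The two facts needed are that all intersections with indices $k \in (d, d+K_d)$ vanish, which is exactly the content of \Cref{thm::two-interval-int}, and that the bound on the remaining tail is uniform in where $J$ sits. Both are supplied by the cited lemmas, so the argument should go through directly.
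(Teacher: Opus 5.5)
Your proposal is correct and follows essentially the same route as the paper's proof. Both bound $|\Jcal_d(\gamma,\delta)|$ from below via \Cref{thm::int-sizes-lower}, and both use \Cref{thm::two-interval-int} (through $\Jcal_d(\gamma,\delta)\subseteq\Jcal_d(1,\zeta)$) to kill all intersections with $k<d+K_d$. They then bound the remaining tail geometrically via \Cref{thm::int-sizes-upper} and pick $D$ so that this tail is smaller than $C_2(\delta-\gamma)\rho^{-d}$; the paper simply writes the cutoff as $\chi(d)=\lceil d+K_d\rceil$.
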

\begin{proof}
	Applying \Cref{thm::two-interval-int}, we have that for all $d > 0$,
		\begin{align*}
			\sum_{k=d+1}^\infty |\Jcal_{d}(\gamma,\delta) \cap \Jcal_k(1,\zeta)|
			\le
			\sum_{k=\chi(d)}^\infty | \Jcal_k(1,\zeta)| 
			\le \frac{C_1 (\zeta-1)}{\rho^{\chi(d)-1} \cdot (\rho-1)}
		\end{align*}
		where
		\[
		\chi(d) = \bigg\lceil d + \bigg(\frac{\rho^{d}}{2C_1\zeta}\bigg)^{1/C_0}-1  \bigg\rceil.
		\]
		Because $\chi(d)$ grows much faster than $d$, we can compute $D$ such that for all $d \ge D$,
		\[
		\frac{C_1 (\zeta-1)}{\rho^{\chi(d)-1} \cdot (\rho-1)} < \frac{C_2(\delta-\gamma)}{\rho^d}.
		\]
		Therefore, for all $d \ge D$, if $\Jcal_d(\gamma,\delta) \subseteq \Ical$ then by \Cref{thm::int-sizes-lower},
		\[
		|\Jcal_d(\gamma,\delta)| > \frac{C_2(\delta-\gamma)}{\rho^d} > \sum_{k=d+1}^\infty |\Jcal_{d}(\gamma,\delta) \cap \Jcal_k(1,\zeta)|.
		\qedhere
		\]
\end{proof}

\begin{lemma}
	\label{thm::stacking-intervals}
	Suppose we are given $\ell \ge 1$ and for $1 \le j \le \ell$, $1 < \gamma_j < \delta_j < \zeta$.
	We can construct $0 < d_1 < \cdots < d_\ell$ with the following properties.
	\begin{itemize}
		\item[(a)] $\Ical \supseteq \Jcal_{d_1}(\gamma_1,\delta_1) \supseteq \cdots \supseteq \Jcal_{d_\ell}(\gamma_\ell,\delta_\ell)$.
		\item[(b)] For all $1 \le d \le d_\ell$, if $d \notin \{d_1,\ldots,d_\ell\}$ then
		\[
		\Jcal_d(1,\zeta) \cap \Jcal_{d_\ell}(1, \zeta) = \varnothing.
		\]
		\item[(c)] Every $\Jcal_{d_j}(\gamma_j, \delta_j)$ is protected from time $d_j$ onwards.
	\end{itemize}
\end{lemma}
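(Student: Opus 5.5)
We construct $d_1<\dots<d_\ell$ by induction on $j$. At each step we apply the Hieronymi--Schulz interval stacking lemma (\Cref{thm::hs}) with the sequence $\big(\Jcal_n(1,\zeta)\big)_n$ and, as the ambient interval $I$, the most recently constructed interval. We maintain a slightly stronger invariant than (b): \emph{for every $1\le d\le d_j$ with $d\notin\{d_1,\dots,d_j\}$ we have $\Jcal_d(1,\zeta)\cap\Jcal_{d_j}(1,\zeta)=\varnothing$}, together with (a) and (c) up to index $j$.

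\emph{Base case.} Take $I=\Ical$ and $I_n=\Jcal_n(1,\zeta)$ for $n\ge 1$.
\begin{itemize}
\item The summability hypothesis of \Cref{thm::hs} is exactly \Cref{thm::I-protected}.
\item Density of $(I_n)$ in $\Ical$ follows from \Cref{thm::density-of-intervals} (with $\gamma=1$, $\delta=\zeta<(\rho+1)/2$), since $\Ical\subseteq\torus_+$ is open. Density in $\torus_+$ gives infinitely many $n$ meeting any open $J\subseteq\Ical$: removing finitely many of the $I_n$ does not destroy density, because their lengths tend to $0$ by \Cref{thm::int-sizes-upper}.
\end{itemize}
\Cref{thm::hs} then yields infinitely many $k$ with $\Jcal_k(1,\zeta)\subseteq\Ical$ and $\Jcal_k(1,\zeta)\cap\Jcal_n(1,\zeta)=\varnothing$ for all $1\le n<k$. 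We choose such a $k\ge D$, where $D$ is given by \Cref{thm::helpful-1} for $(\gamma_1,\delta_1)$, and set $d_1=k$. Since $(\gamma_1,\delta_1)\subseteq(1,\zeta)$, we have $\Jcal_{d_1}(\gamma_1,\delta_1)\subseteq\Jcal_{d_1}(1,\zeta)\subseteq\Ical$. By \Cref{thm::helpful-1}, $\Jcal_{d_1}(\gamma_1,\delta_1)$ is protected from time $d_1$ onwards. The invariant holds by the choice of $k$.

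\emph{Inductive step.} Put $I=\Jcal_{d_j}(\gamma_j,\delta_j)$ and $I_n=\Jcal_{d_j+n}(1,\zeta)$ for $n\ge1$.
\begin{itemize}
\item Protection of $I$ from time $d_j$ onwards is precisely the summability hypothesis of \Cref{thm::hs}.
\item Density in $I$ follows as in the base case, since $I\subseteq\torus_+$ is open.
\end{itemize}
We obtain infinitely many $k>d_j$ with $\Jcal_k(1,\zeta)\subseteq I$ and $\Jcal_k(1,\zeta)\cap\Jcal_n(1,\zeta)=\varnothing$ for $d_j<n<k$. Choose such $k$ at least the $D$ of \Cref{thm::helpful-1} for $(\gamma_{j+1},\delta_{j+1})$, and set $d_{j+1}=k$. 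Then:
\begin{itemize}
\item (a) holds because $\Jcal_{d_{j+1}}(\gamma_{j+1},\delta_{j+1})\subseteq\Jcal_{d_{j+1}}(1,\zeta)\subseteq I\subseteq\Ical$.
\item (c) holds for $j+1$ by \Cref{thm::helpful-1}.
\item For the invariant, take $d\le d_{j+1}$ not among the $d_i$. If $d_j<d<d_{j+1}$, disjointness comes directly from \Cref{thm::hs}. If $d<d_j$, then $\Jcal_{d_{j+1}}(1,\zeta)\subseteq\Jcal_{d_j}(\gamma_j,\delta_j)\subseteq\Jcal_{d_j}(1,\zeta)$, and the latter is disjoint from $\Jcal_d(1,\zeta)$ by the inductive invariant.
\end{itemize}
At $j=\ell$ the invariant is (b).

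\emph{Effectiveness.} The main point needing care is that the $d_j$ can be \emph{constructed}, not merely shown to exist. The proof of \Cref{thm::hs} is effective here:
\begin{itemize}
\item the endpoints of $\Jcal_d(\gamma,\delta)$ are algebraic and computable via \eqref{eq::g_d(z)-square-expression};
\item tails $\sum_{n>N}|\Jcal_n(1,\zeta)|$ are effectively bounded by \Cref{thm::int-sizes-upper};
\item so one can find a component $J$ of $I\setminus\bigcup_{n\le N}I_n$ with length exceeding the tail bound, then a subarc $J'$ whose endpoints avoid all $I_n$, and then search for the first $n$ with $I_n\cap J'\ne\varnothing$, which terminates by density.
\end{itemize}
The only genuinely nonconstructive-looking step is picking points $z_1,z_2\notin\bigcup_n I_n$. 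I would handle it by a measure argument: inside $J$ take two disjoint closed subarcs, each longer than the tail sum from $N+1$ onwards. Each such subarc must contain uncovered points, and it suffices to know that $J'$ lies between them, so explicit points are never needed. Alternatively one can simply search over $k$ and test the disjointness conditions directly, which terminates by the existence proof.
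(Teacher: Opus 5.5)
Your argument is correct and is essentially the paper's proof. Both proceed by induction: apply \Cref{thm::hs} to the most recent interval $\Jcal_{d_j}(\gamma_j,\delta_j)$ with the tail $(\Jcal_d(1,\zeta))_{d>d_j}$, use (c) as the summability hypothesis, take $d_{j+1}$ beyond the threshold of \Cref{thm::helpful-1}, and obtain (b) for $d<d_j$ from the nesting $\Jcal_{d_{j+1}}(1,\zeta)\subseteq\Jcal_{d_j}(1,\zeta)$, which you make explicit where the paper leaves it implicit. In your effectiveness remarks (which go beyond the paper), the direct search over $k$ that checks the finitely many algebraic containment and disjointness conditions is the sound option; the two-subarc trick does not work as stated, because the first $\Jcal_n(1,\zeta)$ meeting the gap between the subarcs can still overlap an earlier interval that meets the subarcs themselves.
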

\begin{proof}
	We proceed by induction on $\ell$.
    For $\ell = 1$, by \Cref{thm::I-protected,thm::density-of-intervals}, we can apply \Cref{thm::hs} to $\Ical$ and $(\Jcal_d(1,\zeta))_{d\ge1}$ and let $d_1$ be sufficiently large. 
    Then (a) and (b) follow from $\Jcal_{d_1}(\gamma_1,\delta_1) \subseteq \Jcal_{d_1}(1,\zeta) \subseteq \Ical$, and (c) from \Cref{thm::helpful-1}.

	Next, consider $\ell = m + 1 \ge 2$.
	Apply the induction hypothesis with $\gamma_1, \delta_1, \ldots, \gamma_m,\delta_m$ to construct $d_1, \ldots, d_m$.
	By \Cref{thm::density-of-intervals}, $(\Jcal_d(1,\zeta))_{d > d_m}$ are dense in $\Jcal_{d_m}(\gamma_m, \delta_m)$.
	Applying~(c) of the induction hypothesis and \Cref{thm::hs}, there exist infinitely many $d > d_m$ such that $\Jcal_d(1,\zeta) \subseteq \Jcal_{d_m}(\gamma_m, \delta_m) \subseteq \Ical$ and for all $1 \le d' < d$ not equal to any $d_i$, $\Jcal_d(1,\zeta) \cap \Jcal_{d'}(1,\zeta) = \varnothing$.
	Applying \Cref{thm::helpful-1}, we can construct infinitely many $d$ for which we additionally have that $\Jcal_d(\gamma_\ell, \delta_\ell)$ is protected from time $d$ onwards.
	We can then pick $d_\ell$ to be equal to any such~$d$.
\end{proof}

Before finally proving \Cref{thm::main-DA-function-version}, let us have a look at what \Cref{thm::stacking-intervals} immediately gives us.
Consider $1 < \gamma_j < \delta_j < \zeta$ for $1 \le j \le \ell$.
Construct $d_1,\ldots,d_\ell > 0$ using \Cref{thm::stacking-intervals}.
By the density of $\seq{\xi\mu^n}$ in $\torus$, there exist infinitely many $\xi \mu^n \in \Jcal_{d_\ell}(\gamma_\ell, \delta_\ell) \subseteq \Ical$.
Then $v_n > 0$, and by~(a), $\xi \mu^n \in \Jcal_{d_j}(\gamma_j, \delta_j)$ for all $j$.
Write $n_j = n + d_j$ for $1 \le j \le \ell$.
By \Cref{eq::fallin-into_Jd}, 
\[
\frac{v_{n_j}}{v_n} \in (\gamma_j, \delta_j) \subset (1, \zeta)
\]
for all $1 \le j \le \ell$.
On the other hand, from~(b) it follows that for all $n \le m \le n_\ell$,
\begin{equation}
	\label{eq::type2-main-3}
	m \notin \{n_1,\ldots,n_\ell\} \Rightarrow \frac{v_m}{v_n} \notin (1, \zeta).
\end{equation}
Therefore, the pattern $v_{n}, v_{n_1}, \ldots, v_{n_\ell}$ appears in the ordering of $(v_m)_{m=n}^{n_\ell}$, and has the ratios controlled by $(\gamma_j,\delta_j)$ for $1 \le j \le \ell$.

\begin{proof}[Proof of \Cref{thm::main-DA-function-version}]
	Let $\zeta$ be as in \Cref{zeta-def} and $M$ be as in \Cref{thm::from-un-to-vn}.
	Suppose we are given $1 < \gamma_j < \delta_j < \zeta$ for $1 \le j \le \ell$.
	First construct rationals $\widetilde{\gamma}_j, \widetilde{\delta}_j$ satisfying $\gamma_j < \widetilde{\gamma}_j < \widetilde{\delta}_j < \delta_j$ for $1 \le j \le \ell$.
	Apply \Cref{thm::stacking-intervals} with $\widetilde{\gamma}_j, \widetilde{\delta}_j$ to construct $d_1 < \cdots < d_\ell$, and let $I = \Jcal_{d_\ell}(\widetilde{\gamma}_\ell,\widetilde{\delta}_\ell)$ and $D = \{1 \le d \le d_\ell \colon d \ne d_1,\dots,d_\ell\}$.
	As $g_{d_\ell}$ is a homeomorphism, $\overline{I} = g_{d_\ell}^{-1}([\widetilde{\gamma}_\ell,\widetilde{\delta}_\ell])$ is a compact arc contained in the open set $\Jcal_{d_\ell}(1,\zeta)$.
	By \Cref{thm::stacking-intervals}~(b), for $d \in D$ the latter is disjoint from $\Jcal_d(1,\zeta)$.
	Thus $g_d(\overline{I})$ is a compact interval disjoint from $[1,\zeta]$, and either $\max_{\overline{I}} g_d < 1$ or $\min_{\overline{I}} g_d > \zeta$.
	Set
	\begin{align*}
		\tilde{\zeta}_- &= \max\left(\{1/2\} \cup \left\{ \max_{z \in \overline{I}} g_d(z) \;\colon\; d \in D, \ \max_{z \in \overline{I}} g_d(z) < 1 \right\}\right) \quad \text{and}\\
		\tilde{\zeta}_+ &= \min\left(\{2\zeta\} \cup \left\{ \min_{z \in \overline{I}} g_d(z) \;\colon\; d \in D, \ \min_{z \in \overline{I}} g_d(z) > \zeta \right\}\right).
	\end{align*}
	Then $\tilde{\zeta}_- < 1$, $\tilde{\zeta}_+ > \zeta$, and $g_d(\overline{I}) \cap (\tilde{\zeta}_-, \tilde{\zeta}_+) = \varnothing$ for all $d \in D$.
	Take a sufficiently small $\varepsilon \in (0,1)$ such that
	\begin{align*}
		(1+\varepsilon) \cdot \widetilde{\delta}_j \cdot \frac{1}{1-\varepsilon} &< \delta_j,  \quad (1+\varepsilon) \cdot \tilde{\zeta}_- \cdot \frac{1}{1-\varepsilon} < 1,\\
		(1-\varepsilon) \cdot \widetilde{\gamma}_j \cdot \frac{1}{1+\varepsilon} &> \gamma_j,\quad (1-\varepsilon) \cdot \tilde{\zeta}_+ \cdot \frac{1}{1+\varepsilon} > \zeta.
	\end{align*}
	Apply \Cref{thm::vn-close-to-un} with $\varepsilon$ to construct $M_\varepsilon$, and let $\widetilde{M} = \max \{M, M_\varepsilon\}$.
	By \Cref{thm::from-un-to-vn,thm::vn-close-to-un}, $u_m/v_m \in (1-\varepsilon, 1+\varepsilon)$ for all $m \ge \widetilde{M}$.
	By the density of $\xi \mu^n$ in~$\torus$, there exist infinitely many $n \ge \widetilde{M}$ with $\xi \mu^n \in I$.
	Pick such $n$, and let $\widetilde{n} = n + d_\ell + 1$.
	Because $\xi \mu^n \in I \subseteq \torus_+$, we have $v_n > 0$, and by \Cref{thm::from-un-to-vn}, $u_n > 0$.
	Recall that $v_{n+d}/v_n = g_d(\xi\mu^n)$ for $d \ge 1$.
	Let $n < m < \widetilde{n}$.
	\begin{itemize}
		\item If $m = n_j \coloneqq n+d_j$ for some $1 \le j \le \ell$, then by \Cref{thm::stacking-intervals}~(a), $v_{n_j} / v_n \in (\widetilde{\gamma}_j, \widetilde{\delta}_j)$. Hence
		\[
		\frac{u_{n_j}}{u_n} = \frac{u_{n_j}}{v_{n_j}} \cdot \frac{v_{n_j}}{v_n} \cdot \frac{v_n}{u_n} < (1+\varepsilon) \cdot \widetilde{\delta}_j \cdot \frac{1}{1-\varepsilon} < \delta_j,
		\]
		and similarly $u_{n_j}/u_n > \gamma_j$.
		Thus $u_{n_j}/u_n \in (\gamma_j,\delta_j) \subseteq (1,\zeta)$.
		\item Otherwise $m - n \in D$.
		If $v_m < 0$, then $u_m < 0 < u_n$ by \Cref{thm::from-un-to-vn}, so $u_m/u_n < 1$.
		If $v_m > 0$, then $v_m / v_n \in (0,\tilde{\zeta}_-] \cup [\tilde{\zeta}_+, \infty)$ as $\xi\mu^n \in \overline{I}$.
		If $v_m/v_n \le \tilde{\zeta}_-$, the same method as before gives $u_m/u_n < \frac{1+\varepsilon}{1-\varepsilon} \tilde{\zeta}_- < 1$.
		If $v_m/v_n \ge \tilde{\zeta}_+$, it gives $u_m/u_n > \frac{1-\varepsilon}{1+\varepsilon} \tilde{\zeta}_+ > \zeta$.
		Thus $u_m/u_n \notin [1,\zeta]$.
	\end{itemize}
	The theorem follows.
\end{proof}

\section{Proof of \Cref{thm::main-DA}}
\label{sec::proof-of-DA-hard-version}
We will proceed similarly to the proof of \Cref{thm::main-DA-function-version}, but we need much stronger technical machinery.
Let $C_1, C_2 > 0$ and $\zeta > 1$ be as in the previous section, and $C_0$ be the maximum of constants of \Cref{lem::baker-distance-from-alpha-n-to-b} applied to $\alpha,\beta$ belonging to $\{\mu,\mu^{-1},1,-1,\im/\zeta, -\im/\zeta\}$.
\begin{lemma}
	\label{thm::log-bound}
	There exists computable $C_3 > 0$ such that for all $1 \le \gamma < \delta \le \zeta$ and $d, n \ge 1$, if $\xi\mu^n \in \mathcal{J}_d(\gamma,\delta)$
	then
	\[
	d < C_3 \log(n+1).
	\]
\end{lemma}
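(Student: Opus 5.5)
The plan is to use the geometry of the intervals $\Jcal_d$ to say that $\xi\mu^n$ must lie exponentially close (in $d$) to the anchor point $\alpha_d$. Baker's theorem, in the form of \Cref{lem::baker-distance-from-alpha-n-to-b}, says this closeness can only be polynomial in $n+d$. Comparing the two bounds forces $d = O(\log(n+1))$.

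\textbf{Step 1: upper bound on the distance to the anchor.} Since $1 \le \gamma < \delta \le \zeta < (\rho+1)/2$, we have $\Jcal_d(\gamma,\delta) \subseteq \Jcal_d(0,\zeta)$. The anchor point $\alpha_d$ is an endpoint of the interval $\Jcal_d(0,\zeta)$. By \Cref{thm::int-sizes-upper}, that interval has length less than $C_1\zeta/\rho^d$. Hence $\xi\mu^n \in \Jcal_d(\gamma,\delta)$ gives
\[
\Delta(\xi\mu^n,\alpha_d) < \frac{C_1\zeta}{\rho^d}.
\]
Since $\alpha_d = \chi\mu^{-d}$ with $\chi\in\{\im,-\im\}$, and multiplying by a unit-modulus number preserves arc distance, this is the same as
\[
\Delta\big(\mu^{n+d}, \chi\overline{\xi}\big) < \frac{C_1\zeta}{\rho^d}.
\]

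\textbf{Step 2: Baker lower bound.} Both $\mu$ and $\chi\overline{\xi}$ lie in $\torus\cap\alg$, since $\xi = a/|a|$ is algebraic. So \Cref{lem::baker-distance-from-alpha-n-to-b} applies with $\alpha=\mu$ and $\beta = \pm\im\overline{\xi}$; enlarging $C_0$ if necessary, we take the maximum of the constants over the two sign choices.

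It remains to exclude the exceptional case $\mu^{n+d} = \chi\overline{\xi}$. That case means $\xi\mu^{n+d} = \pm\im$, so $\cos((n+d)\theta+\varphi)=0$ and $v_{n+d}=0$. But by \eqref{eq::fallin-into_Jd}, $\xi\mu^n \in \Jcal_d(\gamma,\delta)$ gives $v_n>0$ and $v_{n+d}/v_n > \gamma \ge 1$, so $v_{n+d}>0$, a contradiction. This is exactly where the hypothesis $\gamma \ge 1$ is used. Therefore
\[
\frac{1}{(\max\{2,n+d\})^{C_0}} < \frac{C_1\zeta}{\rho^d}, \qquad\text{i.e.}\qquad d\log\rho < \log(C_1\zeta) + C_0\log(\max\{2,n+d\}).
\]

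\textbf{Step 3: extracting the logarithmic bound.} We split into two cases.
\begin{itemize}
\item If $d \le n$, then $\log\max\{2,n+d\} \le \log(2n) \le 2\log(n+1)$ for $n\ge1$. This gives $d < \big(\log(C_1\zeta)/\log 2 + 2C_0\big)\log(n+1)/\log\rho$, using $\log(n+1)\ge\log 2$ to absorb the additive constant.
\item If $d > n$, the inequality becomes $d\log\rho < \log(C_1\zeta) + C_0\log(2d)$. Since the left side is linear in $d$ and the right side logarithmic, this forces $d < D_0$ for an explicitly computable constant $D_0$. Then $d < (D_0/\log 2)\log(n+1)$.
\end{itemize}
Taking $C_3$ to be the maximum of the two constants gives the claim, and $C_3$ is computable from $\rho$, $C_0$, $C_1$ and $\zeta$.

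The only delicate point is the exceptional case in Step 2 (ruling out $\alpha^n=\beta$ in Baker's lemma). It is handled by the positivity of $v_{n+d}$, which is why the lemma is stated with $\gamma \ge 1$ rather than $\gamma\ge0$. The remaining steps are routine bookkeeping of constants.
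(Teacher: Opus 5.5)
Your proof is correct and follows essentially the same route as the paper. You bound $\Delta(\xi\mu^n,\alpha_d)$ above by $|\Jcal_d(0,\zeta)|<C_1\zeta/\rho^d$, rewrite it as $\Delta(\mu^{n+d},\pm\im\overline{\xi})$, apply the Baker bound of \Cref{lem::baker-distance-from-alpha-n-to-b}, and compare; the paper splits on $d\ge D$ versus $d<D$ rather than $d\le n$ versus $d>n$, which is immaterial. One correction to your commentary: the exceptional case is already excluded because $g_d(\alpha_d)=0\notin(\gamma,\delta)$ and $g_d$ is injective, so $\gamma\ge 0$ would suffice at that step, and $\gamma\ge 1$ is not what makes it work.
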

\begin{proof}
	It suffices to prove the statement for $\gamma = 1$ and $\delta = \zeta$.
	By \Cref{thm::int-sizes-upper}, whenever $\xi \mu^n \in \Jcal_d(1,\zeta)$ we have that
	\[
	0 < \Delta(\xi \mu^n, \alpha_d) < \frac{C_1\zeta}{\rho^d}.
	\]
	Since $|\mu| = 1$ and $\alpha_d$ is one of $\pm \im\mu^{-d}$,
	\[
	\Delta(\xi\mu^n, \alpha_d) =  \Delta(\mu^{n+d}, z)
	\]
	where $z$ is one of $\pm \im/ \xi$.
	Applying \Cref{lem::baker-distance-from-alpha-n-to-b} to the right-hand side gives
	\[
	\frac{C_1\zeta}{\rho^d} > \Delta(\xi\mu^n, \alpha_d) > \frac{1}{(n+d)^{C_0}}
	\]
	for all $d, n \ge 1$ such that $\xi \mu^n \in \Jcal_d(1,\zeta)$.
	Hence
	\[
	n  > \frac{(1+\varepsilon)^d}{( C_1\zeta)^{1/C_0}} - d
	\]
	where $\varepsilon \coloneqq \rho^{1/C_0} - 1 > 0$.
	Let $D \ge 1$ and $\widetilde{\varepsilon} \in (0, \varepsilon)$ be such that for all $d \ge D$,
	\[
	\frac{(1+\varepsilon)^d}{( C_1\zeta)^{1/C_0}} - d > (1 + \widetilde{\varepsilon})^d.
	\]
	Then for all $n \ge 1$ and $d \ge D$ such that $\xi \mu^n \in \Jcal_d(1,\zeta)$,
	\[
	n+ 1 > n > (1 + \widetilde{\varepsilon})^d
	\]
	which implies that
	\[
	\frac{1}{\log(1 + \widetilde{\varepsilon})} \cdot \log(n+1) > d.
	\]
	Finally, note that for $d < D$ and $n \ge 1$, regardless of whether $\xi \mu^n \in \Jcal_d(1,\zeta)$ we have that $d < \frac{D}{\log(2)} \log(n+1)$ as $\log(n+1)/\log(2) \ge 1$.
	We can therefore take
	\[
	C_3 = \max \bigg \{\frac{D}{\log(2)} ,  \frac{1}{\log(1+\widetilde{\varepsilon})} \bigg\}. \qedhere
	\]
\end{proof}

Next, we use Baker's theorem to prove a bound on how long it takes for $\seq{\mu^n}$ to fall into a given sub-interval of $\torus$.

\begin{lemma}
	\label{thm::falling-into-an-interval}
	There exists  computable $C_4 > 0$ such that for any non-empty open sub-interval $J \subseteq \torus$, the following holds.
	For any $N \in \nat$, there exists
	\[
	N \le n < N + \bigg( \frac{4\pi}{|J|} \bigg)^{C_4}
	\]
	such that $\mu^n \in J$.
\end{lemma}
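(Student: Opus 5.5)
The plan is a quantitative version of Kronecker's theorem: combine the pigeonhole principle with the Baker-type bound of \Cref{lem::baker-distance-from-alpha-n-to-b}. Write $L = 4\pi/|J|$. Since $|J| \le 2\pi$, we have $L \ge 2$, so every constant factor and additive constant can later be absorbed into a power of $L$.

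First, find a small rotation. Let $K = \lceil L \rceil$, so $2\pi/K \le |J|/2$. Partition $\torus$ into $K$ arcs of length $2\pi/K$ and look at the $K+1$ points $\mu^0,\ldots,\mu^K$. By pigeonhole, two of them, $\mu^{i}$ and $\mu^{j}$ with $i<j$, lie in the same arc. Put $q = j - i$, so $1 \le q \le K$ and $s \coloneqq \Delta(\mu^q, 1) \le 2\pi/K \le |J|/2$.

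Second, bound $s$ from below, which is where Baker's theorem enters. Since $\mu$ is not a root of unity, $\mu^q \ne 1$. \Cref{lem::baker-distance-from-alpha-n-to-b} with $\alpha = \mu$, $\beta = 1$ then gives $s > \max\{2,q\}^{-C_0} \ge (K+1)^{-C_0}$. Multiplication by $\mu^q$ is a rotation of $\torus$ by the fixed angle $\pm s$. Starting from $\mu^N$, consider the points $\mu^{N+tq}$ for $t = 0,1,\ldots,T$ with $T = \lceil 2\pi/s \rceil$. Consecutive points are at arc distance $s < |J|$, and together they sweep the whole circle. Hence some point lands in the open arc $J$: an arc of length $|J| > s$ cannot be skipped by steps of length $s$ once a full turn has been made.

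Finally, count steps. The resulting $n = N + tq$ satisfies
\[
n - N \le Tq \le \big(2\pi (K+1)^{C_0} + 1\big) K .
\]
Using $K+1 \le L+2 \le 2L$ and $L \ge 2$, this is at most $L^{C_4}$ for an explicit computable $C_4$ depending only on $C_0$, for instance $C_4 = C_0 + 5$. It is independent of $N$ and $J$. We may take the inequality strict by increasing $C_4$ slightly.

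The only genuinely delicate step is the lower bound on $s$. Without it, the step count $2\pi/s$ could be arbitrarily large, and we would get no uniform bound. Everything else is elementary bookkeeping.
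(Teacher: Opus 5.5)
Your proof is correct and is essentially the paper's argument: pigeonhole gives $1\le q=O(1/|J|)$ with $\Delta(\mu^q,1)<|J|$, \Cref{lem::baker-distance-from-alpha-n-to-b} bounds this distance from below polynomially in $q$, and you then walk from $\mu^N$ in steps of $\mu^q$ for at most $\lceil 2\pi/s\rceil$ steps. The paper applies pigeonhole to the translates $J,\mu J,\ldots,\mu^{l}J$ rather than to a fixed partition of $\torus$, which makes no difference.

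The one slip is your sample constant $C_4=C_0+5$. It does not follow from your displayed bound $(2\pi(K+1)^{C_0}+1)K$, since $K+1$ can be close to $2L$. For example, with $L$ slightly above $2$ (so $K=3$) and any $C_0\ge 1$, the bound exceeds $L^{C_0+5}$. Using $K+1\le 2L\le L^2$ and $K\le L^2$ instead gives $n-N<(2\pi+1)L^{2C_0+2}\le L^{2C_0+5}$. The paper avoids this issue by bounding $\max\{2,k\}\le 4\pi/|J|$ directly.
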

\begin{proof}
	Let $l=
	\big\lfloor \frac{2\pi}{|J|}
	\big\rfloor $ and
	consider the intervals $\{J, \ldots, \mu^{l}J\}$ on $\torus$.
	As $(l+1) |J| > 2\pi$, there exist $0 \le m < s \le l$ such that $\mu^mJ$ intersects $\mu^sJ$.
	Let $k = s - m$ and $\kappa = |\Log(\mu^k)|$.
	We have that $0 \le k \le l$ and $\kappa < |J|$.
	Since $\mu$ is not a root of unity, $\mu^m \ne \mu^s$ and hence $k \ge 1$ and $\kappa > 0$.
	We next compute a lower bound on $\kappa$.
	Applying \Cref{lem::baker-distance-from-alpha-n-to-b},
	\[
	\kappa = \Delta(\mu^k,1) > (\max\{2, k\})^{-C_0}
	\]
	for a (computable) constant $C_0 > 0$.
	Since $k \le l < 4\pi/|J|$ and $2 \le 4\pi/|J|$, we have that
	\[
	\kappa > (4\pi /|J|)^{-C_0}.
	\]
	Let $L = \lceil 2\pi/\kappa\rceil$.
	By the lower bound on $\kappa$ above, $L < (4\pi /|J|)^{C}$ for a constant $C > 0$.
	
	Consider the sequence $\seq{z_n}$ of points on $\torus$ that is defined by $z_n =\mu^{N+kn}$.
	We have that $z_{n+1} = \mu^kz_n$ and hence $|z_{n+1}-z_n| < |J|$ for all $n$.
	Moreover, the finite sequence $(z_0, \ldots, z_L)$ winds around $\torus$ at least once.
	Hence there exists
	\[
	0 \le r < L
	\]
	such that $z_{r} \in J$.
	That is, $\mu^{n} \in J$ for $n = N + kr$.
	It remains to observe that $N \le N + kr < N + kL$, and recall the bounds on $k$ and~$L$.
\end{proof}

Recall that our interval theory applies to $\seq{v_n}$, and we translate results to $\seq{u_n}$ using \Cref{thm::from-un-to-vn,thm::vn-close-to-un}.
We next prove a modification of \Cref{thm::main-DA} where we replace $u_n$ with~$v_n$.
Let $\seq{p'_n}$ be the ordering of $\{v_n \ge 0 \colon n \in \nat\}$.

\begin{lemma}
	\label{thm::main-DA-vn-version}
	Let $\ell \ge 1$ and $1 < \gamma_1 < \delta_1 < \cdots < \gamma_\ell < \delta_\ell < \zeta$.
	There exist infinitely many $n$ such that for all $1 \le j \le \ell$,
	\[
	\frac{p'_{n+j}}{p'_n} \in (\gamma_j, \delta_j).
	\]
\end{lemma}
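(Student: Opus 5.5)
We follow the proof of \Cref{thm::main-DA-function-version}, with one new ingredient. Since $\seq{p'_n}$ orders \emph{all} non-negative terms of $\seq{v_n}$, it is no longer enough to control the window $n \le m \le n+d_\ell$: we must also exclude every index $m$ outside the window with $v_m/v_n \in (1,\zeta)$. First apply \Cref{thm::stacking-intervals} to $\gamma_j,\delta_j$ to obtain $d_1<\cdots<d_\ell$, and set $I = \Jcal_{d_\ell}(\gamma_\ell,\delta_\ell) \subseteq \Ical$. Suppose $\xi\mu^n \in I$. Then property (a) and \eqref{eq::fallin-into_Jd} give $v_{n+d_j}/v_n \in (\gamma_j,\delta_j)$ for every $j$. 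Because $\delta_j < \gamma_{j+1}$, these $\ell$ terms are strictly increasing and lie in $(v_n, \zeta v_n)$. Property (b) shows that no other $m \in (n, n+d_\ell]$ has $v_m/v_n \in (1,\zeta)$. For $m<n$, \Cref{lem::I-is-largest} gives $v_m < v_n$, since $I \subseteq \Ical$. It therefore remains to choose $n$ so that also
\[
\xi\mu^n \notin \Jcal_d(1,\zeta) \quad\text{for all } d > d_\ell .
\]
Any such $n$ makes $v_n = p'_k$ for some $k$ with $p'_{k+j} = v_{n+d_j}$ for $1\le j\le\ell$, which is the claim.

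Two ingredients handle the tail $d>d_\ell$. By \Cref{thm::log-bound}, only $d < C_3\log(n+1)$ need to be avoided. By (c) of \Cref{thm::stacking-intervals}, $I$ is protected from time $d_\ell$ onwards. Hence the quantity
\[
c \coloneqq |I| - \sum_{d > d_\ell} |I \cap \Jcal_d(1,\zeta)|
\]
is positive and independent of $n$. Now fix a large $N$ and set $K = \lceil C_3 \log(2N) \rceil$. The set $I \setminus \bigcup_{d_\ell < d \le K} \Jcal_d(1,\zeta)$ has measure at least $c$. Since each removed set is an interval, this set is a union of at most $K+1$ intervals. So it contains an open interval $J$ with $|J| \ge c/(K+1)$.

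Next apply \Cref{thm::falling-into-an-interval} to the rotated interval $\xi^{-1}J$. It yields some $n$ with
\[
N \le n < N + \big(4\pi(K+1)/c\big)^{C_4}
\]
such that $\xi\mu^n \in J$. The upper bound is $N + O((\log N)^{C_4})$, so $n+1 \le 2N$ once $N$ is large. Then $C_3\log(n+1) \le K$, and by \Cref{thm::log-bound} the point $\xi\mu^n$ lies in no $\Jcal_d(1,\zeta)$ with $d>K$. By the choice of $J$ it also lies in none with $d_\ell < d \le K$. Letting $N\to\infty$ produces infinitely many such $n$, hence infinitely many indices $k$.

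The main obstacle is this bootstrapping between the horizon $K$ and the actual index $n$. The horizon must be fixed before $n$ is found, yet it has to dominate $C_3\log(n+1)$. This works only because \Cref{thm::falling-into-an-interval} gives a waiting time polynomial in $1/|J|$, which is polylogarithmic in $N$, while the number of components grows only linearly in $K$. I expect the bookkeeping here to be the delicate step: counting components, and checking that $J$ is open and nonempty so that the strict inequalities in \eqref{eq::fallin-into_Jd} are preserved.
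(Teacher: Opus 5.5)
Your proposal is correct and follows the paper's proof essentially step for step. It uses the same reduction to avoiding $\Jcal_d(1,\zeta)$ for $d>d_\ell$, the cutoff $d<C_3\log(n+1)$ from \Cref{thm::log-bound}, protection of $I$ via (c) of \Cref{thm::stacking-intervals}, the pigeonhole bound on a long component of the truncated set, and the polylogarithmic waiting time from \Cref{thm::falling-into-an-interval}; the only cosmetic differences are that you search forward from $N$ with horizon $\lceil C_3\log(2N)\rceil$ where the paper searches backward from $2N$, and that you make the rotation by $\xi^{-1}$ explicit.
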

\begin{proof}
	From all $\gamma_j$ and $\delta_j$ construct $d_1, \ldots, d_\ell \ge 1$ as in \Cref{thm::stacking-intervals}.
	Let $I = \Jcal_{d_\ell}(\gamma_\ell, \delta_\ell)$ and $\Dcal = \{0, d_1, \ldots, d_\ell\}$.
	Suppose $n \in \nat$ is such that $\xi \mu^n \in I$.
	Then we have the following.
	\begin{itemize}
		\item Because $I \subseteq \Ical$, $v_n > 0$ and $v_i < v_n$ for all $i < n$.
		\item For all $1 \le j \le \ell$, because $I \subseteq \Jcal_{d_j}(\gamma_j, \delta_j)$, we have that
		\[
		\frac{v_{n+d_j}}{v_n} \in (\gamma_j, \delta_j).
		\]
		\item Consider $1\le k \le d_\ell$ with $k \notin \Dcal$.
		Because $I \cap \Jcal_k(1, \zeta)$ is empty (by \Cref{thm::stacking-intervals} (b)) and $1 < \gamma_j < \delta_j < \zeta$ for all $j$, we have that either $v_{n+k} < v_{n+d}$ for all $d \in \Dcal$, or $v_{n+k} > v_{n+d}$ for all $d \in \Dcal$.
		\item
        Thus, if additionally $\xi \mu^n  \notin \Jcal_k(1,\zeta)$ for all $k > d_\ell$, then the pattern $(v_n, v_{n+d_1},\ldots, v_{n+d_\ell})$ appears in $(p'_m)_{m\in\nat}$.
		In particular, if $v_n = p'_m$, then $\gamma_j < p'_{m+j}/p'_m < \delta_j$ for all $1 \le j \le \ell$.
	\end{itemize}
	Hence it suffices to construct infinitely many $n \in \nat$ such that $\xi \mu^n \in I$, but $\xi \mu^n \notin \Jcal_k(1,\zeta)$ for all $k > d_\ell$.
	Let
	\[
	Y_m = I \setminus \bigcup_{k=d_\ell+1}^m \Jcal_k(1,\zeta).
	\]
	Then by \Cref{thm::log-bound} it suffices to construct infinitely many $n$ such that $\xi \mu^n \in Y_{\lfloor C_3 \log(n+1) \rfloor}$.

	Let $N \in \nat$ be large enough.
	We will construct $N \le n \le 2N$ with the desired property, for which $\xi \mu^n \in Y_{\lfloor C_3 \log(2N+1) \rfloor}$ is thus sufficient as we have that $Y_{m+1} \subseteq Y_m$ for all $m$ and $\lfloor C_3 \log(n+1) \rfloor \le \lfloor C_3 \log(2N+1) \rfloor$ when $n \le 2N$.
	By construction of $I$ (\Cref{thm::stacking-intervals}~(c)), there exists $\tau > 0$ such that $|Y_m| > \tau$ for all $m$.
    Note that $Y_m$ is an interval with $m-d_\ell$ intervals removed, and so $Y_m$ is the union of at most $m-d_\ell + 1\le m$ intervals, as $d_\ell \ge 1$.
	By the pigeonhole principle, each $Y_m$ contains thus an interval of size at least $\frac{\tau}{m}$.
    With $C_4$ as in \Cref{thm::falling-into-an-interval}, there is an $n$ such that $\xi \mu^n \in Y_{\lfloor C_3 \log(2N+1) \rfloor}$ and 
    \begin{equation*}
        2N - \left(\frac{4\pi C_3\log(2N+1)}{\tau}\right)^{C_4} \le n \le 2N.
    \end{equation*}
    For all large enough $N$, we have that $N \ge \left(\frac{4\pi C_3 \log(2N+1)}{\tau}\right)^{C_4}$, showing that $n \ge N$ for all large enough $N$. 
\end{proof}

\begin{proof}[Proof of \Cref{thm::main-DA}]
	Let $M$ be as in \Cref{thm::from-un-to-vn}.
	Suppose we are given $1 < \gamma_1 < \delta_1 \le \gamma_2 < \delta_2 \le \cdots \le  \gamma_\ell < \delta_\ell < \zeta$.
	First construct $\widetilde{\gamma}_j, \widetilde{\delta}_j$ satisfying $\gamma_j < \widetilde{\gamma}_j < \widetilde{\delta}_j < \delta_j$ for $1 \le j \le \ell$, and sufficiently small $\varepsilon > 0$ such that
	\begin{align*}
		(1+\varepsilon) \cdot \widetilde{\delta}_j \cdot \frac{1}{1-\varepsilon} &< \delta_j\\
		(1-\varepsilon) \cdot \widetilde{\gamma}_j \cdot \frac{1}{1+\varepsilon} &> \gamma_j.
	\end{align*}
	Apply \Cref{thm::vn-close-to-un} with $\varepsilon > 0$ to construct $M_\varepsilon$. Then let $\widetilde{M} = \max \{M, M_\varepsilon\}$.
	By \Cref{thm::main-DA-vn-version}, there exist infinitely many $n$ such that for all $j$
	\[
	\frac{p'_{n+j}}{p'_n} \in (\widetilde{\gamma}_j, \widetilde{\delta}_j).
	\]
	Therefore, there exist infinitely many $n_0, \ldots, n_\ell \ge\widetilde{M}$ such that $(v_{n_0}, \ldots, v_{n_\ell})$ appears in $\seq{p'_n}$ and $v_{n_j} >0$ for all $j$.
	By the application of \Cref{thm::from-un-to-vn}, we have that $(u_{n_0}, \ldots, u_{n_\ell})$ appears in $\seq{p_n}$, and $u_{n_j} > 0$ for all $j$.
	Consider such $n_0, \ldots, n_\ell$.

	By the application of \Cref{thm::vn-close-to-un}, for all $0 \le j \le\ell$ we have that $1 - \varepsilon < u_{n_j}/v_{n_j} < 1 + \varepsilon$.
	Therefore,
	\[
	\frac{u_{n_j}}{u_{n_0}} = \frac{u_{n_j}}{v_{n_j}} \cdot \frac{v_{n_j}}{v_{n_0}} \cdot \frac{v_{n_0}}{u_{n_0}} \in \left( \frac{(1-\varepsilon) \widetilde{\gamma}_j}{1+\varepsilon}, \frac{(1+\varepsilon)\widetilde{\delta}_j}{1-\varepsilon}\right).
	\]
	The latter interval is contained in $(\gamma_j, \delta_j)$ by the construction of $\varepsilon$, $\widetilde{\gamma}_j$, and $\widetilde{\delta}_j$.
\end{proof}

\section{Undecidability results for special functions and predicates}
\label{sec::special-functions}

Beyond the previously analysed predicates and functions of dynamical origin, our method for establishing undecidability extends to specific number-theoretic predicates and functions.
As stated in the Introduction, Presburger arithmetic expanded with multiplication and even squaring is undecidable.
Adding fragments of multiplication like relative primeness already leads to definability of multiplication itself and thus back to undecidability~\cite{bes2002survey}.
Bateman, Jockusch, and Woods~\cite{bateman1993decidability} encoded multiplication in Presburger arithmetic expanded with the set of prime numbers, assuming Dickson's conjecture (which we discuss at the end of \Cref{sec:square-free numbers}). 
However, \Cref{cor:Euler and sum of divs undecidable} uses an encoding of arbitrary permutations to achieve undecidability for two expansions of $\langle \nat; 0,1,< \rangle$ with a number-theoretic function, in which definability of multiplication (or, for that matter, addition) is unknown.
In this section, we show that this method applies to many other functions for which it is unclear whether multiplication is definable.

First, let us generalise the notion of a function realising arbitrary permutations.

\begin{definition}
    Let $f \colon \nat \to \nat$ and $\Mb = \langle\nat;0,1,<,f\rangle$.
    We say that $(f, \varphi)$ realises arbitrary permutations, where $\varphi$ is a $(q+1)$-ary predicate in the language of $\Mb$, if the following holds.
    For any permutation $\sigma \colon \{1,\ldots,k\} \to \{1,\ldots,k\}$ there exist $c_1 < \cdots < c_k$ and $\mathbf{x} \in \nat^q$ such that 
    \[
    f(c_{\sigma(1)}) < \cdots < f(c_{\sigma(k)})
    \]
    and for all $n \in [c_1, c_k]$, we have that $n = c_i$ for some $i$ if and only if $\varphi(\mathbf{x}, n)$ holds in $\Mb$.
\end{definition}

Intuitively, the predicate $\varphi$ (which does not depend on $\sigma$) performs a selection of the values of $f(n)$, which then realise the desired permutation.
The following is a straightforward generalisation of \Cref{thm::permutation-to-undec}.

\begin{lemma}\label{lem::permutation plus}
    Suppose $(f, \varphi)$ realises arbitrary permutations, where $\varphi$ is a $(q+1)$-ary predicate.
    Then the first-order theory of $\Mb \coloneqq \langle\nat;0,1,<,f\rangle$ is undecidable.
\end{lemma}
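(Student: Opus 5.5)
We will show that $\Mb$ simulates counter machines and then invoke \Cref{thm::how-to-prove-undec}, adapting the proof of \Cref{thm::permutation-to-undec}. Consecutive integers $c+1,\ldots,c+k$ are replaced by the ``selected'' positions $\{n \colon \varphi(\mathbf{x},n)\}$ inside a window. We take $l = q+3$ and $m=1$, with parameters $(\mathbf{x}, c, d, e)$. Let $S = \{n \in [c,e] \colon \Mb\models\varphi(\mathbf{x},n)\}$, enumerated increasingly as $s_1 < \cdots < s_r$, with $X = S\cap[c,d]$ and $Y = S \cap (d,e]$. We define $\mathsf{Rep}(\mathbf{x},c,d,e)$ to be the increasing enumeration $(y_1,\ldots,y_N)$ of $Y$, and
\[
\mathsf{Seq}(\mathbf{x},c,d,e)_i = \#\{n \in X \colon f(n) < f(y_i)\}.
\]
Degenerate parameter tuples, for instance those with $c>d$ or $d > e$, give the empty sequence.

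The formulas are obtained from those in the proof of \Cref{thm::permutation-to-undec} by relativising every quantifier over $(c,d]$ to $\varphi(\mathbf{x},\cdot)$, which is legitimate because $\varphi$ is a formula of $\Mb$. Concretely, $\frm{rep}(\mathbf{x},c,d,e,n)\coloneqq d<n\le e \land \varphi(\mathbf{x},n)$. The formulas $\frm{cnst}_k$, $\frm{inc}$ and $\frm{eq}$ are as before, with ``$m\in(c,d]$'' replaced by ``$c\le m\le d\land\varphi(\mathbf{x},m)$''. Since successive representatives are no longer adjacent integers, $\frm{succ}$ is taken from \Cref{sec::1st-structure}: $n_1<n_2$, both satisfy $\frm{rep}$, and no $m\in(n_1,n_2)$ satisfies $\frm{rep}$. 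Conditions (2)--(7) then hold by construction, exactly as before, and $\frm{cnst}_k$ is clearly computable from $k$.

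It remains to prove surjectivity of $\mathsf{Seq}$. Given $(t_i)_{i=1}^N$ over $\{0,\ldots,R\}$, choose the permutation $\sigma$ of $\{1,\ldots,R+N\}$ built in the proof of \Cref{thm::permutation-to-undec}, so that $\#\{1\le j\le R\colon \sigma^{-1}(j)<\sigma^{-1}(R+i)\}=t_i$. The hypothesis supplies $c_1<\cdots<c_{R+N}$ and $\mathbf{x}$ with $f(c_{\sigma(1)})<\cdots<f(c_{\sigma(R+N)})$, and with $\varphi(\mathbf{x},n)$ holding for $n\in[c_1,c_{R+N}]$ exactly when $n$ is some $c_i$. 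Set $c=c_1$, $d=c_R$ and $e=c_{R+N}$. Then $X=\{c_1,\ldots,c_R\}$ and $\mathsf{Rep}=(c_{R+1},\ldots,c_{R+N})$. Moreover, $f(c_j)<f(c_{R+i})$ holds if and only if $\sigma^{-1}(j)<\sigma^{-1}(R+i)$, so the $i$th term equals $t_i$. The empty sequence is covered by degenerate parameters.

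The main point needing care is that $\varphi$ is only guaranteed to select correctly inside $[c_1,c_k]$. This is why every set involved is intersected with the window $[c,e]$, with the endpoints $c=c_1$ and $e=c_k$ lying in the selected set. With that, all the definitions depend only on the correctly selected positions, and the argument goes through. The case $R=0$ is handled by enlarging the alphabet to $R\ge1$.
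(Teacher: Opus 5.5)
Your proposal is correct and follows the paper's proof essentially verbatim. It uses the same $\mathsf{Seq}$ and $\mathsf{Rep}$ with parameters $(\mathbf{x},c,d,e)$, the formulas of \Cref{thm::permutation-to-undec} relativised to $\varphi(\mathbf{x},\cdot)$, and a $\frm{succ}$ that forbids selected points strictly between $n_1$ and $n_2$; you also spell out the surjectivity argument (with $c=c_1$, $d=c_R$, $e=c_{R+N}$), which the paper leaves implicit. One small bookkeeping point: declaring tuples with $c>d$ to give the empty sequence conflicts with your $\frm{rep}$, which ignores $c$. You should either drop that convention, since the definitions already make sense for all tuples as in the paper, or add $c\le d$ to $\frm{rep}$.
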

\begin{proof}
    We can use the same encoding as in our proof of \Cref{thm::permutation-to-undec}, with $\mathsf{Seq} \colon \nat^{q+3} \to \nat^*$ and $\mathsf{Rep} \colon \nat^{q+3} \to \nat^*$.
    The only difference is that we have to keep track of which $n$ satisfy $\varphi(\mathbf{x}, n)$.
    For $\mathbf{x} \in \nat^q$ and $c,d,e \in \nat$, let $\mathsf{Rep}(\mathbf{x},c,d,e)$ be the enumeration of
    \[
    \{n \in \nat \colon d < n \le e \text{ and } \varphi(\mathbf{x},n) \textrm{ holds}\}
    \]
    and 
    \[
    \mathsf{Seq}(\mathbf{x},c,d,e)_i = \#\big\{m \in \nat \colon c \le m \le d,\ \varphi(\mathbf{x},m), \text{ and } f(m) < f\big(\mathsf{Rep}(\mathbf{x},c,d,e)_i\big)\big\}.
    \]
    We then define, following the approach of the proof of \Cref{thm::permutation-to-undec},
    \begin{align*}
        &\frm{rep}(\mathbf{x},c,d,e,n) \coloneqq d < n \le e \:\land\: \varphi(\mathbf{x},n)\\
        &\frm{cnst}_0(\mathbf{x},c,d,e,n) \coloneqq \frm{rep}(\mathbf{x},c,d,e,n) \:\land\: \lnot\exists m \in [c,d] \colon (\varphi(\mathbf{x},m) \land f(m) < f(n))\\
        &\frm{cnst}_k(\mathbf{x},c,d,e,n) \coloneqq \frm{rep}(\mathbf{x},c,d,e,n) \:\land\: \\
        &\quad \exists! \{m_1,\ldots,m_k\} \subseteq [c,d] \colon \bigwedge_{i=1}^k \big(\varphi(\mathbf{x},m_i) \land f(m_i) < f(n)\big)\\
        &\frm{succ}(\mathbf{x},c,d,e,n_1,n_2) \coloneqq \frm{rep}(\mathbf{x},c,d,e,n_1) \:\land\: \frm{rep}(\mathbf{x},c,d,e,n_2) \:\land\: \\
        &\quad n_1 < n_2 \:\land\: \forall m \in (n_1,n_2) \colon \lnot\varphi(\mathbf{x},m)\\
        &\frm{inc}(\mathbf{x},c,d,e,n_1,n_2) \coloneqq \frm{rep}(\mathbf{x},c,d,e,n_1) \:\land\: \frm{rep}(\mathbf{x},c,d,e,n_2) \:\land\: \\
        &\quad \exists! m \in [c,d] \colon (\varphi(\mathbf{x},m) \land f(m) \in [f(n_1), f(n_2)))\\
        &\frm{eq}(\mathbf{x},c,d,e,n_1,n_2) \coloneqq \frm{rep}(\mathbf{x},c,d,e,n_1) \:\land\: \frm{rep}(\mathbf{x},c,d,e,n_2) \:\land\: \\
        &\quad \lnot\exists m \in [c,d] \colon (\varphi(\mathbf{x},m) \land f(m) \in [f(n_1), f(n_2)) \cup [f(n_2), f(n_1)))
    \end{align*}
    where $k \in \nat_{\ge 1}$.
\end{proof}

\subsection{The least prime factor function}\label{sec:least prime factor}

Few properties are more integral to integers than their prime factorisation.
However, we will show that even with limited access, we can already obtain undecidability. 
Let $\lpf{n}$ denote the least prime factor of $n$. 
Thus, $\lpf{10} = 2$ and $\lpf{49} = 7$.
\begin{theorem}
    \label{thm::lpf-main}
    Let $q=1$ and $\varphi(x, n) = \lpf{n} \ge x$.
    Then $(\operatorname{lpf}, \varphi)$ realises arbitrary permutations and hence the first-order theory of $\langle\nat;0,1,<,\operatorname{lpf}\rangle$ is undecidable.
\end{theorem}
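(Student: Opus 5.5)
We must show that for every permutation $\sigma$ of $\{1,\ldots,k\}$ there are $x$ and $c_1<\cdots<c_k$ such that $\operatorname{lpf}(c_{\sigma(1)})<\cdots<\operatorname{lpf}(c_{\sigma(k)})$. Moreover, the $c_i$ must be exactly the integers $n\in[c_1,c_k]$ with $\operatorname{lpf}(n)\ge x$. Undecidability then follows from \Cref{lem::permutation plus}.

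The plan is a Chinese Remainder Theorem construction. Fix $\sigma$ and choose primes $x\le q_1<q_2<\cdots<q_k$, all larger than some bound $B$ specified below. We want $c_i$ with $\operatorname{lpf}(c_i)=q_{\sigma^{-1}(i)}$, which gives $\operatorname{lpf}(c_{\sigma(j)})=q_j$ increasing in $j$ as required. We also want every other integer in $[c_1,c_k]$ to have a prime factor $<x$.

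Concretely, take $c_i=c+i$ for $1\le i\le k$, so the window is just $k$ consecutive integers and there are no intermediate integers to kill. We only need $\operatorname{lpf}(c+i)=q_{\sigma^{-1}(i)}$ for each $i$. For this we impose two families of congruences:
\begin{itemize}
\item $c+i\equiv 0 \pmod{q_{\sigma^{-1}(i)}}$ for each $i$;
\item $c+i\not\equiv 0\pmod p$ for every prime $p<q_{\sigma^{-1}(i)}$.
\end{itemize}
Take $x=q_1$. Then $\varphi(x,n)$ holds for all $n\in[c+1,c+k]$, which is consistent with the requirement.

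The second family is handled by CRT. Let $P$ be the product of all primes $p<q_k$ not among the $q_j$. For each such $p$ we need a residue $c \bmod p$ avoiding $-i \pmod p$ for the relevant $i$, and this is possible whenever $p>k$. Since all $q_j$ exceed $k$, the primes $p\le k$ need care: choose the $q_j$ to be the $k$ consecutive primes just above a large bound, and make every prime $p\le k$ avoid dividing all of $c+1,\ldots,c+k$. That is impossible for $p=2$ once $k\ge2$.

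This is the main obstacle. Small primes necessarily divide some of the consecutive integers, so $c_i=c+i$ cannot work.

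The fix is to spread the $c_i$ out and let the in-between integers be killed by small primes. Choose $x$ to be a large prime and $q_1<\cdots<q_k$ primes with $q_1\ge x$. Let $L$ be the product of all primes $p<x$; we want every $n\in[c_1,c_k]\setminus\{c_i\}$ to be divisible by some prime $<x$. Take $c_i=c+iL$, with $c$ chosen by CRT as follows:
\begin{itemize}
\item $c$ is coprime to $L$ and $\equiv 1\pmod{p}$ for every $p<x$, so each $c_i\equiv 1 \pmod p$ for all $p<x$;
\item $c+iL\equiv 0\pmod{q_{\sigma^{-1}(i)}}$;
\item $c+iL\not\equiv 0 \pmod p$ for all primes $x\le p<q_{\sigma^{-1}(i)}$ other than the $q$'s, which is possible once $x>k$ since each such $p$ only needs to avoid $k$ residues.
\end{itemize}

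The intermediate integers $n$ must still be covered by primes $<x$. Those $n$ congruent to $c\pmod L$ are exactly the $c_i$. Any other $n$ in the window satisfies $n\not\equiv c\equiv 1\pmod p$ for some $p<x$, but it need not be $\equiv 0$.

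To guarantee divisibility, I would instead choose $c\equiv 0$ modulo all but a carefully chosen system of primes, in the style of the Jacobsthal function: the gaps $L$ between consecutive $c_i$ must be covered. With spacing $g$, it suffices to cover residues $1,\ldots,g-1$ relative to each $c_i$ using primes $<x$. By Jacobsthal/CRT sieving, taking $x$ large relative to $g$ achieves this, with $g$ chosen as a fixed multiple so the pattern repeats. Getting this covering step consistent with the lpf congruences is the delicate point, and it is where I would spend the effort.
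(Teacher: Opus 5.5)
\textbf{There is a genuine gap.} You stop at exactly the step that carries the whole theorem: arranging that every integer in $[c_1,c_k]$ other than the $c_i$ has a prime factor below $x$, while the $c_i$ have none. You defer this (``this is where I would spend the effort''). Worse, the concrete version you write down is impossible, not merely incomplete. With $c_i=c+iL$ and $L=\prod_{p<x}p$, the $L-1$ integers strictly between $c_i$ and $c_{i+1}$ run through every residue class modulo $L$ except that of $c$. So $\phi(L)-1\ge 1$ of them are coprime to $L$, i.e.\ satisfy $\lpf{n}\ge x$, whatever $c$ is.

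The same count rules out any covering whose primes all divide the spacing $g$, which is what ``so the pattern repeats'' suggests. It forces $g\le 2$: $g=1$ is the consecutive case you already saw fail at $p=2$, and $g=2$ fails at $p=3$ once $k\ge 3$, since $c+2,c+4,c+6$ exhaust the residues modulo $3$. So as written there is no proof that $(\operatorname{lpf},\varphi)$ realises arbitrary permutations.

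\textbf{The missing idea} in the paper is that the covering comes for free if you pick the $c_i$ among integers coprime to a primorial instead of building it by hand. Let $m+m_1<\cdots<m+m_k$ be the $k$ smallest integers coprime to $P_n=p_1\cdots p_n$ in $[m,m+p_{n+1})$. Then everything strictly between them is divisible by some $p\le p_n$. Such a window exists because Mertens' theorem gives $p_{n+1}\phi(P_n)/P_n\to\infty$, so by averaging some window of length $p_{n+1}$ contains at least $k$ reduced residues. One CRT step, $M\equiv m\pmod{P_n}$ and $M+m_{\sigma(i)}\equiv 0\pmod{p_{n+i}}$, preserves this pattern. Since the window is shorter than $p_{n+1}$, each $p_{n+i}$ divides only $u_{\sigma(i)}=M+m_{\sigma(i)}$. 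Hence $\lpf{u_{\sigma(i)}}=p_{n+i}$, and one takes $x=p_{n+1}$.

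\textbf{Alternatively, your spacing idea can be repaired by pure CRT:}
\begin{itemize}
\item Take $g$ divisible by every prime $\le k$, and set $c\equiv 1$ modulo those primes.
\item Let every other prime up to $(k-1)g$ avoid the $k$ residues $-ig$.
\item Give each of the $(k-1)(g-1)$ intermediate positions its own prime, using the first $(k-1)(g-1)$ primes above $(k-1)g$. Such a prime exceeds the window length, so it kills its position without dividing any $c_i$.
\item Let $x$ be the next prime and $q_1=x<q_2<\cdots<q_k$ be consecutive primes.
\end{itemize}
Either way, some explicit mechanism forcing the intermediate integers to have small prime factors is required, and your proposal does not supply one.
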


Let $p_n$ denote the $n$th prime number and let $P_n$ be the product of the first $n$ primes.
Thus, $p_2 = 3$ and $P_n = \prod_{i=1}^n p_i$.
To prove \Cref{thm::lpf-main}, we will need the following version of Mertens' third theorem (see, e.g.,~\cite[Theorem 429]{hardy1979introduction}).
\begin{theorem}\label{thm:Mertens}
    Let $\gamma = 0.57721\ldots$ be the Euler–Mascheroni constant.
    Then,
    \begin{equation*}
        \lim_{n \to \infty} \log(p_n)\prod_{i=1}^n\left(1-\frac{1}{p_i}\right) = \exp(-\gamma).
    \end{equation*}
\end{theorem}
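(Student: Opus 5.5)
We will prove the statement in the classical way: first obtain Mertens' first and second theorems with an unspecified constant, and then identify that constant through the behaviour of $\zeta(s)$ near $s=1$. Since $\log(p_n)\prod_{i\le n}(1-1/p_i)$ is $\prod_{p\le x}(1-1/p)\log x$ evaluated at $x=p_n$, it suffices to show
\[
\lim_{x\to\infty}\ \log x\prod_{p\le x}\Bigl(1-\frac1p\Bigr)=e^{-\gamma}.
\]

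\textbf{Step 1 (Mertens' first theorem).} Write $\log(\lfloor x\rfloor!)=\sum_{p\le x}\log p\sum_{k\ge1}\lfloor x/p^k\rfloor$. Stirling's formula gives the left side as $x\log x-x+O(\log x)$. On the right, the terms with $k\ge2$ contribute $O(x)$, and replacing $\lfloor x/p\rfloor$ by $x/p$ costs $O(\theta(x))=O(x)$ by Chebyshev's bound. Dividing by $x$ yields
\[
\sum_{p\le x}\frac{\log p}{p}=\log x+O(1).
\]

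\textbf{Step 2 (Mertens' second theorem).} Abel summation against $1/\log t$ turns Step~1 into
\[
\sum_{p\le x}\frac1p=\log\log x+B+O\Bigl(\frac1{\log x}\Bigr)
\]
for some constant $B$. Next, expand $\log(1-1/p)=-1/p-\sum_{k\ge2}1/(kp^k)$. The series $\sum_p\sum_{k\ge2}1/(kp^k)$ converges absolutely, and its tail beyond $x$ is $O(1/x)$. Hence
\[
\log\Bigl(\log x\prod_{p\le x}(1-1/p)\Bigr)\longrightarrow -B-H,\qquad H\coloneqq\sum_p\Bigl(-\log(1-1/p)-\frac1p\Bigr).
\]
So the limit exists, and everything reduces to proving $B+H=\gamma$.

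\textbf{Step 3 (identifying the constant; the main obstacle).} The plan is to compute $\log\zeta(1+\delta)$ in two ways as $\delta\to0^+$. By the Euler product,
\[
\log\zeta(1+\delta)=\sum_p p^{-1-\delta}+H+o(1),
\]
where continuity of the $k\ge2$ terms in $\delta$ is justified by dominated convergence. On the other hand, comparing $\zeta(1+\delta)$ with $\int_1^\infty t^{-1-\delta}\,dt$ gives $\zeta(1+\delta)=1/\delta+O(1)$, so $\log\zeta(1+\delta)=\log(1/\delta)+o(1)$. It remains to evaluate $\sum_p p^{-1-\delta}$. Write it as a Stieltjes integral $\int_{2^-}^\infty t^{-\delta}\,dS(t)$ with $S(t)=\sum_{p\le t}1/p=\log\log t+B+R(t)$, where $R(t)=O(1/\log t)$. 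The main term equals $B+\int_2^\infty t^{-\delta}\,\frac{dt}{t\log t}$. Substituting $u=\delta\log t$ turns this integral into $\int_{\delta\log2}^\infty e^{-u}\,du/u$, which equals $-\log\delta-\gamma+o(1)$ by the identity $\int_0^\infty e^{-u}\log u\,du=-\gamma$, used after one integration by parts. The error term is $\int t^{-\delta}\,dR(t)$; integrating by parts it becomes $\delta\int R(t)t^{-1-\delta}\,dt$ up to a boundary term that is $O(1)$ and independent of $\delta$ in the limit, so we need to verify carefully that it is $o(1)$. Substituting $u=\delta\log t$, the bound $R=O(1/\log t)$ gives $O(\delta\int_{\log 2}^\infty e^{-\delta s}\,ds/s)=O(\delta\log(1/\delta))\to0$. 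Equating the two evaluations gives
\[
\log(1/\delta)=\log(1/\delta)-\gamma+B+H+o(1),
\]
hence $B+H=\gamma$, and the limit is $e^{-\gamma}$ as claimed.

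The delicate part of Step~3 is the bookkeeping of the $o(1)$ terms: the Stieltjes integration by parts near $t=2$, and interchanging $\delta\to0$ with the prime sums. Steps~1 and~2 are routine.
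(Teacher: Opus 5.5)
The paper gives no proof of this statement. It just cites Hardy and Wright (Theorem 429), and your plan is the classical argument for it: Mertens' first and second theorems, then identifying the constant through $\log\zeta(1+\delta)$ and $\int_0^\infty e^{-u}\log u\,du=-\gamma$. Steps 1 and 2 are fine, as are your two expansions of $\log\zeta(1+\delta)$.

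The trouble is your evaluation of $\sum_p p^{-1-\delta}$. This is exactly where the value $e^{-\gamma}$ is decided, so additive constants there cannot be handled loosely.
\begin{itemize}
\item Your value for $\int_{\delta\log 2}^\infty e^{-u}\,du/u$ is wrong. It equals $-\log(\delta\log 2)-\gamma+o(1)$, not $-\log\delta-\gamma+o(1)$.
\item This is compensated only because your main term also drops a $\log\log 2$. If, like $S$, the function $\log\log t+B$ vanishes below $2$, the main term is $2^{-\delta}(B+\log\log 2)+\int_2^\infty t^{-\delta}\,dt/(t\log t)$, not $B+\int_2^\infty t^{-\delta}\,dt/(t\log t)$.
\item More seriously, you leave the boundary term from integrating $\int t^{-\delta}\,dR(t)$ by parts as an unspecified $O(1)$ quantity, and then bound only $\delta\int R(t)t^{-1-\delta}\,dt$. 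Any nonzero limit of that boundary term would shift $B+H$ away from $\gamma$. It does vanish: integrate from $2^-$ with $R\equiv 0$ below $2$, or equivalently note that over $(2,\infty)$ the boundary term $-2^{-\delta}R(2)$ cancels the jump contribution $2^{-\delta}R(2)$. But this must be checked, not assumed.
\end{itemize}

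The cleanest repair is to integrate by parts before splitting $S$. From $p^{-\delta}=\delta\int_p^\infty t^{-1-\delta}\,dt$ and Tonelli,
\[
\sum_p p^{-1-\delta}=\delta\int_2^\infty S(t)\,t^{-1-\delta}\,dt,
\]
with no boundary terms at all. Now substitute $S=\log\log t+B+R$ and $u=\delta\log t$:
\begin{itemize}
\item the first piece is $\int_{\delta\log 2}^\infty e^{-u}\log(u/\delta)\,du=-\gamma-\log\delta+o(1)$;
\item the second piece is $2^{-\delta}B\to B$;
\item the third piece is $O(\delta\log(1/\delta))$ by your own estimate.
\end{itemize}
Hence $\sum_p p^{-1-\delta}=\log(1/\delta)+B-\gamma+o(1)$. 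Comparing with $\log\zeta(1+\delta)=\log(1/\delta)+o(1)$ gives $B+H=\gamma$, as you want.
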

\begin{proof}[Proof of \Cref{thm::lpf-main}]
    Let $k \ge 1$ and $\sigma\colon\{1,\ldots,k\} \to \{1,\ldots,k\}$ be a permutation. 
    We claim we can construct natural numbers $1 < u_1 < \cdots < u_k$ such that 
    \begin{enumerate}
        \item $\lpf{u_{\sigma(1)}} < \lpf{u_{\sigma(2)}} < \cdots <\lpf{u_{\sigma(k)}}$; and
        \item for all integers $u_1 \le u \le u_k$ that are not equal to any $u_i$, we have that $\lpf{u} < \lpf{u_{\sigma(1)}}$.
    \end{enumerate}

    Recall that $\phi$ denotes the Euler totient function.
    For $n \in \mathbb{N}$, exactly $\phi(P_n) = \prod_{i=1}^n (p_i-1)$ numbers among $0,\dots,P_n-1$ are coprime to $P_n$ and thus all of $p_1,\dots,p_n$. 
    By Mertens' third theorem (\Cref{thm:Mertens}) and using that $(p-1)/p = 1-1/p$, we have 
    \begin{equation*}
        \lim_{n\to\infty} \log(p_n)\phi(P_n)/P_n = \lim_{n\to\infty}\log(p_n)\prod_{i=1}^n(1 - 1/p_i) = \exp(-\gamma) > 0.
    \end{equation*}
    As $\lim_{n\to \infty} p_{n+1}/\log(p_n) = +\infty$ (because $p_{n+1} > p_n$ and $ \lim_{m\to\infty} m/\log(m) = +\infty$), we conclude that  
    \begin{equation*}
        \lim_{n\to\infty} p_{n+1} \phi(P_n)/P_n = +\infty.
    \end{equation*}
    
    That is, as $n$ goes to infinity, the expected number of numbers coprime to $P_n$ among $p_{n+1}$ consecutive numbers is unbounded.
    By the limit above and the pigeonhole principle, there are thus natural numbers $m$ and $n$ such that among $m,m+1,\ldots,m+p_{n+1}-1$, there are at least $k$ numbers coprime to $p_1,\dots,p_n$.
	Let $m+m_1 < \ldots < m+m_k$ be the $k$ smallest among them. 
    Using the Chinese Remainder Theorem, the system
    \begin{align*}
        M &\equiv m\pmod{P_n}\\
        M+m_{\sigma(i)} &\equiv 0\pmod{ p_{n+i}} \:\:\text{for $1 \le i \le k$}
    \end{align*} 
    has a solution $M$, which we may take to be greater than $1$. 
    
    We claim that $u_i = M+m_i$ for $1 \le i \le k$ satisfies the claim.
    Indeed, $u_1 < \cdots < u_k$ follows from $m_1<\dots<m_k$.
    As $u_1 \ge M > 1$, we have that $u_1 > 1$.
    For item 1, let $1 \le i \le k$. 
    We have that $p_{n+i}$ divides $u_{\sigma(i)}$ as $u_{\sigma(i)} = M+m_{\sigma(i)} \equiv 0 \pmod{p_{n+i}}$. 
    As $u_{\sigma(i)} \equiv m+m_{\sigma(i)} \pmod{P_n}$ and $P_n$ and $m+m_{\sigma(i)}$ are coprime, $\lpf{u_{\sigma(i)}} \ge p_{n+1}$. 
    If $1 \le j \le k$ satisfies $i\ne j$ and $p_{n+j}$ divides $u_{\sigma(i)}$, then as $p_{n+j}$ divides $u_{\sigma(j)}$ and $u_1 <\cdots < u_k$, we must have that $u_k - u_1 \ge |u_{\sigma(i)}-u_{\sigma(j)}| \ge p_{n+j}$. But this is absurd as by construction, $u_k - u_1 = m_k - m_1 < p_{n+1} \le p_{n+j}$.
    Thus, $p_{n+i}$ is a prime factor of $u_{\sigma(i)}$, and $u_{\sigma(i)}$ has no other prime factor that is at most $p_{n+k}$.
    Therefore, $\lpf{u_{\sigma(i)}} = p_{n+i}$ and so item 1 follows.
    For item 2, let $u_1 \le u \le u_k$ be an integer distinct from $u_1,\dots,u_k$ and set $t = u - (M- m)$. 
    Then $u \equiv t \pmod{P_n}$ and $m+m_1 < t < m+m_k$ while not equal to any $m+m_i$. 
    Hence, by construction, $t$, and thus $u$, is not coprime to $P_n$, having a divisor $p_\ell$ with $\ell \le n$.
    Thus, $\lpf{u} \le p_n < \lpf{u_{\sigma(1)}}$. This proves our claim.

    Now we apply \Cref{lem::permutation plus} with $q=1$ and $\varphi(\mathbf{x}, n) = \lpf{n} \ge x_1$. 
    Then, for $x_1 = \lpf{u_{\sigma(1)}}$, we have that our claim corresponds with the hypothesis of \Cref{lem::permutation plus}. 
\end{proof}

\subsection{Ramanujan tau function}
\label{sec::tau-undec}

Let
$
\Hcal = \{z \in \com \colon \Ima(z) > 0\}.
$
A \emph{modular form} of weight~$k$ is an analytic function $f \colon \Hcal \to \com$ satisfying the following conditions.
\begin{itemize}
	\item For any
	\[
	\Gamma =
	\begin{bmatrix}
		a & b\\
		c & d
	\end{bmatrix} \in \operatorname{SL}_2(\intg)
	\]
	and $z \in \Hcal$,
	\[
	f\bigg(
	\frac{az+b}{cz+d}
	\bigg)
	= (cz+d)^kf(z).
	\]
	The M\"obius transformations $z \mapsto \frac{az+b}{cz+d}$ for $a,b,c,d$ as above are automorphisms of $\Hcal$ and the matrices realising them are exactly the orientation-preserving change-of-basis matrices of the lattice $\intg + \intg z$; thus the condition above states that $f$ behaves well with respect to the symmetries of $\Hcal$.
	\item We can write
	\[
	f(z) = \sum_{n = 0}^\infty a_n q^n
	\]
	where $a_n \in \com$ for all $n$ and $q = e^{\im 2 \pi z}$.
	That is, the Fourier expansion of $f$ in terms of $q$ does not have any negative powers.
	In the most interesting cases, $a_n$ are often real numbers, and sometimes even integers.
\end{itemize}
We say that $f$ is a \emph{cusp modular form} if additionally $a_0 = 0$.
Modular forms (as well as their Fourier coefficients) play a fundamental role in contemporary mathematics, e.g.\ in the theory of elliptic curves and in studying the solutions of Diophantine equations, Wiles' proof of Fermat's Last Theorem~\cite{wiles1995modular} being a prominent example of both.
In this section, we focus on the modular form called the \emph{modular discriminant}, denoted~$\Delta$, and its Fourier coefficients, which are the values of the famous \emph{Ramanujan tau function}; however, our undecidability result can be easily generalised to (coefficients of) a large class of modular forms called \emph{primitive forms}.

The modular discriminant is a cusp modular form of weight~12 (no such forms of weight $2,\ldots,11$ exist) defined~by
\[
\Delta(z) = q \prod_{n=1}^\infty (1-q^n)^{24}
\]
where $q = e^{\im2\pi z}$.
The Ramanujan tau function returns the value of the $n$th Fourier coefficient of $\Delta$, i.e.,
\[
\Delta(z) = \sum_{n=0}^\infty\tau(n)q^n, \quad \tau(0) = 0, \quad \tau \colon \nat \to \intg.
\]
The tau function makes fascinating appearances in a diverse range of fields of mathematics; see, e.g.,~\cite{berndt2025sumssquarestaufunctionramanujans} for an exposition.
It is known to be non-zero infinitely often and multiplicative: $\tau(mn) = \tau(m)\tau(n)$ for any coprime $m,n$.
\emph{Lehmer's conjecture} states that $\tau(n) \ne 0$ for all $n \ge 1$; this has been empirically verified~\cite{zeng2015computation} at least for all $n \le 10^{20}$.
Given $n$ and its factorisation into prime factors in binary, $\tau(n)$ can be computed in polynomial time~\cite[Chapter~15]{couveignes2011computational}.
The following is a specialisation of the main result of~\cite{bilu2018random} concerning coefficients of primitive modular forms.
Write $f(n) = |\tau(n)|$.

\begin{theorem}
	\label{thm::bilu-luca}
	Let $\sigma \colon \{1, \ldots, k\} \to \{1,\ldots,k\}$ be a permutation.
	Assuming Lehmer's conjecture, there exist infinitely many $c \in\nat$ such that
	\[
	f(c+\sigma(1)) < \cdots < f(c+\sigma(k)).
	\]
\end{theorem}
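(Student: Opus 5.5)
We would derive this from the main theorem of Bilu, Deshouillers, Gun and Luca on the random ordering of Fourier coefficients of primitive forms~\cite{bilu2018random}, specialised to $\Delta$. That theorem says that, for a normalised cuspidal eigenform $f=\sum a(n)q^n$ with integer coefficients and non-vanishing coefficients, every ordering of $|a(c+1)|,\ldots,|a(c+k)|$ occurs for a set of $c$ of positive density. The only hypothesis that does not hold outright for $\Delta$ is the non-vanishing one, and Lehmer's conjecture supplies it.

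If we had to reprove the statement directly, we would copy the strategy of \Cref{thm::schinzel} and of \Cref{thm::lpf-main}. We use multiplicativity of $\tau$ and the Chinese Remainder Theorem to control each of $c+1,\ldots,c+k$ separately. Pick large distinct primes $q_1,\ldots,q_k$ and choose exponents $e_i$. Solve $c+i\equiv q_i^{e_i}\pmod{q_i^{e_i+1}}$, so that $c+i=q_i^{e_i}m_i$ with $\gcd(m_i,q_i)=1$. Then $f(c+i)=|\tau(q_i^{e_i})|\cdot|\tau(m_i)|$.

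The values $|\tau(p^e)|$ are governed by the Hecke recursion $\tau(p^{e+1})=\tau(p)\tau(p^e)-p^{11}\tau(p^{e-1})$, together with Deligne's bound and Sato--Tate equidistribution. By choosing the primes and exponents, these factors can be made to lie in prescribed widely separated scales, for instance $|\tau(q_{\sigma(j)}^{e})|\approx X^{j}$ for a large parameter $X$. It then remains to ensure that the cofactors $|\tau(m_i)|$ do not spoil the ordering. This is where Lehmer's conjecture enters: it guarantees $\tau(m_i)\neq0$. We would additionally need upper and lower bounds on $|\tau(m_i)|$ for $c$ in a positive-density set, which we would obtain by sieving the $m_i$ to have controlled factorisations.

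The main obstacle is the cofactor control, namely showing that $|\tau(m_i)|$ is neither too large nor too small relative to $X$ for many admissible $c$. Bilu et al.\ handle this with a probabilistic argument on the distribution of $\log|\tau(n)|/\log n$. For the purposes of the paper, therefore, we would simply cite their result: specialise it to $\Delta$, note that $f=|\tau|$ and that Lehmer's conjecture makes their non-vanishing assumption hold, and conclude that infinitely many, indeed a positive density of, such $c$ exist.
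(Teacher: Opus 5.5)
Your proposal is correct and matches the paper, which also gives no independent argument and presents the statement as a direct specialisation to $\Delta$ of the main theorem of Bilu, Deshouillers, Gun and Luca~\cite{bilu2018random}, with Lehmer's conjecture supplying the non-vanishing of $\tau$ (the cited result in fact only needs non-vanishing on an initial segment depending on $k$). Your direct CRT/Hecke sketch is not needed, and the statement only requires infinitely many $c$, so do not rely on the positive-density strengthening you attribute to the cited theorem without checking its exact formulation.
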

Note that the statement above is just the ``infinitely often'' strengthening of a function realising arbitrary permutations.
Applying \Cref{thm::permutation-to-undec} we obtain the following.
\begin{corollary}
    The first-order theory of $\langle \nat; 0,1,<,n\mapsto|\tau(n)|\rangle$ is undecidable assuming Lehmer's conjecture.
\end{corollary}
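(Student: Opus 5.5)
The plan is to deduce the corollary from \Cref{thm::permutation-to-undec}. That lemma shows that $\langle \nat; 0,1,<,f\rangle$ simulates counter machines, and hence has an undecidable first-order theory, whenever $f$ realises arbitrary permutations.

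So it suffices to show that $f(n) = |\tau(n)|$ realises arbitrary permutations under Lehmer's conjecture. First, $f$ is a well-defined function $\nat \to \nat$, since $\tau$ takes integer values and $f(0) = 0$. Next, fix a permutation $\sigma$ of $\{1,\ldots,k\}$. \Cref{thm::bilu-luca}, which assumes Lehmer's conjecture, gives infinitely many $c$, and in particular at least one, with
\[
f(c+\sigma(1)) < \cdots < f(c+\sigma(k)).
\]
This is exactly condition \eqref{eq::def-arb-perm-1} in the definition of realising arbitrary permutations. The equivalent formulation in terms of $\sigma^{-1}$ holds because the values $f(c+1),\ldots,f(c+k)$ are pairwise distinct and are strictly ordered according to $\sigma^{-1}$.

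Applying \Cref{thm::permutation-to-undec} with this $f$ then yields undecidability. The structure $\langle \nat; 0,1,<,n\mapsto|\tau(n)|\rangle$ is literally $\langle \nat;0,1,<,f\rangle$, so no translation of formulas is needed.

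The real difficulty lies in \Cref{thm::bilu-luca} itself, which is a deep Diophantine result of Bilu--Luca type and is taken as given here. The only other point to check is that the counting formulas in the proof of \Cref{thm::permutation-to-undec} use $f$ only through $<$ on its values. They do, so working with $|\tau|$ instead of $\tau$ causes no issue. Lehmer's conjecture enters only as a hypothesis of \Cref{thm::bilu-luca}, which is why the conclusion is conditional.
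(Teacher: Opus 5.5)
Your proposal is correct and follows the paper's argument. Under Lehmer's conjecture, \Cref{thm::bilu-luca} says exactly that $n \mapsto |\tau(n)|$ realises arbitrary permutations (indeed infinitely often), and the corollary then follows directly from \Cref{thm::permutation-to-undec}.
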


A careful examination of the proof of~\cite{bilu2018random} reveals that, even if the tau function were to have infinitely many zeros (which would settle Lehmer's conjecture as being false), then the non-zero values of $|\tau(n)|$ would still realise arbitrary permutations, in the following sense.

\begin{claim}
    Let $q = 0$ and $\varphi(\mathbf{x},n) \coloneqq |\tau(n)| \ne 0$.
    Then $(n \mapsto |\tau(n)|, \varphi)$ realises arbitrary permutations and hence the first-order theory of $\langle \nat; 0,1,<,n\mapsto|\tau(n)|\rangle$ is unconditionally undecidable.
\end{claim}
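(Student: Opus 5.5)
We have to show: for every permutation $\sigma$ of $\{1,\ldots,k\}$ there are $c_1<\cdots<c_k$ with $|\tau(c_{\sigma(1)})|<\cdots<|\tau(c_{\sigma(k)})|$, all $|\tau(c_i)|\neq 0$, and $\tau(n)=0$ for every other $n\in[c_1,c_k]$. Since $q=0$, the selecting predicate is simply ``$\tau(n)\ne0$''. Undecidability then follows from \Cref{lem::permutation plus}, because $\varphi$ is expressible in $\langle\nat;0,1,<,n\mapsto|\tau(n)|\rangle$ using the constant $0$.

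The approach is to rerun the argument of~\cite{bilu2018random} on a block of $k$ consecutive integers, with one change: we make no use of Lehmer's conjecture. The Bilu--Luca type argument works as follows. Fix $c$ in a suitable residue class and write $c+i=A_iB_i$, where $A_i$ is a controlled factor and $B_i$ is coprime to $A_i$. The factors $A_i$ are chosen via the Chinese Remainder Theorem so that $c+i\equiv p_i\pmod{p_i^2}$ for large distinct primes $p_i$. By multiplicativity, $\tau(c+i)=\tau(p_i)\tau(B_i)$. Deligne's bound together with the Sato--Tate equidistribution of $\tau(p)/(2p^{11/2})$ lets us pick the $p_i$ so that the values $|\tau(p_i)|$ are spread over scales separated by large factors. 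One then needs the cofactors $|\tau(B_i)|$ to be comparable to each other. This is where Lehmer's conjecture entered: it guaranteed that $\tau(B_i)\ne0$, so the ratios are finite and controllable. Without it, we instead record only the $n$ with $\tau(n)\neq0$.

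Concretely, I would proceed in two steps. First, I would prove that the rest of the window can be forced to vanish. Take a prime $\ell$ with $\tau(\ell)=0$ if one exists; otherwise we are already in Lehmer's setting. More robustly, use a fixed prime $\ell$ with $\tau(\ell^2)\ne0$ but $\tau(\ell)$ arbitrary. The cleanest route is as follows. If infinitely many $n$ satisfy $\tau(n)=0$, then by multiplicativity there is a prime $\ell$ with $\tau(\ell)=0$ (the minimal zero is prime, a classical fact of Lehmer). By the Chinese Remainder Theorem, we can then impose $c+j\equiv\ell_j\pmod{\ell_j^2}$ for distinct such primes $\ell_j$ at each unwanted position $j$ in an interval of length $L$. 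This kills $\tau$ exactly at those positions. At the wanted positions we impose the Bilu--Luca congruences. In the remaining case, $\tau$ has only finitely many zeros, and the argument of~\cite{bilu2018random} applies verbatim to a large enough $c$ beyond the last zero, with the whole window selected.

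Second, I would carry out the ordering step. Among the positions we keep, we need $|\tau(c+i)|$ to be ordered by $\sigma$, which means controlling the free cofactors $|\tau(B_i)|$. Following Bilu--Luca, choose $c$ so that each $B_i$ is (up to bounded factors) a prime $q_i$ or $1$, using a sieve or Dickson-type lower-bound input as in their paper. Deligne's bound $|\tau(q)|\le 2q^{11/2}$, together with a lower bound on a positive-density set of primes, confines $|\tau(B_i)|$ to a window of bounded multiplicative width. Choosing the scales of the $|\tau(p_i)|$ to differ by more than this width forces the ordering. The zero-forcing congruences only shrink the residue class, so the sieve input still applies.

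The main obstacle is this last step: verifying that the analytic sieve in~\cite{bilu2018random} tolerates the extra congruence conditions at the zero positions. I would check that these add only a fixed modulus and do not disturb the density estimates.
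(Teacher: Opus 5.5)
There is a genuine gap. Your top-level plan, rerunning the argument of~\cite{bilu2018random} and letting $\varphi$ discard the zeros, is the route the paper alludes to; the paper omits the details. Two concrete steps of your plan fail, however.

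\textbf{Zero positions.} If Lehmer's conjecture fails, you only know that the least zero of $\tau$ is a prime $\ell$. Nothing gives you distinct primes $\ell_j$ with $\tau(\ell_j)=0$, one per unwanted position, so your CRT step has no input.

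More importantly, zeros do not need to be manufactured: some are forced, and your plan never accounts for them. From $\tau(\ell^{a+1})=-\ell^{11}\tau(\ell^{a-1})$ one gets $\tau(\ell^a)=0$ exactly for odd $a$. Any $2\ell$ consecutive integers contain a multiple of $\ell$ with $\ell$-adic valuation $1$. So for $k\ge 2\ell$, every window of $k$ consecutive integers contains a zero whatever $c$ is, and the kept positions can never be $k$ consecutive integers. This is exactly the obstruction that a Lehmer-type hypothesis removes in the setting of~\cite{bilu2018random}.

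The correct modification sits in the CRT step that fixes the small-prime part of each position. At the positions you keep, choose valuations $e$ with $\tau(p^e)\ne 0$, e.g.\ even valuation at $\ell$. Positions with odd valuation at a small zero-prime are then zeros automatically, and a single $\ell$ handles all of them. The indices $c_1<\dots<c_k$ are simply the remaining positions; nothing else needs to be killed.

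\textbf{Cofactors.} Choosing $c$ so that every $B_i$ is, up to bounded factors, a prime is a prime $k$-tuples statement. Invoking Dickson-type input would make the result conditional, contradicting ``unconditionally''. The theorem of~\cite{bilu2018random} assumes no such conjecture, so this cannot be their mechanism.

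A sieve only gives cofactors whose prime factors are all large. Knowing that $|\tau(q)|\ge \eta q^{11/2}$ on a positive-density set of primes says nothing about the specific prime factors of your $B_i$. The complementary set $\{q : |\tau(q)|<\eta q^{11/2}\}$ also has positive Sato--Tate density, and it contains every large $q$ with $\tau(q)=0$.

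So you are missing the argument that, for the $c$ you actually produce, the prime factors of every $B_i$ avoid this bad set. Without it you get neither the bounded window for $|\tau(B_i)|/B_i^{11/2}$ nor $\tau\neq 0$ at the kept positions. Also, the scale separation must be imposed on the normalised values $|\tau(p_i)|/p_i^{11/2}$, not on the raw sizes of $|\tau(p_i)|$, since the latter are compensated by the sizes of the $B_i$.

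The obstacle you flag, extra congruences adding a fixed modulus to the sieve, is harmless by comparison.
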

For reasons of space, we omit the proof, as it would require tracing the whole of~\cite{bilu2018random}.

\subsection{Square-free numbers}\label{sec:square-free numbers}
We next showcase another method of encoding counter machines in our framework. 
Let $r \ge 2$. A number $n \in \mathbb{N}$ is \emph{$r$-power free} if no prime power $p^r$ divides $n$. 
In the special case where $r = 2$, call $n$ \emph{square-free}.
Square-free numbers play a major part in number theory due to the M\"obius function $\mu$~\cite[Section 16.3]{hardy1979introduction}, which is defined by
\begin{equation*}
    \mu(n) = \begin{cases}
        1 & \text{if $n$ is square-free and has an even number of prime divisors}\\
        -1 & \text{if $n$ is square-free and has an odd number of prime divisors}\\
        0 & \text{if $n$ is not square-free.}
    \end{cases}
\end{equation*}
For $r \ge 2$, let $Q_r$ denote the set of $r$-power-free numbers. 
For $r = 2$, Bhardwaj and Tran~\cite{bhardwaj2021additive} explicitly showed that $\langle \nat; <, +, Q_r\rangle$ defines multiplication.
We give an alternative proof of this result to show how it fits into our framework. 
\begin{theorem}\label{thm::square free undecidable}
    For $r \ge 2$, the first-order theory of $\langle \nat; 0,1,<,+,Q_r\rangle$ is undecidable.
\end{theorem}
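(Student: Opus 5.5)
We will show that $\Mb \coloneqq \langle \nat; 0,1,<,+,Q_r\rangle$ simulates counter machines and then apply \Cref{thm::how-to-prove-undec}. The basic tool is the Chinese Remainder Theorem. By choosing residues modulo $p^r$ for a finite set of primes $p$, we can force prescribed numbers in a window $[x, x+L]$ to lie outside $Q_r$. For the remaining numbers we want the opposite: to lie in $Q_r$. Since $Q_r$ has positive density $1/\zeta(r)$, a sieve (or density) argument shows that, among the solutions of such a congruence system, some $x$ makes every \emph{unforced} position $r$-power free. Concretely, fix the first $t$ primes $q_1,\dots,q_t$ with $t$ large. Also choose distinct larger primes $p_{j}$, one for each position $j$ we want removed, and demand $x+j \equiv 0 \pmod{p_j^r}$. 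For the unforced positions $x+i$ we demand $x+i \not\equiv 0 \pmod{q^r}$ for each small $q$. The set of $x$ satisfying all these conditions is a union of residue classes of positive density. The known density of $r$-power-free numbers in arithmetic progressions, with the large primes handled by a tail bound $\sum_{p>q_t} L/p^r$, then shows that a positive proportion of such $x$ have all unforced positions in $Q_r$. The result is a \emph{realisation lemma}: every finite $0/1$ word $w$ occurs as the characteristic word of $Q_r$ on some interval $[x,x+|w|)$.

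Next we encode finite sequences over $\nat$ into $0/1$ words, in the style of \Cref{sec::1st-structure}, using $+$ to measure lengths. Take parameters $(c,e)$ describing a window. A maximal run of consecutive non-members of $Q_r$ of length exactly $2$ serves as a delimiter. Between delimiters, each block is a run of members of $Q_r$, and a term $t_i$ is encoded by a block of length $t_i+1$. One must be careful, since long runs of non-members never occur naturally: among any $2^r$ consecutive integers one is $r$-power free only up to residues, so we will simply forbid runs of non-members of length $\ge 2$ except the delimiters we insert, which the realisation lemma allows. We take $\mathsf{Rep}(c,e)$ to be the sequence of pairs $(a_i,b_i)$ of block endpoints inside $[c,e]$. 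Then $\frm{rep}$ states that $a_i-1,a_i-2\notin Q_r$ and $b_i+1,b_i+2\notin Q_r$, with $[a_i,b_i]\subseteq Q_r$ and $c\le a_i\le b_i\le e$. The formula $\frm{succ}$ states that $a_j=b_i+3$. Because addition is present, the term values are directly comparable: $\frm{cnst}_k$ is $b-a=k$, $\frm{eq}$ is $b_i-a_i=b_j-a_j$, and $\frm{inc}$ is $b_j-a_j=b_i-a_i+1$. Surjectivity of $\mathsf{Seq}$ is exactly the realisation lemma applied to the word $00\,1^{t_1+1}\,00\,1^{t_2+1}\,00\cdots 00$.

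We expect the main obstacle to be the realisation lemma, namely guaranteeing that the unforced positions really are $r$-power free while the forced ones are not. We will first try a direct counting argument. Let $K$ be the modulus $\prod_{q\le q_t} q^r\prod_j p_j^r$ and let $X$ be large. Count the $x\le X$ in the admissible residue class for which some unforced $x+i$ is divisible by $p^r$ for some prime $p>q_t$, $p\ne p_j$. This count is at most $L\sum_{p>q_t}(X/(Kp^{r}) \cdot K/\ldots + 1)$. Handling primes $p\le\sqrt[r]{X}$ requires the usual split: small $p$ are treated by CRT density, and large $p$ contribute $O(X^{1/r})$ multiples in total. This gives $o(X/K)$ bad $x$, provided $q_t$ is chosen with $L\sum_{p>q_t}p^{-r}<1/2$. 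The construction works for every $r\ge2$, so it does not rely on defining multiplication.
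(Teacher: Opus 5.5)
\textbf{There is a genuine gap: your realisation lemma is false, and your encoding can only represent bounded values.}

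\emph{Why the lemma fails.} Among any $2^r$ consecutive integers one is divisible by $2^r$. So the characteristic word of $Q_r$ never contains $1^{2^r}$. More generally, a $0/1$ word can occur only if, for every prime $p$, its $1$-positions miss some residue class modulo $p^r$. This is exactly where your sieve sketch breaks. If the unforced positions contain $2^r$ consecutive integers, the conditions $x+i\not\equiv 0 \pmod{2^r}$ have no common solution, so your ``union of residue classes of positive density'' is empty. You also have the asymmetry backwards: arbitrarily long runs of non-members are easy to force by CRT, while runs of members are bounded.

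\emph{Consequence.} You encode $t_i$ as a run of $t_i+1$ consecutive members of $Q_r$, so only $t_i\le 2^r-2$ can be realised. Hence $\mathsf{Seq}$ is not surjective, the counters of the simulated machine are bounded, and \Cref{thm::how-to-prove-undec} does not apply.

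\textbf{The repair is to swap the roles of members and non-members.} The paper's \Cref{lem::squarefree} does precisely this. It builds a window $[A,B]$ containing exactly $2N$ elements of $Q_r$, namely $a_1<b_1<\cdots<a_N<b_N$ with $b_i-a_i=2t_i+1$. Every other position is forced out of $Q_r$ by a distinct large prime power via CRT. The values thus live in the lengths of non-member gaps, which can be arbitrarily long. Two further ingredients are needed that your word-based scheme lacks.
\begin{enumerate}
\item \emph{Admissibility of the member positions.} The set of member positions must miss a residue class mod $p^r$ for every prime $p$ with $p^r\le 2N$. This is not automatic when the positions are dictated by the $t_i$. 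The paper secures it by keeping the gaps $a_{i+1}-b_i$ free, i.e.\ not part of the encoding. It chooses all $a_i$ congruent mod $4$ and all $b_i$ of opposite parity, so some class mod $2^r$ is missed. For each small odd $p$, it chooses the pairs to avoid a fixed residue class mod $p^r$. Then Reuss's theorem makes the designated positions simultaneously $r$-power free; your sieve would also work once admissibility holds.
\item \emph{A definable pairing of $a_i$ with $b_i$.} The paper uses the parity split together with $\chi(n)$, the least opposite-parity element of $Q_r$ above $n$. This gives $t_i=(\chi(a_i)-a_i-1)/2$.
\end{enumerate}
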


The following is a simplified version of Theorem 3 of Reuss~\cite{reuss2012pairs}, which is our randomness result that ``everything that can happen, will happen''.
\begin{theorem}\label{thm:Reuss}
    Let $r \ge 2$, $c, k \ge 1$, and $s_1,\dots,s_k$ be distinct integers.
    For a prime $p$, let 
    \begin{equation*}
        \rho(p) = \#\big\{0 \le x < p^r : cx+s_i \equiv 0 \pmod{p^r} \text{ for some } 1 \le i \le k\big\}.
    \end{equation*}
    If $\rho(p) = p^r$ for some prime $p$, there is no $n$ such that $cn+s_1,\dots,cn+s_k$ are simultaneously $r$-power free.
    Otherwise, there are infinitely many such $n$.
\end{theorem}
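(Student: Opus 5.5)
The plan is the classical sieve argument for $r$-power-free values of linear forms: show that the set of admissible $n$ has a positive natural density whenever no prime obstructs, and observe that an obstructing prime rules out every $n$.

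\emph{Necessity.} Suppose $\rho(p) = p^r$ for some prime $p$. Then every residue class $x \bmod p^r$ satisfies $cx+s_i \equiv 0 \pmod{p^r}$ for some $i$. Given any $n$, take $x \equiv n \pmod{p^r}$; then $p^r \mid cn+s_i$ for some $i$, so the numbers $cn+s_1,\dots,cn+s_k$ are never simultaneously $r$-power free.

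\emph{Sufficiency.} Now assume $\rho(p) < p^r$ for every prime $p$. For $X$ large let
\[
S(X) = \#\{n \le X \colon cn+s_1,\dots,cn+s_k \in Q_r\}.
\]
I would show $S(X) \ge \delta X$ for some $\delta>0$ and all large $X$, which gives infinitely many $n$. Fix a parameter $z$ and split the primes into small ones ($p \le z$) and large ones ($p > z$).

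\textbf{Small primes.} Let $S_z(X)$ count the $n \le X$ for which no $cn+s_i$ is divisible by $p^r$ with $p \le z$. This condition depends only on $n \bmod W$, where $W = \prod_{p\le z} p^r$. By the Chinese Remainder Theorem the number of admissible residue classes mod $W$ is $\prod_{p\le z}(p^r-\rho(p))$. Hence
\[
S_z(X) = X\prod_{p \le z}\Bigl(1-\frac{\rho(p)}{p^r}\Bigr) + O(W).
\]
Each factor in the product is positive by hypothesis. For $p \nmid c$, each of the $k$ congruences $cx \equiv -s_i \pmod{p^r}$ has exactly one solution, so $\rho(p) \le k$. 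Only finitely many primes divide $c$. Therefore the infinite product $\delta_\infty = \prod_p(1-\rho(p)/p^r)$ converges to a positive number, and $\delta_z = \prod_{p\le z}(1-\rho(p)/p^r) \ge \delta_\infty$.

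\textbf{Large primes.} These give the error term
\[
S_z(X) - S(X) \le \sum_{z < p}\ \sum_{i=1}^k \#\{n \le X \colon p^r \mid cn+s_i\}.
\]
Only primes with $p^r \le cX + \max_i |s_i|$ can contribute. For $p>z$ we also have $p \nmid c$ once $z > c$, so each inner count is at most $X/p^r + 1$. Summing over $i$ and $p$ gives the bound
\[
kX\sum_{p>z}p^{-r} + k\,\pi\bigl((cX+s)^{1/r}\bigr) \le \frac{kX}{z} + O\bigl(X^{1/2}\bigr),
\]
using $r \ge 2$.

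\textbf{Conclusion.} Choose $z$ (independently of $X$) so large that $k/z < \delta_\infty/2$. Then
\[
S(X) \ge \frac{\delta_\infty}{2}X - O(W) - O(X^{1/2}),
\]
which is positive for large $X$. This yields infinitely many valid $n$.

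The main obstacle is the large-prime tail. The bound above works because $r \ge 2$ makes both $\sum_{p} p^{-r}$ and the number of relevant primes (up to about $X^{1/r} \le X^{1/2}$) small. The one delicate point is the trivial "$+1$" in each count, which must be summed only over primes up to $(cX)^{1/r}$; this is exactly where the hypothesis $r\ge 2$ is needed.
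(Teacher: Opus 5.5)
Your proof is correct, and it takes a genuinely different route from the paper. The paper gives no argument at all: it imports the statement as a simplified, purely qualitative form of Theorem~3 of Reuss.

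You instead run the classical elementary sieve. You treat $p\le z$ exactly via the Chinese Remainder Theorem and $p>z$ by a crude union bound. The union bound is affordable precisely because $r\ge 2$ makes both $\sum_p p^{-r}$ and the number of ``$+1$'' terms (primes up to $X^{1/r}$) negligible. This is self-contained and gives more than infinitude. Combining $S(X)\le S_z(X)$ with your lower bound and letting $z\to\infty$ yields $S(X)/X\to\prod_p(1-\rho(p)/p^r)$.

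What a quantitative counting theorem like Reuss's adds is an explicit error term in this asymptotic. That requires much finer control of the large primes, where $X/p^r$ is small and the trivial ``$+1$'' dominates. Since \Cref{lem::squarefree} only uses the existence of infinitely many $n$, your argument would make that part of the paper independent of the citation.

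One small repair is needed. An $n$ with $cn+s_i=0$ is divisible by every $p^r$, so your restriction to primes with $p^r\le cX+\max_i|s_i|$ fails for it. Discard the at most $k$ such $n$, together with the finitely many $n$ for which some $cn+s_i<0$, at a cost of $O(1)$ in $S(X)$.
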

\begin{lemma}\label{lem::squarefree}
    Let $r \ge 2$, $N \ge 0$, and $(t_i)_{i=1}^N$ be a finite sequence of natural numbers.
    Then there are natural numbers $A$ and $B$ that satisfy the following:
    \begin{enumerate}
        \item there are exactly $2N$ $r$-power free numbers in $[A, B]$, which we label as $a_1<b_1<a_2<b_2<\cdots<a_N<b_N$;
        \item either all $a_i$ are even and all $b_i$ are odd, or all $b_i$ are even and all $a_i$ are odd;
        \item $b_i - a_i = 2t_i + 1$ for $1 \le i \le N$.
    \end{enumerate}
\end{lemma}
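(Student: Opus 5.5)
We will realise the required pattern as an arithmetic progression shifted into a window, using \Cref{thm:Reuss} for the $r$-power-free positions and the Chinese Remainder Theorem to kill every other position. First fix the offsets. Let $s_1 = 0$, $s_2 = s_1 + 2t_1+1$ (the position of $b_1$), then $s_3 = s_2 + 1$ (the position of $a_2$), $s_4 = s_3 + 2t_2 + 1$, and so on, alternating gaps $2t_i+1$ and $1$. Thus the offsets of the $a_i$ all have one parity and those of the $b_i$ the other, since every gap is odd. Let $L = s_{2N}$ be the window length. We look for $n$ such that, with $A = n$ and $B = n + L$, the numbers $n+s_j$ are $r$-power free and every other $n+s$ with $0 \le s \le L$, $s \notin \{s_j\}$, is not. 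Item~3 then holds by construction. Item~2 holds because consecutive $r$-power-free elements of the window alternate between $a$-type and $b$-type offsets, so their parities alternate in the fixed pattern determined by the parity of $n$. The case $N=0$ is trivial: take an interval of length zero inside a non-$r$-power-free number, e.g.\ $A = B = 2^r$.

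To kill the non-pattern positions, list them as $e_1,\dots,e_m$ and choose distinct primes $q_1,\dots,q_m$, all larger than $L+1$. By CRT we pick a residue $x_0$ modulo $c \coloneqq \prod_i q_i^r$ with $x_0 + e_i \equiv 0 \pmod{q_i^r}$ for each $i$. Then every $n \equiv x_0 \pmod c$ makes each $n+e_i$ divisible by $q_i^r$. Writing $n = cx + x_0$, the pattern entries become $cx + (x_0 + s_j)$. We apply \Cref{thm:Reuss} with this $c$ and the distinct shifts $x_0+s_j$, which gives infinitely many $x$ with all pattern entries $r$-power free, provided the local obstruction $\rho(p) = p^r$ never happens.

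The main obstacle is this local condition, so we check it prime by prime. For $p = q_i$, the congruence $cx + x_0 + s_j \equiv 0 \pmod{q_i^r}$ reads $x_0 + s_j \equiv 0$, i.e.\ $s_j \equiv e_i \pmod{q_i^r}$. This is impossible since $0 < |s_j - e_i| \le L < q_i$, so $\rho(q_i) = 0$. For every other prime $p$, $c$ is invertible modulo $p^r$, so each $j$ contributes exactly one residue $x$, and $\rho(p) \le 2N$. Hence $\rho(p) < p^r$ whenever $p^r > 2N$. The remaining small primes need care; to handle them we also make $c$ divisible by $p^r$ for every prime $p$ with $p^r \le 2N$ (or simply every $p \le L+1$). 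We include these primes in the CRT step, choosing $x_0$ modulo $p^r$ so that $x_0 + s_j \not\equiv 0 \pmod{p^r}$ for all $j$. This is possible because there are at most $2N < p^r$ forbidden residues once we take powers: if $p^r \le 2N$, replace $p^r$ by a large power $p^{R}$ in $c$ and ask $x_0 + s_j \not\equiv 0 \pmod{p^r}$. That is a condition on $x_0$ modulo $p^r$ avoiding $2N$ classes, and it is satisfiable because distinct $s_j$ in a short window hit each class mod $p^r$ at most about $(L+1)/p^r + 1$ times. If that counting fails, we instead choose the small primes large by spacing offsets. Then $\rho(p) = 0$ for these $p$ as well, and \Cref{thm:Reuss} yields the required $n$, and hence $A$ and $B$.
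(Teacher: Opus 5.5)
There is a genuine gap, located exactly at the step you call the main obstacle. You fix $a_{i+1}=b_i+1$, so your offsets $s_1,\dots,s_{2N}$ can cover every residue class modulo $p^r$ for some small prime $p$. In that case the desired configuration does not exist for that offset pattern. Concretely, for $r=2$, $N=2$, $t_1=t_2=0$ your offsets are $0,1,2,3$, and one of $n,n+1,n+2,n+3$ is always divisible by $4$. Likewise, $t_i=0$ for all $i$ with $2N\ge 9$ covers every class modulo $9$.

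No CRT bookkeeping can repair this. If $p\nmid c$, then $\rho(p)$ equals the number of classes of the $s_j$ modulo $p^r$, which is $p^r$. If $p^r\mid c$, then $\rho(p)\in\{0,p^r\}$, and $\rho(p)=0$ needs $x_0+s_j\not\equiv 0\pmod{p^r}$ for all $j$. That is impossible when the $s_j$ hit every class. Putting $p^R$ rather than $p^r$ into $c$ changes nothing, since the constraint is still modulo $p^r$. Your multiplicity count (each class hit at most about $(L+1)/p^r+1$ times) says nothing about whether some class is missed. Your closing remark about ``spacing offsets'' is precisely where the real work lies, and it is not carried out.

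The missing idea is that the lemma does not require $a_{i+1}=b_i+1$. The gaps between $b_i$ and $a_{i+1}$ are free parameters, because the positions in between get killed anyway. The paper uses this to choose admissible offsets $a'_1<b'_1<\cdots<a'_N<b'_N$ inductively:
\begin{itemize}
\item For each odd prime $p$ with $p^r\le 2m+2$, it fixes a class $d_p$ modulo $p^r$ not yet hit. It then uses the Chinese Remainder Theorem to place $a'_{m+1}>b'_m$ in a class $e_p$ with $e_p\not\equiv d_p$ and $e_p+2t_{m+1}+1\not\equiv d_p\pmod{p^r}$; only two classes are excluded.
\item For $p=2$, it keeps all $a'_i$ in one fixed class modulo $4$. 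Then all $b'_i$ have the opposite parity, and the class $a'_1+2$ modulo $4$ is never hit, so some class modulo $2^r$ is missed. This also yields item~2.
\end{itemize}
Once the offsets are admissible in this sense, the rest of your argument goes through essentially as in the paper. You kill each non-pattern position by a distinct $q_i^r$ with $q_i>L+1$, take $c=\prod_i q_i^r$, and apply \Cref{thm:Reuss}. There $\rho(q_i)=0$, and $\rho(p)<p^r$ for $p\nmid c$ by admissibility.
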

\begin{proof} 
For $N = 0$, take $A=B=2^r$.
So assume $N \ge 1$.
We work by induction on $m = 0,1,\dots,N$ to find $a_1'<b_1'<\cdots < a_m'<b_m'$ such that all $a_i'$ and $b_i'$ satisfy the hypothesis of \Cref{thm:Reuss} with $c=1$, $k = 2m$, and the $s_j$ equal to $a_1',b_1',\dots,a_m',b_m'$.
Moreover, $b_i' - a_i' = 2t_i + 1$ for $1 \le i \le m$, all $a_i'$ have the same value modulo $4$, and all $b_i'$ are of the opposite parity.
The case $m=0$ trivially holds and for $m=1$, for any choice of $a_1'$ and $b_1' = a_1'+2t_1+1$, we have that $\rho(p) \le 2 < p^r$ for any prime $p$.

Say $m \ge 1$ and that we found such $a_1'<b_1'<\cdots < a_m'<b_m'$. Then consider $m+1$.
If $k = 2m+2$, $c=1$, and $\rho(p) = p^r$ as in \Cref{thm:Reuss}, we must have that $p^r \le 2m+2$, which holds only for finitely many primes $p$. 
For each of these primes $p \ne 2$, choose a forbidden equivalence class $0 \le d_p < p^r$ not equal to any $a_i'\bmod p^r$ or $b_i' \bmod p^r$ with $i = 1,\dots,m$, which exists by the induction hypothesis.
There is a number $0 \le e_p < p^r$ such that $e_p \not\equiv d_p \not\equiv e_p+(2t_{m+1}+1) \pmod{p^r}$, as we have $p^r \ge 4$ choices for $e_p$ and the equivalence condition is violated for at most two of those.
Using the Chinese Remainder Theorem, take $a_{m+1}'$ such that $a_{m+1}' > b_m'$ and $a_{m+1}' \equiv e_p \pmod{p^r}$ for these finitely many odd prime powers $p^r$ and that $a_{m+1}' \equiv a_m' \pmod{4}$.
Further, let $b_{m+1}' = a_{m+1}'+2t_{m+1}+1$.
Then, $a'_{m+1}$ and $b'_{m+1}$ are not congruent to any $d_p$ modulo $p^r$ for every odd prime $p$ satisfying $p^r \le 2m+2$ and the same holds for all other $a_i'$ and $b_i'$.
Moreover, all $a_i'$ have the same value modulo $4$, while all $b_i' = a_i'+2t_i+1$ have the opposite parity, and so some class modulo $4$ (and thus modulo $2^r$) is not covered.
Hence, taking $k=2m+2$, $c=1$, and the $s_j$ equal to $a_1',b_1',\dots,a_{m+1}',b_{m+1}'$, $\rho(p) < p^r$ for all such primes. 
For all other primes $p$, $\rho(p) < p^r$ since $x \equiv -s_i \pmod{p^r}$ can hold for at most $2m+2 < p^r$ values of $0 \le x < p^r$. 
Thus, $\rho(p) < p^r$ for all primes $p$, and so the hypothesis of \Cref{thm:Reuss} is satisfied, completing the induction.

We now construct $a_1,\dots,a_N,b_1,\dots,b_N$ and $c$ as follows: we choose $c$ such that for any $a_1 \le d \le b_N$ not equal to any $a_i$ or $b_i$, there is a prime power $p^r$ dividing $c$ such that $p^r \mid d$ and $p^r \nmid a_i, b_i$ for $1 \le i \le N$.
So, for each $a_1' \le d' \le b_N'$ unequal to any $a_i'$ or $b_i'$, let $p_{d'}^r > b_{N}'-a_1'+1$ be a distinct prime power and let $c$ be the product of these prime powers. 
Then apply the Chinese Remainder Theorem to find a number $a_1$ such that $a_1 - a_1'+d' \equiv 0 \pmod{p_{d'}^r}$ for all such $d'$.
Set $\delta = a_1-a_1'$, $a_i = \delta+a_i'$ for $2 \le i \le N$ and $b_i = \delta+b_i'$ for $1 \le i \le N$.
Thus, $a_1 < b_1 < \cdots < a_N < b_N$ and $b_i - a_i = 2t_i + 1$ for $1 \le i \le N$.

We check that $\rho(p) < p^r$ for all primes $p$ for this choice of $c$, $k = 2N$, and the $s_j$ equal to $a_1,b_1,\dots,a_N,b_N$.
If $p \nmid c$, there is a number $0 \le x < p^r$ such that $x+a_i' \not\equiv 0 \not\equiv x+b_i' \pmod{p^r}$ for all $1 \le i \le N$ by the induction above.
As $c^{-1}$ modulo $p^r$ exists, take $y = c^{-1}(x-\delta)$: we have that 
\begin{equation*}
    cy+a_i\equiv x+a_i' \not\equiv 0 \not\equiv x+b_i'\equiv cy+b_i \pmod{p^r}
\end{equation*}
for all $1 \le i \le N$.
Thus, $\rho(p) < p^r$.
If $p \mid c$, then by the construction of $c$, we have that $p^r \mid c$ and $p^r > b_N'- a_1'= b_N-a_1$ and so there is at most one multiple of $p^r$ among $a_1,a_1+1,\dots,b_N$.
Since any $a_1 \le d \le b_N$ that is not equal to any $a_i$ or $b_i$ is a multiple of some $p^r$, no $a_i$ or $b_i$ is a multiple of such a $p^r$.
As $c$ is a multiple of these prime powers $p^r$, we thus have that $cx+a_i \not\equiv 0 \not\equiv cx+b_i$ for all $1 \le i \le N$ and $0 \le x < p^r$.
Thus, $\rho(p) = 0 < p^r$.

Thus, \Cref{thm:Reuss} implies that there are infinitely many $n$ such that $cn+a_1,cn+b_1,\dots,cn+a_N,cn+b_N$ are all $r$-power free.
Now for each such $n$, take $A = cn+a_1$ and $B = cn+b_N$.
Then we get $\widetilde{a}_i = cn+a_i=cn+\delta+a_i'$ and $\widetilde{b}_i = cn+b_i=cn+\delta+b_i'$ as in the lemma's statement.
Item 1 follows from $a_1'<b_1'<\cdots<a_N'<b_N'$ and the observation that for all $a_1 \le d \le b_N$ not equal to any $a_i$ or $b_i$, some prime power $p^r$ divides $cn+d$, giving that $cn+d$ is not $r$-power free. 
Item 2 follows from $a_1',\dots,a_N'$ having the same parity and $b_1',\dots,b_N'$ having the opposite parity, while item 3 follows from $\widetilde{b}_i-\widetilde{a}_i=b_i'-a_i'=2t_i+1$.
\end{proof}

\begin{proof}[Proof of \Cref{thm::square free undecidable}]
    We want to use a different type of encoding than before, with $l = 2$ and $m = 1$.
    For $c,d \in \nat$ we define
	\begin{equation*}
	    \mathsf{Rep}(c,d) = (c \le n \le d : Q_r(n) \: \land \: n \equiv c \bmod{2})_{i=1}^N \quad\text{and}\quad \mathsf{Seq}(c,d) = (t_i)_{i=1}^N.
	\end{equation*}
    To define $t_i$, let $n_i$ be the $i$th number in $\mathsf{Rep}(c,d)$ and set 
    \begin{align*}
        \chi(n) &= m \Longleftrightarrow m > n \:\land\: Q_r(m)\:\land\: m \not\equiv n \bmod{2} \:\land\\
        &\quad\forall m' : \big((n < m' < m \:\land\: m \equiv m' \bmod{2})\implies \lnot Q_r(m')\big).
    \end{align*}
    Then we have that
	\[
	t_i = \frac{1}{2}(\chi(n_i)- n_i-1).
	\]
    Let $(t_i)_{i=1}^N$ be given. If it is empty, take $c = d = 0$. Else, apply \Cref{lem::squarefree} to get $a_i$ and $b_i$ and set $c = a_1$ and $d = b_N$. 
    Then $\mathsf{Rep}(c,d) = (a_1,\dots,a_N)$ and $\chi(a_i) = b_i$, giving that indeed, $\mathsf{Seq}(c,d) = (t_i)_{i=1}^N$.

	Next, we define
	\begin{align*}
		&\frm{rep}(c,d,n) \coloneqq c \le n \le d \: \land \:Q_r(n) \: \land \: c \equiv n \bmod{2}\\
		&\frm{cnst}_k(c,d,n) \coloneqq  \frm{rep}(c,d,n) \:\land\: \chi(n) = n +2k+1 \\
		&\frm{succ}(c,d,n_1,n_2) \coloneqq \frm{rep}(c,d,n_1) \:\land\: \frm{rep}(c,d,n_2) \:\land\: n_1<n_2\:\land\: \forall n \in (n_1,n_2) \colon\lnot\frm{rep}(c,d,n) \\
		&\frm{inc}(c,d,n_1,n_2) \coloneqq \frm{rep}(c,d,n_1) \:\land\: \frm{rep}(c,d,n_2) \:\land\: \chi(n_1)+n_2+2=\chi(n_2)+n_1\\
		&\frm{eq}(c,d,n_1,n_2) \coloneqq \frm{rep}(c,d,n_1) \:\land\: \frm{rep}(c,d,n_2) \:\land\: \chi(n_1)+n_2=\chi(n_2)+n_1
	\end{align*}
	where $k \in \nat_{\ge 0}$.
    For $\frm{inc}$, we used that  $\chi(n_1)+n_2+2=\chi(n_2)+n_1$ if and only if $\chi(n_2) -n_2 = \chi(n_1) - n_1+ 2$ as we cannot use subtraction, and $\frm{eq}$ uses the same method.  
    One can easily verify that the conditions of \Cref{thm::how-to-prove-undec} are met, giving the theorem.
\end{proof}

Let us return to the result of Bhardwaj and Tran~\cite{bhardwaj2021additive}, who showed that multiplication is definable for $r = 2$. 
This result also follows from our framework.
Given $N \ge 1$, apply \Cref{lem::squarefree} to the sequence $t_i = 2i^2$, $1 \le i \le N$, to obtain $a_i$ and $b_i$ with $b_i - a_i = 2t_i + 1$. 
This sequence is determined by $t_1 = 2$, $t_2 = 8$, and the recurrence $t_{i+2} = 2t_{i+1} - t_i + 4$, all of which can be expressed in Presburger arithmetic. 
Hence $n$ is of the form $2x^2$ with $x \ge 1$ if and only if $n = (b_i - a_i - 1)/2$ for some $a_i$ and $b_i$ arising in this way. 
This defines the set of squares, and with it multiplication. 
Our counter machine proof does not encode multiplication explicitly, but this shows that our framework captures such undecidability arguments directly.

Bhardwaj and Tran adapted their argument from Bateman, Jockusch, and Woods~\cite{bateman1993decidability}, who showed that Presburger arithmetic expanded with the set of primes $P$ is undecidable, assuming Dickson's conjecture. 
Dickson's conjecture is the analogue of the full result of Reuss~\cite{reuss2012pairs} for primes (only its analogue of \Cref{thm:Reuss} is needed), and it implies well-known conjectures such as the twin prime conjecture. 
We can recover the result of Bateman, Jockusch, and Woods by adapting the proof of \Cref{thm::square free undecidable}: take $b_i - a_i = 4t_i + 2$, $a_i \equiv 1 \pmod 4$, and $b_i \equiv 3 \pmod 4$, and apply the Chinese remainder argument of \Cref{lem::squarefree} with $r = 1$.

Addition is essential for undecidability: in the same paper, Bateman, Jockusch, and Woods showed that the first-order theory of $\langle \nat; 0, 1, <, P\rangle$ is decidable, again assuming Dickson's conjecture. 
We expect the first-order theory of $\langle \nat; 0, 1, <, Q_r\rangle$ to be decidable for $r \ge 2$ as well.

\section{Discussion}
\label{sec::discussion}

Let us briefly discuss the broader implications of our results for the study of decidability of logical theories.
Firstly, we believe that our approach for proving undecidability via simulation of counter machines should work for many more special functions and predicates.
The biggest open problem in this direction is the following.
Write $P$ for the set of all primes.

\begin{problem}
    Is the first-order theory of $\langle\nat;0,1,<,+,P\rangle$ decidable?
\end{problem}
The aforementioned theory is known to be undecidable assuming Dickson's conjecture as it defines multiplication~\cite{bateman1993decidability}. 
By the same conjecture, it simulates counter machines, e.g.\ via a modification of our argument for square-free numbers (see~\Cref{sec:square-free numbers}).
Primes are believed to behave randomly in various specific senses~\cite{Tao2011}, and in our opinion a major open question is whether any known theorem about primes can be transformed into an unconditional proof that the aforementioned structure simulates counter machines.

In this work we focussed on integer LRS with two simple dominant roots that satisfy a non-degeneracy condition, as these were the first large class at the frontier of our understanding.
We believe that, generalising our approach from the one-dimensional torus $\torus$ to higher dimensions, at least for LRS with an irreducible characteristic polynomial, it should be possible to classify decidability of the first-order theories of the structures in \eqref{eq::structures}.
The boundary of decidability and undecidability, however, will not be as clean as it was with our result: with three or more dominant roots, there are examples where Sem\"enov's decidability results apply ``by accident''.

What is the difficulty with reducible LRS?
As an example, consider $u_n = (2 + \im)^n + (2-\im)^n$, which satisfies the recurrence relation $u_{n+2} = 4u_{n+1}-5u_n$.
Let $v_n = 5^n + u_n$, which satisfies $v_n > 0$ for all $n$ and has the characteristic polynomial $p(x) = (x-5)(x^2-4x+5)$ with the single, non-repeated dominant root $\lambda = 5$.
What can we say about the first-order theory of $\langle \nat; +, V\rangle$, where $V= \{v_n\colon n \in \nat\}$?
Define, in $\langle \nat; +, V\rangle$, the predicate $W \subseteq \nat$ by
\begin{multline*}
	x \in W \Leftrightarrow x \ge 0 \:\land\: \exists y_1, y_2 \in V \colon y_1 < y_2 \:\land\: x = y_2 - 5y_1 \:\land\: \forall y_3 \in (y_1, y_2) \colon y_3 \notin V.
\end{multline*}
Then, using that $\seq{v_n}$ is increasing, $W = \{w_n \colon n \in\nat\} \cap \nat$  for 
\[
w_n = u_{n+1} - 5u_n = (\im-3)(2+\im)^n + (-\im-3)(2-\im)^n
\]
which is a non-degenerate integer LRS with exactly two dominant roots.
Therefore, by \Cref{thm::undec-lrs}, the first-order theory of $\langle \nat; +, W\rangle$ is undecidable, which implies the same for $\langle \nat; +, V\rangle$.
More generally, an LRS with a reducible characteristic polynomial ``hides'' more than one LRS inside it: when these can be extracted and we have two or more predicates generated by LRS, undecidability becomes more likely, as illustrated in the work of Hieronymi and Schulz~\cite{hieronymi2022strong}.

We conclude with the following problem for future work.
\begin{problem}
    Let $\Mb = \langle \nat;0,1,<,+,U\rangle$ be any of the structures for which we proved undecidability via simulation of counter machines. 
    Does $\Mb$ define multiplication?
\end{problem}
The seminal negative result of Schulz for powers of two and three~\cite{schulz2023undefinability} shows us a model-theoretic way for approaching this problem.

\bibliography{refs}
\end{document}